    \let\Cref\crtCref
    \let\cref\crtcref
    \def\varepsilon{ε}
\theoremstyle{definition}
\newtheorem{definition}{Definition}
\theoremstyle{plain}
\newtheorem{theorem}{Theorem}[section]
\newtheorem{lemma}[theorem]{Lemma}
\newtheorem{proposition}[theorem]{Proposition}
\theoremstyle{definition}
\newtheorem{remark}{Remark}
\DeclareRobustCommand{\abbrevcrefs}{%
\crefname{proposition}{Prop.}{Props.}%
\crefname{lemma}{Lem.}{Lems.}%
\crefname{theorem}{Thm.}{Thms.}%
}
\DeclareRobustCommand{\cshref}[1]{{\abbrevcrefs\cref{#1}}}
\newcommand{\e}{\ensuremath{\varepsilon}}
\newcommand{\ch}{\textsc{Consensus-Halving}}
\newcommand{\ppa}{\textup{PPA}\xspace}
\newcommand{\eval}{\textsc{eval}\xspace}
\newcommand{\cut}{\textsc{cut}\xspace}
\newcommand{\lcut}{\ensuremath{x}\xspace}
\newcommand{\rcut}{\ensuremath{y}\xspace}
\newcommand{\tucker}{\textsc{Tucker}\xspace}
\newcommand{\bu}{\textsc{Borsuk-Ulam}\xspace}
\newcommand{\calI}{\mathcal{I}}
\newcommand{\lplus}{``$+$''\xspace}
\newcommand{\lminus}{``$-$''\xspace}
\title{Two's Company, Three's a Crowd: Consensus-Halving for a Constant Number of Agents\footnote{A preliminary version of this paper appeared in the Proceedings of the 22nd ACM Conference on Economics and Computation (EC '21).}}
\author{
\textbf{Argyrios Deligkas}\smallskip

\small{Royal Holloway University of London, United Kingdom}\smallskip

\href{mailto:Argyrios.Deligkas@rhul.ac.uk}{\small{\texttt{Argyrios.Deligkas@rhul.ac.uk}}}
\and 
\textbf{Aris Filos-Ratsikas}\smallskip 

\small{University of Edinburgh, United Kingdom}\smallskip

\href{mailto:Aris.Filos-Ratsikas@ed.ac.uk}{\small{\texttt{Aris.Filos-Ratsikas@ed.ac.uk}}}
\and 
\textbf{Alexandros Hollender}\smallskip 

\small{University of Oxford, United Kingdom}\smallskip

\href{mailto:Alexandros.Hollender@cs.ox.ac.uk}{\small{\texttt{Alexandros.Hollender@cs.ox.ac.uk}}}\\
}
\date{}
\begin{document}

\maketitle

\begin{abstract}
We consider the $\e$-\ch\ problem, in which a set of heterogeneous agents aim at dividing a continuous resource into two (not necessarily contiguous) portions that all of them simultaneously consider to be of approximately the same value (up to $\e$). This problem was recently shown to be \ppa-complete, for $n$ agents and $n$ cuts, even for very simple valuation functions.
In a quest to understand the root of the complexity of the problem, we consider the setting where there is only a \emph{constant} number of agents, and we consider both the \emph{computational complexity} and the \emph{query complexity} of the problem.

For agents with \emph{monotone} valuation functions, we show a dichotomy: for two agents the problem is polynomial-time solvable, whereas for three or more agents it becomes \ppa-complete. Similarly, we show that for two monotone agents the problem can be solved with polynomially-many queries, whereas for three or more agents, we provide exponential query complexity lower bounds. These results are enabled via an interesting connection to a \emph{monotone Borsuk-Ulam} problem, which may be of independent interest. For agents with general valuations, we show that the problem is \ppa-complete and admits exponential query complexity lower bounds, even for two agents.
\end{abstract}


\section{Introduction}

The topic of fair division, founded in the work of \citet{steinhaus1949division}, has been in the centre of the literature in economics, mathematics, and more recently computer science and artificial intelligence. Classic examples include the well-known fair cake cutting problem for the division of divisible resources (e.g., see \citep{gamow1958puzzle}, \citep{brams1996fair} or \citep{procaccia2013cake}), as well as the fair division of indivisible resources (e.g., see \citep{BouveretCM16-indivisible}). 
The earlier works in economics and mathematics were mainly concerned with the questions of whether fair solutions exist, and whether they can be found constructively, i.e., via finite-time protocols. In the more recent literature in computer science, a plethora of works have been concerned with the computational complexity of finding such solutions, either by providing polynomial-time algorithms, or by proving computational hardness results. Currently, it would be no exaggeration to say that fair division is one of the most vibrant and important topics in the intersection of these areas.  

Besides the classic fair division settings mentioned above, another well-known problem is the \emph{Consensus-Halving} problem, whose origins date back to the 1940s and the work of \citet{neyman1946theoreme}. In this problem, a set of $n$ agents with different and heterogeneous valuation functions over the unit interval $[0,1]$ aim at finding a partition of the interval into pieces labelled either \lplus or \lminus using at most $n$ cuts, such that the total value of every agent for the portion labelled \lplus and for the portion labelled \lminus is the same. Very much like other well-known problems in fair division, the existence of a solution can be proven via fixed-point theorems, in particular the Borsuk-Ulam theorem \citep{Borsuk1933}, and can also be seen as a generalisation of the Hobby-Rice Theorem \citep{hobby1965moment}.

The problem has applications in the context of the well-known \emph{Necklace Splitting problem} of \citet{Alon87-Necklace} and was studied in conjunction with this latter problem \citep{Goldberg1985,Alon1986}. Other applications were highlighted by \citet{SS03-Consensus}, who were the first to study the problem in isolation. For example, consider two families that are dividing a piece of land into two regions, such that every member of each family considers the split to be equal. Another example is the 1994 Convention of the Law of the Sea (see \citep{brams1996fair}), which regards the protection of developing countries in the event that an industrialised nation is planning to mine resources in international waters. In such cases, a representative of the developing nations reserves half of the seabed for future use by them, and a consensus-halving solution would correspond to a partition of the seabed into two portions that all the developing nations consider to be of equal value. 

\citet{SS03-Consensus} in fact studied the approximate version of Consensus-Halving, coined the $\e$-\ch\ problem, where the requirement is that the total value of every agent for the portion labelled \lplus and for the portion labelled \lminus is approximately the same, up to an additive parameter $\e$. For this version, \citet{SS03-Consensus} provided a constructive solution via an exponential-time algorithm. The $\e$-\ch\ problem received considerable attention in the literature of computer science over the past few years, as it was proven to be the first ``natural'' \ppa-complete problem \citep{FRG18-Consensus}, i.e., a problem that does not have a polynomial-sized circuit explicitly in its definition, answering a decade-old open question \citep{Papadimitriou94-TFNP-subclasses, Grigni2001}. 
Additionally, \citet{FRG18-Necklace}, reduced from this problem to establish the \ppa-completeness of Necklace Splitting with two thieves; these \ppa-completeness results provided the first definitive evidence of intractability for these two classic problems, establishing for instance that solving them is at least as hard as finding a Nash equilibrium of a strategic game \citep{Daskalakis2009,chen2009settling}.
\citet{FRHSZ2020consensus-easier} improved on the results for the $\e$-\ch\ problem, by showing that the problem remains \ppa-complete, even if one restricts the attention to very small classes of agents' valuations, namely piecewise uniform valuations with only two valuation blocks. Very recently, the $\e$-\ch\ problem was shown to be \ppa-complete even for constant $\e$, namely any $\e < 1/5$ \citep{DeligkasFHM2022-CH-constant-eps}.

This latter result falls under the general umbrella of imposing restrictions on the structure of the problem, to explore if the computational hardness persists or whether we can obtain polynomial-time algorithms. \citet{FRHSZ2020consensus-easier} applied this approach along the axis of the valuation functions, while considering a general number of agents, similarly to \citep{FRG18-Consensus,FRG18-Necklace}. In this paper, we take a different approach, and we restrict the number of agents to be \emph{constant}. This is in line with most of the theoretical work on fair division, which is also concerned with solutions for a small number of agents\footnote{For example, the existence of bounded protocols for cake-cutting for more than $3$ agents was not known until very recently \citep{aziz2016discrete,aziz2016discreteb}, but such protocols for $2$ and $3$ agents were known since the 1960s (see \citep{brams1996fair,robertson1998cake}).} and it is also quite relevant from a practical standpoint, as fair division among a few participants is quite common. We believe that such investigations are necessary in order to truly understand the computational complexity of the problem. To this end, we state our first main question:

\begin{quote}
    \emph{What is the computational complexity of $\e$-\ch\ for a constant number of agents?}
\end{quote}

\noindent Since the number of agents is now fixed, any type of computational hardness must originate from the structure of the valuation functions. We remark that the existence results for $\e$-\ch\ are fairly general, and in particular do not require assumptions like additivity or monotonicity of the valuation functions. For this reason, the sensible approach is to start from valuations that are as general as possible (for which hardness is easier to establish), and gradually constrain the domain to more specific classes, until eventually polynomial-time solvability becomes possible. Indeed, in a paper that is conceptually similar to ours, \citet{deng2012algorithmic} studied the computational complexity of the \emph{contiguous envy-free cake-cutting problem}\footnote{This version of the classic envy-free cake-cutting problem requires that every agent receives a single, connected piece.} and proved that the problem is PPAD-complete, even for three agents, when agents have \emph{ordinal preferences} over the possible pieces. These types of preferences induce no structure on the valuation functions and are therefore as general as possible. In contrast, the authors showed that for three agents and \emph{monotone valuations}, the problem becomes polynomial-time solvable, leaving the case of four or more agents as an open problem.  We adopt a similar approach in this paper for $\e$-\ch, and we manage to completely settle the complexity of the problem when the agents have monotone valuations, among other results, which are highlighted in \cref{sec:ourresults}.  

Another relevant question that has been surprisingly overlooked in the related literature is the \emph{query complexity} of the problem. In this regime, the algorithm interacts with the agents via \emph{queries}, asking them to provide their values for different parts of the interval $[0,1]$, and the complexity is measured by the number of queries required to find an $\e$-approximate solution. This brings us to our second main question:

\begin{quote}
    \emph{What is the query complexity of $\e$-\ch\ for a constant number of agents?}
\end{quote}

\noindent We develop appropriate machinery that allows us to answer both of our main questions at the same time. In a nutshell, for the positive results, we design algorithms that run in polynomial time and can be recast as query-based algorithms that only use a polynomial number of queries. For the negative results, we construct reductions from ``hard'' computational problems which allow us to simultaneously obtain computational hardness results and query complexity lower bounds.

\subsection{Our Results}\label{sec:ourresults}

In this section, we list our main results regarding the computational complexity and the query complexity of the $\e$-\ch\ problem.\medskip

\noindent \textbf{Computational Complexity:} We start from the \emph{computational complexity} of the problem for a constant number of agents. We prove the following main results, parameterised by (a) the number of agents and (b) the structure of the valuation functions.

\begin{mdframed}[backgroundcolor=white!90!gray,
      leftmargin=\dimexpr\leftmargin-20pt\relax,
      innerleftmargin=0pt,
      innertopmargin=4pt,
      skipabove=5pt,skipbelow=5pt]
\begin{itemize}
    \item[-] For a \emph{single agent} and \emph{general valuations}, the problem is polynomial-time solvable. The same result applies to the case of any number of agents with \emph{identical} general valuations. 
    \item[-] For \emph{two or more agents} and \emph{general valuations}, the problem is \ppa-complete. 
    \item[-] For \emph{two agents} and \emph{monotone valuations}, the problem is polynomial-time solvable. This result holds even if one of the two agents has a general valuation.
    \item[-] For \emph{three or more agents} and \emph{monotone valuations}, the problem is \ppa-complete.
\end{itemize} 
\end{mdframed}\vskip 5pt

\noindent Finally, the $\e$-\ch\ problem with $2$ agents coincides with the well-known $\e$-\emph{Perfect Division} problem for cake-cutting (e.g., see \citep{branzei2017query,branzei2019communication}), and thus naturally our results imply that $\e$-Perfect Division with $2$ agents with monotone valuations can be done in polynomial time, whereas it becomes \ppa-complete for $2$ agents with general valuations.\medskip

\noindent Before we proceed, we offer a brief discussion on the different cases that are covered by our results. The distinction on the number of agents is straightforward. For the valuation functions, we consider mainly \emph{general valuations} and \emph{monotone valuations}. Note that neither of these functions is additive, meaning that the value that an agent has for the union of two disjoint intervals $[a,b]$ and $[c,d]$ is not necessarily the sum of her values for the two intervals. For monotone valuations, the requirement is that for any two subsets $I$ and $I'$ of $[0,1]$ such that $I \subseteq I'$, the agent values $I'$ at least as much as $I$, whereas for general valuations there is no such requirement. 

We remark that for agents with piecewise constant valuations (i.e., the valuations used in \citep{FRG18-Consensus} to obtain the \ppa-completeness of the problem for many agents), the problem can be solved rather straightforwardly in polynomial time for a constant number of agents, using linear programming (see \cref{app:piecewiseconstant}). In terms of the classification of the complexity of the problem in order of increasing generality of the valuation functions, this observation provides the ``lower bound'' whereas our \ppa-hardness results for monotone valuations provide the ``upper bound'', see \cref{fig:taxonomy}. While the precise point of the phase transition has not yet been identified, our results make considerable progress towards this goal.  \medskip

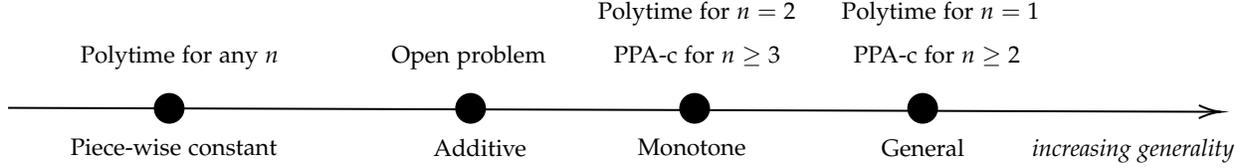
\begin{figure}
    \centering
    \scalebox{0.8}{
    \tikzset{every picture/.style={line width=0.75pt}}     

\begin{tikzpicture}[x=0.75pt,y=0.75pt,yscale=-1,xscale=1]

\draw    (24,148) -- (633.5,149.99) ;
\draw [shift={(635.5,150)}, rotate = 180.19] [color={rgb, 255:red, 0; green, 0; blue, 0 }  ][line width=0.75]    (10.93,-3.29) .. controls (6.95,-1.4) and (3.31,-0.3) .. (0,0) .. controls (3.31,0.3) and (6.95,1.4) .. (10.93,3.29)   ;

\draw  [color={rgb, 255:red, 0; green, 0; blue, 0 }  ,draw opacity=1 ][fill={rgb, 255:red, 0; green, 0; blue, 0 }  ,fill opacity=1 ] (98,148.25) .. controls (98,144.25) and (101.25,141) .. (105.25,141) .. controls (109.25,141) and (112.5,144.25) .. (112.5,148.25) .. controls (112.5,152.25) and (109.25,155.5) .. (105.25,155.5) .. controls (101.25,155.5) and (98,152.25) .. (98,148.25) -- cycle ;

\draw  [color={rgb, 255:red, 0; green, 0; blue, 0 }  ,draw opacity=1 ][fill={rgb, 255:red, 0; green, 0; blue, 0 }  ,fill opacity=1 ] (363,148.25) .. controls (363,144.25) and (366.25,141) .. (370.25,141) .. controls (374.25,141) and (377.5,144.25) .. (377.5,148.25) .. controls (377.5,152.25) and (374.25,155.5) .. (370.25,155.5) .. controls (366.25,155.5) and (363,152.25) .. (363,148.25) -- cycle ;

\draw  [color={rgb, 255:red, 0; green, 0; blue, 0 }  ,draw opacity=1 ][fill={rgb, 255:red, 0; green, 0; blue, 0 }  ,fill opacity=1 ] (478,148.25) .. controls (478,144.25) and (481.25,141) .. (485.25,141) .. controls (489.25,141) and (492.5,144.25) .. (492.5,148.25) .. controls (492.5,152.25) and (489.25,155.5) .. (485.25,155.5) .. controls (481.25,155.5) and (478,152.25) .. (478,148.25) -- cycle ;

\draw  [color={rgb, 255:red, 0; green, 0; blue, 0 }  ,draw opacity=1 ][fill={rgb, 255:red, 0; green, 0; blue, 0 }  ,fill opacity=1 ] (250,148.25) .. controls (250,144.25) and (253.25,141) .. (257.25,141) .. controls (261.25,141) and (264.5,144.25) .. (264.5,148.25) .. controls (264.5,152.25) and (261.25,155.5) .. (257.25,155.5) .. controls (253.25,155.5) and (250,152.25) .. (250,148.25) -- cycle ;

\draw (539,162) node [anchor=north west][inner sep=0.75pt]   [align=left] {{\footnotesize \textit{increasing generality}}};

\draw (54,162) node [anchor=north west][inner sep=0.75pt]   [align=left] {{\footnotesize Piecewise constant}};

\draw (340,162) node [anchor=north west][inner sep=0.75pt]   [align=left] {{\footnotesize Monotone}};

\draw (463,162) node [anchor=north west][inner sep=0.75pt]   [align=left] {{\footnotesize General}};

\draw (237,162) node [anchor=north west][inner sep=0.75pt]   [align=left] {{\footnotesize Additive}};

\draw (59,115) node [anchor=north west][inner sep=0.75pt]   [align=left] {{\footnotesize {Polytime for any $n$}}};

\draw (449,115) node [anchor=north west][inner sep=0.75pt]   [align=left] {{\footnotesize {PPA-c for $n\geq 2$}}};

\draw (328,115) node [anchor=north west][inner sep=0.75pt]   [align=left] {{\footnotesize {PPA-c for $n\geq 3$}}};

\draw (320,93) node [anchor=north west][inner sep=0.75pt]   [align=left] {{\footnotesize {Polytime for $n=2$}}};

\draw (216,115) node [anchor=north west][inner sep=0.75pt]   [align=left] {{\footnotesize Open problem}};

\draw (443,93) node [anchor=north west][inner sep=0.75pt]   [align=left] {{\footnotesize {Polytime for $n=1$}}};

\end{tikzpicture}}
    \caption{A classification of $\e$-\ch\ for a constant number $n$ of agents, in terms of increasing generality of the valuation functions.}
    \label{fig:taxonomy}
\end{figure}

\noindent \textbf{Query Complexity:} Besides the computational complexity of the problem, we are also interested in its query complexity. In this setting, one can envision an algorithm which interacts with the agents via a set of queries, and aims to compute a solution to $\e$-\ch\ using the minimum number of queries possible. In particular, a query is a question from the algorithm to an agent about a subset of $[0,1]$, who then responds with her value for that set. We provide the following results, where $L$ denotes the Lipschitz parameter of the valuation functions:

\begin{mdframed}[backgroundcolor=white!90!gray,
      leftmargin=\dimexpr\leftmargin-20pt\relax,
      innerleftmargin=0pt,
      innertopmargin=4pt,
      skipabove=5pt,skipbelow=5pt]
\begin{itemize}
    \item[-] For a \emph{single} agent and \emph{general valuations}, the query complexity of the problem is $\Theta\left(\log\frac{L}{\e}\right)$. The same result applies for any number of agents with \emph{identical} general valuations.
    \item[-] For \emph{$n \geq 2$ agents} and \emph{general valuations}, the query complexity of the problem is $\Theta\left(\left(\frac{L}{\e}\right)^{n-1}\right)$.
    \item[-] For \emph{two agents} and \emph{monotone valuations}, the query complexity of the problem is $O\left(\log^2\frac{L}{\e}\right)$. This result holds even if one of the two agents has a general valuation.
    \item[-] For \emph{$n \geq 3$ agents} and \emph{monotone valuations}, the query complexity of the problem is between $\Omega\left(\left(\frac{L}{\e}\right)^{n-2}\right)$ and $O\left(\left(\frac{L}{\e}\right)^{n-1}\right)$.
\end{itemize}
\end{mdframed}\vskip 5pt

\noindent To put these results into context, we remark that when studying the query complexity of the problem, the input consists of the error parameter $\e$ and the Lipschitz parameter $L$, given by their \emph{binary representation}. In that sense, for some constant $k$, a $\Theta(\log^k (L/\e))$ number of queries is polynomial in the size of the input. On the contrary, a $\Theta(L/\e)$ number of queries is exponential in the size of the input.
Not surprisingly, our \ppa-hardness results give rise to exponential query complexity lower bounds, whereas our algorithms can be transformed into query-based algorithms of polynomial query complexity. We remark however that beyond this connection, our query complexity analysis is in fact \emph{quantitative}, as we provide tight or almost tight bounds on the query complexity as a function of the number of agents $n$, for both general and monotone valuation functions. 

Finally, for the case of monotone valuations, we consider a more expressive query model, which is an appropriate extension of the well-known \emph{Robertson-Webb} query model \citep{robertson1998cake,woeginger2007complexity}, the predominant query model in the literature of fair cake-cutting \citep{brams1996fair}; we refer to this extension as the \emph{Generalised Robertson-Webb (GRW)} query model. We show that our bounds extend to this model as well, up to logarithmic factors.

\subsection{Related Work}

As we mentioned in the introduction, the origins of the Consensus-Halving problem can be traced back to the 1940s and the work of \citet{neyman1946theoreme}, who studied a generalisation of the problem with $k$ labels instead of two (\lplus, \lminus), and proved the existence of a solution when the valuation functions are probability measures and there is no constraint on the number of cuts used to obtain the solution. The existence theorem for two labels is known as the \emph{Hobby-Rice Theorem} \citep{hobby1965moment} and has been studied rather extensively in the context of the famous \emph{Necklace Splitting problem} \citep{Goldberg1985,Alon1986,Alon87-Necklace}. In fact, most of the proofs of existence for Necklace Splitting (with two thieves) were established via the Consensus-Halving problem, which was at the time referred to as \emph{Continuous Necklace Splitting} \citep{Alon87-Necklace}. The term ``Consensus-Halving'' was coined by \citet{SS03-Consensus}, who studied the continuous problem in isolation, and provided a constructive proof of existence which holds for very general valuation functions, including all of the valuation functions that we consider in this paper. Interestingly, the proof of \citet{SS03-Consensus} reduces the problem to finding edges of complementary labels on a triangulated $n$-sphere, labelled as prescribed by \emph{Tucker's lemma}, a fundamental result in topology.

While not strictly a reduction in the sense of computational complexity, the ideas of \citep{SS03-Consensus} certainly paved the way for subsequent work on the problem in computer science. The first computational results were obtained by \citet{filos2018hardness}, who proved that the associated computational problem, $\e$-\ch, lies in \ppa (adapting the constructive proof of \citet{SS03-Consensus}) and that the problem is PPAD-hard, for $n$ agents with piecewise constant valuation functions. \citet{FRG18-Consensus} proved that the problem is in fact \ppa-complete, establishing for the first time the existence of a ``natural'' problem complete for \ppa, i.e., a problem that does not contain a polynomial-sized circuit explicitly in its definition, answering a long-standing open question of \citet{Papadimitriou94-TFNP-subclasses}. In a follow-up paper, \citep{FRG18-Necklace} used the \ppa-completeness of Consensus-Halving to prove that the Necklace Splitting problem with two thieves is also \ppa-complete. Very recently, \citet{FRHSZ2020consensus-easier} strengthened the \ppa-hardness result to the case of very simple valuation functions, namely piecewise constant valuations with at most two blocks of value. \citet{deligkas2019computing} studied the computational complexity of the \emph{exact} version of the problem, and obtained among other results its membership in a newly introduced class BU (for ``Borsuk-Ulam'' \citep{Borsuk1933}) and its computational hardness for the well-known class FIXP of \citet{EY10-Nash-FIXP}. \citet{BatziouHH21-consensus-BBU} showed that the corresponding strong approximation problem (with measures represented by algebraic circuits) is complete for BU. A version of Consensus-Halving with divisible items was studied by \citet{GoldbergHIMS20-consensus-items}, who proved that the problem is polynomial-time solvable for additive utilities, but PPAD-hard for slightly more general utilities. Very recently, \citet{DFM-pizza} showed the PPA-completeness of the related Pizza Sharing problem \citep{karasev2016measure}, via a reduction from Consensus-Halving.

Importantly, none of the aforementioned results apply to the case of a constant number of agents, which was prior to this paper completely unexplored. Additionally, none of these works consider the query complexity of the problem. A recent work \citep{AlonG21-efficient} studies $\e$-\ch\ in a hybrid computational model (see the full version of their paper) which includes query access to the valuations, but contrary to our paper, their investigations are not targeted towards a constant number of agents, and the agents have additive valuation functions. 

A relevant line of work is concerned with the query complexity of \emph{fair cake-cutting} \citep{brams1996fair,Procaccia16-survey}, a related but markedly different fair-division problem. Conceptually closest to ours is the paper by \citet{deng2012algorithmic}, who study both the computational complexity and the query complexity of \emph{contiguous envy-free cake-cutting}, for agents with either general\footnote{More accurately, \citet{deng2012algorithmic} prove their impossibility results for \emph{ordinal preferences}, where for each possible division of the cake, the agent specifies the piece that she prefers. In particular, if one were to define valuation functions consistent with these preferences, the value of an agent for a piece would have to depend also on the way the rest of the cake is divided among the agents.} or monotone valuations. For the latter case, the authors obtain a polynomial-time algorithm for three agents, and leave open the complexity of the problem for four or more agents. In our case, for $\e$-\ch, we completely settle the computational complexity of the problem for agents with monotone valuations.

In the literature of fair cake-cutting, most of the related research (e.g., see \citep{brams1996fair,aziz2016discrete,aziz2016discreteb,amanatidis2018improved,branzei2017query}) has focused on the well-known \emph{Robertson-Webb (RW)} query model, in which agents interact with the protocol via two types of queries, \emph{evaluation queries} (\eval) and \emph{cut} queries (\cut). As the name suggests, this query model is due to \citet{robertson1995approximating,robertson1998cake}, but the work of \citet{woeginger2007complexity} has been rather instrumental in formalising it in the form that it is being used today. Given the conceptual similarity of fair cake-cutting with Consensus-Halving, it certainly makes sense to study the latter problem under this query model as well, and in fact, the queries used by \citet{AlonG21-efficient} are essentially RW queries. As we show in \cref{sec:RW}, our bounds are qualitatively robust when migrating to this more expressive model, i.e., they are preserved up to logarithmic factors.

Related to our investigation is also the work of \citet{branzei2017query}, who among other settings study the problem of \emph{$\e$-Perfect Division}, which stipulates a partition of the cake into $n$ pieces, such that each of the $n$ agents interprets \emph{all} pieces to be of approximate equal value (up to $\e$). For the case of $n=2$, this problem coincides with $\e$-\ch, and thus one can interpret our results for $n=2$ as extensions of the results in \citep{branzei2017query} (which are only for additive valuations) to the case of monotone valuations (for which the problem is solvable with polynomially-many queries) and to the case of general valuations (for which the problem admits exponential query complexity lower bounds). Besides the aforementioned results, there is a plethora of works in computer science and artificial intelligence related to computational aspects of fair cake cutting and fair division in general, e.g., see \citep{elkind2021mind,segal2018redividing,bei2017cake,bei2012optimal,bei2021price,bei2021dividing,goldberg2020contiguous,balkanski2014simultaneous,alijani2017envy,kurokawa2013cut,branzei2016algorithmic,cohler2011optimal,hosseini2020fair}.

\section{Preliminaries}\label{sec:preliminaries}

In the $\e$-\ch\ problem, there is a set of $n$ agents with \emph{valuation functions} $v_i$ (or simply \emph{valuations}) over the interval $[0,1]$, and the goal is to find a partition of the interval into subintervals labelled either \lplus or \lminus, using at most $n$ cuts. This partition should satisfy that for every agent $i$, the total value for the union of subintervals $\calI^{+}$ labelled \lplus and the total value for the union of subintervals $\mathcal{I}^{-}$ labelled \lminus is the same up to $\e$, i.e., $|v_i(\calI^{+}) - v_i(\calI^{-})| \leq \e$. In this paper we will assume $n$ to be a constant and therefore the inputs to the problem will only be $\e$ and the valuation functions $v_i$.

We will be interested in fairly general valuation functions; intuitively, these will be functions mapping \emph{measurable subsets} $A \subseteq [0,1]$ to non-negative real numbers. Formally, let $\Lambda([0,1])$ denote the set of Lebesgue-measurable subsets of the interval $[0,1]$ and $\lambda: \Lambda([0,1]) \to [0,1]$ the Lebesgue measure. We consider valuation functions $v_i: \Lambda([0,1]) \to \mathbb{R}_{\geq 0}$, with the interpretation that agent $i$ has value $v_i(A)$ for the subset $A \in \Lambda([0,1])$ of the resource. Similarly to \citep{deng2012algorithmic,branzei2017query,branzei2019communication,BarmanR20cake,AlonG21-efficient}, we also require that the valuation functions be Lipschitz-continuous. Following \citep{deng2012algorithmic}, a valuation function $v_i$ is said to be Lipschitz-continuous with Lipschitz parameter $L \geq 0$, if for all $A,B \in \Lambda([0,1])$, it holds that $|v_i(A) - v_i(B)| \leq L \cdot \lambda(A \triangle B)$. Here $\triangle$ denotes the symmetric difference, i.e., $A \triangle B = (A \setminus B) \cup (B \setminus A)$.\medskip

\noindent \textbf{Valuation Classes:} We will be particularly interested in the following three valuation classes, in terms of decreasing generality: 
\begin{itemize}
    \item[-] \emph{General valuations}, in which there is no further restriction to the functions $v_i$.
    \item[-] \emph{Monotone valuations}, in which $v_i(A) \leq v_i(A')$ for any two Lebesgue-measurable subsets $A$ and $A'$ such that $A \subseteq A'$. Intuitively, for this type of function, when comparing two sets such that one is a subset of the other, an agent cannot have a smaller value for the set that contains the other.
    \item[-] \emph{Additive valuations}, in which $v_i$ is a function from individual intervals in $[0,1]$ to  $\mathbb{R}_{\geq 0}$ and for a set of intervals $\calI$, it holds that $v_i(\calI)=\sum_{I \in \calI} v_i(I)$. Note that if $v_i$ is an additive valuation function, then it is in fact a measure. We will not prove any results for this type of valuation function in this paper, but we define them for reference and comparison (e.g., see \cref{fig:taxonomy}).
\end{itemize}
\medskip

\noindent \textbf{Normalisation:} We will also be interested in valuation functions that satisfy some standard normalisation properties. A valuation function $v_i$ is \emph{normalised}, if the following properties hold:
\begin{enumerate}
    \item $v_i(A) \in [0,1]$ for all $A \in \Lambda([0,1])$,
    \item $v_i(\emptyset) = 0$ and $v_i([0,1]) = 1$.
\end{enumerate}
In other words, we require the agents' values to lie in $[0,1]$ and that their value for the whole interval is normalised to $1$. These are the standard assumptions in the literature of the problem for additive valuations \citep{Alon87-Necklace}, as well as in the related problem of fair cake-cutting \citep{Procaccia16-survey}. We will only consider normalised valuation functions for our lower bounds and hardness results, whereas for the upper bounds and polynomial-time algorithms we will not impose any normalisation; this only makes both sets of results even stronger. 

With regard to the valuation classes defined above, we will often be referring to their normalised versions as well, e.g., \emph{normalised general valuations} or \emph{normalised monotone valuations}.\medskip

\noindent \textbf{Input models:} Given the fairly general nature of the valuation functions, we need to specify the manner in which they will be accessed by an algorithm for $\e$-\ch. Since we are interested in both the computational complexity and the query complexity of the problem, we will assume the following standard two ways of accessing these functions.
\begin{itemize}
    \item[-] In the \textbf{\emph{black-box model}}, the valuation functions $v_i$ can be \emph{arbitrary functions}, and are accessed via \emph{queries} (sometimes also referred to as \emph{oracle calls}). A query to the function $v_i$ inputs a Lebesgue-measurable subset $A$ (intuitively a set of subintervals) of $[0,1]$ and outputs $v_i(A) \in \mathbb{R}_{\geq 0}$. This input model is appropriate for studying the \emph{query complexity} of the problem, where the complexity is measured as the number of queries to the valuation function $v_i$. \smallskip
    
    \noindent We will also consider the following weaker version of the black-box model, which we will use in our query complexity upper bounds, thus making them stronger: In the \textbf{\emph{weak black-box model}} the input to a valuation function $v_i$ is some set $\calI$ of intervals, obtained by using at most $n$ cuts, where $n$ is the number of agents.
    \item[-] In the \textbf{\emph{white-box model}}, the valuation functions $v_i$ are \emph{polynomial-time algorithms}, mapping sets of intervals to non-negative rational numbers. These polynomial-time algorithms are given explicitly as part of the input, including the Lipschitz parameter $L$.\footnote{To avoid introducing too many technical details here, we refer to \cref{app:preliminaries} for the fully formal definition.} This input model is appropriate for studying the \emph{computational complexity} of the problem, where the complexity is measured as usual by the running time of the algorithm.
\end{itemize}

\noindent We now provide the formal definitions of the problem in the black-box and the white-box model. Note that the Lipschitz-parameter $L$ is part of the input of the problem and thus will appear in the bounds we obtain. Some of the previous works take $L$ to be bounded by a constant, and as a result it does not appear in their bounds.

\begin{mdframed}[backgroundcolor=white!90!gray,
      leftmargin=\dimexpr\leftmargin-20pt\relax,
      innerleftmargin=4pt,
      innertopmargin=0pt,
      skipabove=5pt,skipbelow=5pt]
 \begin{definition}[\textup{$\e$-\ch} \textit{(black-box model)}]
 For any constant $n \geq 1$, the problem \emph{$\e$-\ch\ with $n$ agents} is defined as follows:
 \begin{itemize}
     \item[-] \textbf{Input:} $\e >0$, the Lipschitz parameter $L$, \emph{query access} to the functions $v_i$.
     \item[-] \textbf{Output:} A partition of $[0,1]$ into two sets of intervals $\calI^{+}$ and $\calI^{-}$ such that for each agent $i$, it holds that $|v_i(\calI^{+}) - v_i(\calI^{-})| \leq \e$, using at most $n$ cuts.
 \end{itemize}
 \end{definition}
 \end{mdframed}

 \begin{mdframed}[backgroundcolor=white!90!gray,
      leftmargin=\dimexpr\leftmargin-20pt\relax,
      innerleftmargin=4pt,
      innertopmargin=0pt,
      skipabove=5pt,skipbelow=5pt]
 \begin{definition}[\textup{$\e$-\ch} \textit{(white-box model)}]
 For any constant $n \geq 1$, the problem \emph{$\e$-\ch\ with $n$ agents} is defined as follows:
 \begin{itemize}
     \item[-] \textbf{Input:} $\e >0$, the Lipschitz parameter $L$, polynomial-time algorithms $v_i$.
     \item[-] \textbf{Output:} A partition of $[0,1]$ into two sets of intervals $\calI^{+}$ and $\calI^{-}$ such that for each agent $i$, it holds that $|v_i(\calI^{+}) - v_i(\calI^{-})| \leq \e$, using at most $n$ cuts.
 \end{itemize}
 \end{definition}
 \end{mdframed}\vskip 5pt
 
 \noindent \textbf{Terminology:} When the valuation functions are normalised, we will refer to the problem as \emph{$\e$-\ch\ with $n$ normalised agents}. When the valuation functions are monotone, we will refer to the problem as \emph{$\e$-\ch\ with $n$ monotone agents}. If both conditions are true, we will use the term \emph{$\e$-\ch\ with $n$ normalised monotone agents}.

\subsection{\bu and \tucker}\label{sec:ppa}

For our \ppa-hardness results and query complexity lower bounds, we will reduce from a well-known problem, the computational version of the Borsuk-Ulam Theorem \citep{Borsuk1933}, which states that for any continuous function $F$ from $S^n$ to $\mathbb{R}^n$ there is a pair of antipodal points (i.e., $x,-x$) which are mapped to the same point. There are various equivalent versions of the problem (e.g., see \citep{Mat03BorsukUlam}); we will provide a definition that is most appropriate for our purposes. In fact, we will include several ``optional'' properties of the function $F$ in our definition, which will map to properties of the valuation functions $v_i$ when we construct our reductions in subsequent sections. Specifically, we will impose conditions for \emph{normalisation} and \emph{monotonicity}, which will correspond to normalised valuation functions for our lower bounds/hardness results of \cref{sec:general}, and to normalised monotone valuation functions for our lower bounds/hardness results of \cref{sec:monotone}. Let $B^n = [-1,1]^n$ and let $\partial(B^n)$ denote its boundary. As before, we will require that the functions we consider be Lipschitz-continuous. We say that $F: B^{n+1} \to B^n$ is Lipschitz-continuous with parameter $L$, if $\|F(x)-F(y)\|_\infty \leq L \cdot \|x-y\|_\infty$ for all $x,y \in B^{n+1}$, where $\|x\|_\infty = \max_i |x_i|$.

\begin{mdframed}[backgroundcolor=white!90!gray,
      leftmargin=\dimexpr\leftmargin-20pt\relax,
      innerleftmargin=4pt,
      innertopmargin=0pt,
      skipabove=5pt,skipbelow=5pt]
 \begin{definition}[\textup{$nD$-\bu}]\label{def:BU}
 For any constant $n \geq 1$, the problem $nD$-\bu is defined as follows:
 \begin{itemize}
     \item[-] \textbf{Input:} $\e >0$, the Lipschitz parameter $L$, a function $F: B^{n+1} \to B^n$.
     \item[-] \textbf{Output:} A point $x \in \partial (B^{n+1})$ such that $\|F(x) - F(-x)\|_\infty \leq \e$.
     \item[-] \textbf{Optional Properties:}
     \begin{itemize}
        \item[] \emph{\underline{Normalisation:}}
         \item[\tiny $\blacksquare$] $F(1,1,\ldots,1)=(1,1,\ldots,1)$.
         \item[\tiny $\blacksquare$] $F(-x) = -F(x)$, for all $x \in B^{n+1}$. \medskip
         \item[] \emph{\underline{Monotonicity:}}
         \item[\tiny $\blacksquare$] If $x \leq y$, then $F(x) \leq F(y)$, for all $x, y \in B^{n+1}$, where ``$\leq$'' denotes coordinate-wise comparison.
     \end{itemize}
 \end{itemize}
 \end{definition}
 \end{mdframed}\vskip 5pt
 
\noindent In the \emph{normalised $nD$-\bu problem}, where $F$ is normalised, we instead ask for a point $x \in \partial (B^{n+1})$ such that $\|F(x)\|_\infty \leq \varepsilon$. By using the fact that $F$ is an odd function ($F(-x) = -F(x)$, for all $x \in B^{n+1}$), it is easy to see that this is equivalent to $\|F(x) - F(-x)\|_\infty \leq \e/2$. We will also use the term \emph{normalised monotone $nD$-\bu} to refer to the problem when both the normalisation and monotonicity properties are satisfied for $F$.

In the black-box version of $nD$-\bu, we can query the value of the function $F$ at any point $x \in B^{n+1}$. In the white-box version of this problem, we are given a polynomial-time algorithm that computes $F$. Since the number of inputs of $F$ is fixed, we can assume that we are given an arithmetic circuit with $n+1$ inputs and $n$ outputs that computes $F$. Following the related literature \citep{DaskalakisP11-CLS}, we will consider circuits that use the arithmetic gates $+,-,\times, \max, \min, <$ and rational constants.\footnote{Formally speaking these circuits also need to be \emph{well-behaved} in a certain sense (see \citep{FearnleyGHS-gradient}). It is easy to check that the circuits we construct in this paper all have this property.}\\

\noindent Another related problem that will be of interest to us is the computational version of Tucker's Lemma \citep{tucker1945some}. Tucker's lemma is a discrete analogue of the Borsuk-Ulam theorem, and its computational counterpart, $nD$-\tucker, is defined below.

\begin{mdframed}[backgroundcolor=white!90!gray,
      leftmargin=\dimexpr\leftmargin-20pt\relax,
      innerleftmargin=4pt,
      innertopmargin=0pt,
      skipabove=5pt,skipbelow=5pt]
 \begin{definition}[\textup{$nD$-\tucker}]\label{def:tucker}
 For any constant $n \geq 1$, the problem $nD$-\tucker is defined as follows:
 \begin{itemize}
     \item[-] \textbf{Input:} grid size $N \geq 2$, a labelling function $\ell: [N]^n \to \{\pm 1, \pm 2, \dots, \pm n\}$ that is antipodally anti-symmetric (i.e., for any point $p$ on the boundary of $[N]^n$, we have $\ell(\overline{p}) = - \ell(p)$, where $\overline{p}_i = N+1-p_i$ for all $i$).
     \item[-] \textbf{Output:} Two points $p,q \in [N]^n$ with $\ell(p) = -\ell(q)$ and $\|p-q\|_\infty \leq 1$.
 \end{itemize}
 \end{definition}
 \end{mdframed}\vskip 5pt
 
 \noindent In the black-box version of this problem, we can query the labelling function for any point $p \in [N]^n$ and retrieve its label. In the white-box version, $\ell$ is given in the form of a Boolean circuit with the usual gates $\land, \lor, \lnot$.\medskip
 
 \noindent In the white-box model, $nD$-\tucker was recently proven to be \ppa-hard for any $n \geq 2$, by \citet{ABB15-2DTucker}; the membership of the problem in \ppa was known by \citep{Papadimitriou94-TFNP-subclasses}. 
 
 \begin{theorem}[\citep{ABB15-2DTucker,Papadimitriou94-TFNP-subclasses}]\label{lem:tuckerppa}
For any constant $n \geq 2$, \textup{$nD$-\tucker}\ is \ppa-complete.
\end{theorem}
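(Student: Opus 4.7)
The plan is to establish membership in \ppa\ and \ppa-hardness separately, with the latter being the crux. Membership is the ``classical'' direction due to Papadimitriou, while hardness is obtained by citing the result of Aisenberg--Bonet--Buss for the base case $n=2$ and then lifting to higher dimensions.

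For \ppa-membership I would give the standard parity argument that turns Tucker's lemma into an instance of the canonical \ppa\ problem \textsc{Leaf}. Triangulate $[N]^n$ by the standard Kuhn triangulation. Call an $(n-1)$-face of the triangulation \emph{complementary} if its $n$ vertex labels form exactly the multiset $\{+1,-1,+2,-2,\dots,+(n-1),-(n-1)\}$. Build an undirected graph $G$ whose vertices are the full-dimensional simplices of the triangulation together with one auxiliary ``outside'' vertex $v_\star$; connect two simplices iff they share a complementary facet, and connect $v_\star$ to every boundary simplex that has a complementary facet on $\partial [N]^n$. Any interior simplex $\sigma$ either has degree $\le 2$ in $G$ (by inspection of how many complementary facets an $n$-simplex can have) or is ``bad'' in the sense of containing two adjacent vertices $p,q$ with $\ell(p)=-\ell(q)$ --- i.e., it already yields a Tucker solution, in which case we declare it a terminal. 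The antipodal antisymmetry of $\ell$ on $\partial[N]^n$, combined with an inductive dimension-counting argument (the discrete shadow of the Borsuk--Ulam parity argument), forces the number of boundary complementary facets to be odd, so $\deg_G(v_\star)$ is odd. Following any \ppa\ path from $v_\star$ terminates at another odd-degree vertex, which is a bad simplex, yielding a Tucker solution. Formalising this as a polynomial-time Turing reduction from $nD$-\tucker\ to \textsc{Leaf} (with $N$ given in unary --- or, equivalently, in the standard ``succinct'' setting with $\ell$ given by a Boolean circuit) gives \ppa-membership.

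For \ppa-hardness the heart of the proof is the case $n=2$, which I would invoke as a black box from \cite{ABB15-2DTucker}. That reduction takes a \textsc{Leaf} instance (presented by a pair of Boolean circuits describing edges of an exponentially-large graph with a distinguished standard leaf) and constructs, in polynomial time, an antipodally-antisymmetric labelling $\ell':[N]^2\to\{\pm1,\pm2\}$ whose ``complementary adjacent pairs'' correspond precisely to the non-trivial leaves of the \textsc{Leaf} graph. The delicate part of their argument is twofold: encoding \textsc{Leaf} paths as one-dimensional strips of $\pm1,\pm2$ labels inside the $N\times N$ grid, and then symmetrising the entire construction across the grid's centre so that antipodal antisymmetry on $\partial[N]^2$ holds automatically. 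I would reproduce only the headline: the ``doubled'' grid carries two antipodal copies of the \textsc{Leaf} encoding, and solutions of $2D$-\tucker\ in the non-trivial region are forced to correspond to non-trivial leaves.

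For $n\ge 3$, I would lift $2D$-\tucker\ to $nD$-\tucker\ via a dimension-padding argument. Given $\ell':[N]^2\to\{\pm1,\pm 2\}$, define $\ell:[N]^n\to\{\pm1,\dots,\pm n\}$ by placing the $2D$ instance on the ``central slice'' $p_3=\cdots=p_n=\lceil (N+1)/2\rceil$ and, off the slice, assigning label $\pm k$ (for the largest $k\ge 3$ such that $p_k\neq \lceil (N+1)/2\rceil$) with sign determined by $\sign(p_k-\lceil(N+1)/2\rceil)$. This construction is antipodally antisymmetric by design, and any complementary adjacent pair $(p,q)$ in $[N]^n$ must have $p$ and $q$ agreeing on the slice coordinates, since a label $\pm k$ with $k\ge 3$ never cancels against $\mp k$ across a unit step in the grid (the sign-determining coordinate does not flip). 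Hence solutions to the lifted instance are exactly solutions to the original $2D$-\tucker, completing the reduction.

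The main obstacle is clearly the $n=2$ base case, whose combinatorial construction is genuinely intricate; once this is in hand, both the membership argument and the dimension-lifting step are essentially bookkeeping around the standard Tucker combinatorics.
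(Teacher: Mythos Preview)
The paper does not prove this theorem at all: it is stated purely as a citation of the two referenced works, accompanied only by the sentence ``$nD$-\tucker was recently proven to be \ppa-hard for any $n \geq 2$, by \cite{ABB15-2DTucker}; the membership of the problem in \ppa was known by \citep{Papadimitriou94-TFNP-subclasses}.'' Your proposal goes further and supplies an actual proof sketch, which is broadly correct and consistent with how the cited results are established.

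Two remarks on your sketch. For membership, the assertion that the number of boundary complementary facets is odd is the delicate step: since the labelling is antipodally antisymmetric and the label set $\{\pm 1,\dots,\pm(n-1)\}$ is closed under negation, complementary boundary facets come in antipodal pairs, which naively gives an even count. The resolution (which you gesture at with ``inductive dimension-counting argument'') is to count only \emph{alternating} facets in the sense of Ky Fan, where antipodal facets carry opposite orientation and hence do not cancel the parity; this is indeed what underlies Papadimitriou's argument, but as written your description would mislead a reader into the even-count trap. For hardness, your dimension-padding reduction is correct, but you should note that you need $N$ odd so that the centre $\lceil (N+1)/2\rceil$ is fixed by the antipodal map (otherwise the central slice is not sent to itself and antipodal antisymmetry fails on it); this is trivially arranged by padding the $2D$ instance by one row and column.
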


\noindent The computational class \ppa\ was defined by \citet{Papadimitriou94-TFNP-subclasses}, among several subclasses of the class TFNP \citep{Megiddo1991}, the class of problems with a guaranteed solution which is verifiable in polynomial time.
PPA is defined with respect to a graph of exponential size, which is given \emph{implicitly} as input, via the use of a circuit that outputs the neighbours of a given vertex, and the goal is to find a vertex of odd degree, given another such vertex as input. 

Given the close connection between Tucker's Lemma and the Borsuk-Ulam Theorem, the \ppa-hardness of \emph{some appropriate computational version} of Borsuk-Ulam follows as well \citep{ABB15-2DTucker}. However, this does not apply to the version of $nD$-\bu defined above, especially when one considers the additional properties of the function $F$ required for normalisation and monotonicity, as discussed earlier. We will reduce from $nD$-\tucker to our version of $nD$-\bu, to obtain its \ppa-hardness (even for the normalised monotone version), which will then imply the \ppa-hardness of $\e$-\ch, via our main reduction in \cref{sec:black-box}. \medskip

\noindent In the black-box model, \citet{deng2011discrete}, building on the results of \citet{chen2008matching}, proved both query complexity lower bounds and upper bounds for $nD$-\tucker.

\begin{theorem}[\citep{deng2011discrete}]\label{lem:tucker-query}
For any constant $n \geq 2$, the query complexity of \textup{$nD$-\tucker}\ is $\Theta \left(N^{n-1}\right)$.
\end{theorem}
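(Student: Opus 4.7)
The plan is to establish the matching bounds $O(N^{n-1})$ and $\Omega(N^{n-1})$ separately, corresponding to an algorithm and an adversary argument respectively.

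For the upper bound, I would adapt the standard constructive Scarf-style path-following proof of Tucker's lemma to a query-efficient form. Triangulate $[N]^n$ into simplices and define a ``door'' relation in which a simplex $\sigma$ is a door when its vertices carry a canonical pattern of $n$ of the $2n$ labels (say labels $+1, -2, +3, \dots$). By Tucker's lemma, the path that alternately enters and exits through doors, starting at a canonical boundary door forced by the antipodal anti-symmetry, must terminate at a complementary edge. The key observation is that the sequence of doors visited all lie in an effectively $(n-1)$-dimensional ``slab'' structure, so there are $O(N^{n-1})$ distinct doors in total, and each step of the path requires only a constant number of label queries (the labels on the $n+1$ vertices of the next candidate simplex). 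Caching answers ensures that each grid point is queried at most once across the full path, yielding the bound.

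For the lower bound, the plan is an adversary argument in the spirit of Chen and Deng. I would design a family of valid Tucker labellings parameterised by a hidden ``snake'' $\pi$, a self-avoiding path placed in an $(n-1)$-dimensional subgrid of the interior of $[N]^n$, whose two endpoints form the unique complementary-labelled edge. Away from the snake, labels are filled in by a canonical ``radial'' rule that respects antipodal anti-symmetry on $\partial [N]^n$. The adversary answers each query by revealing the canonical label, never committing to a specific snake placement, so long as the queried region does not yet intersect too many candidate snakes. Since there are $2^{\Omega(N^{n-1})}$ possible snake placements on an $(n-1)$-dimensional subgrid, and each query rules out only a constant fraction of candidates that pass through it, any algorithm asking $o(N^{n-1})$ queries leaves at least two candidate snakes consistent with all answers; choosing a snake whose endpoints lie outside the queried set then forces the algorithm to output a wrong answer on one of them.

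The main obstacle is the lower bound: one must simultaneously ensure that (i) every placement of the snake extends to a fully defined, antipodally anti-symmetric labelling in which the only complementary edge lies on $\pi$, and (ii) the canonical ``background'' labels revealed during the adversary's answers do not themselves create an unintended complementary edge (which would give the algorithm a free solution). Achieving (i) typically requires reserving a small ``corridor'' around the snake and routing label gradients carefully through it, and achieving (ii) requires the canonical rule to use only two of the $2n$ colours in the complement of the corridor, so that no $-j/+j$ pair can appear adjacent outside of $\pi$. The rest of the argument is a counting bound on how many snakes a given set of $o(N^{n-1})$ queries can eliminate.
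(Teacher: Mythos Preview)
The paper does not prove this statement at all: it is quoted as a known result of \cite{deng2011discrete} (building on \cite{chen2008matching}) and used only as a black box. So there is no ``paper's own proof'' to match; the relevant comparison is to the argument in those cited works.

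On the substance of your sketch, the lower bound is in the right spirit: the Chen--Deng style adversary that hides the unique complementary edge along a long self-avoiding path in an $(n-1)$-dimensional substructure is exactly the mechanism behind the $\Omega(N^{n-1})$ bound, and your conditions (i)--(ii) correctly identify the technical work one has to do.

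The upper bound, however, has a real gap. A Scarf/Freund--Todd path-following proof of Tucker's lemma is constructive, but the pivot path can traverse $\Theta(N^n)$ simplices in the worst case; nothing in the combinatorics forces the sequence of ``doors'' to stay inside an $(n-1)$-dimensional slab, and your claim to that effect is unjustified. (Indeed, if path-following were always $O(N^{n-1})$, the PPA-hardness of $nD$-\textsc{Tucker} would be much less interesting.) The $O(N^{n-1})$ upper bound in \cite{deng2011discrete} is obtained instead by a divide-and-conquer scheme: one queries an entire $(n-1)$-dimensional cross-section, uses a parity/boundary argument to localise a solution to one side, and recurses with geometrically decreasing cost so that the total is $O(N^{n-1})$ rather than $O(N^{n-1}\log N)$. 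If you want to supply a proof here, that is the route to take for the upper bound; your path-following plan, as stated, would not deliver the claimed exponent.
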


\noindent We remark that \citet{deng2011discrete} use a version of $nD$-\tucker that is slightly different from the one that we defined above, but their results apply to this version as well.

\begin{remark}
Some connections between the aforementioned problems, namely $nD$-\bu, $nD$-\tucker, and $\e$-\ch\ are known from the previous literature. First, $nD$-\bu\ and $nD$-\tucker\ are known to be computationally equivalent due to \citet{Papadimitriou94-TFNP-subclasses} and \citet{ABB15-2DTucker}, although, technically speaking, none of the aforementioned papers proved this result formally, or even defined $nD$-\bu\ formally. Yet, even the implicit reduction between those problems is insufficient for our purposes, because, in order to achieve our results for $\ch$, we need a version of $nD$-\bu\ that exhibits several properties, as described in \cref{def:BU} (namely, normalisation and primarily monotonicity). Indeed, proving the PPA-completeness of monotone $nD$-\bu\ is the main technical result of our work. 

In terms of the completeness of $\ch$, the works of \citet{FRG18-Consensus,FRG18-Necklace} indeed establish reductions from $nD$-\tucker, but crucially, these reductions are white-box, i.e., they have access to the Boolean circuit that encodes the labelling function. On the contrary, our reductions are black-box (see \cref{sec:black-box-reductions} below), which allows us to obtain both computational complexity results and query complexity bounds at the same time. Also, quite importantly, all the previous reductions do not work when there is a constant number of agents.
\end{remark}

\subsection{Efficient Black-Box Reductions}\label{sec:black-box-reductions}

The reductions that we will construct (from $nD$-\tucker to $nD$-\bu to $\e$-\ch) will be \emph{black-box reductions}, and therefore they will also allow us to obtain query complexity lower bounds for $\e$-\ch\ in the black-box model, given the corresponding lower bounds of \cref{lem:tucker-query}. For the upper bounds, we will reduce directly from $\e$-\ch\ to $nD$-\tucker, again via a black-box reduction. 

Roughly speaking,\footnote{To keep the exposition clean, we do not provide formal definitions of these concepts here, as they are rather standard; we refer the reader to the related works of \citep{Beame1998,komargodski2019white} for more details.} a black-box reduction from Problem A to Problem B is a procedure by which we can answer oracle calls (queries) for an instance of Problem B by using an oracle for some instance of Problem A, such that a solution to the instance of Problem B yields a solution to the instance of Problem A. For example, a black-box reduction from $nD$-\bu to $\e$-\ch\ is a procedure that simulates an instance for the latter problem by accessing the function $F$ of the former problem a number of times, and such that a solution of $\e$-\ch\ can easily be translated to a solution of $nD$-\bu. The name ``black-box'' comes from the fact that this type of reduction does not need to know the structure of the functions $v_i$ of $\e$-\ch\ or $F$ of $nD$-\bu.

In order to prove lower bounds on the query complexity of some Problem B, it suffices to construct a black-box reduction from some Problem A, for which query complexity lower bounds are known; the obtained bounds will depend on the number $k$ of oracle calls to the input of Problem A that are needed to answer an oracle call to the input of Problem B. A black-box reduction is \emph{efficient} if $k$ is a constant, and therefore the query complexity lower bounds of Problems A and B are of the same asymptotic order. To obtain upper bounds on the query complexity, we can construct a reduction in the opposite direction (from Problem B to Problem A), assuming that query complexity upper bounds for Problem A are known.

Ideally, we would like to use the same reduction to also obtain computational complexity results in the white-box model. For this to be possible, the procedure described above should actually be a polynomial-time algorithm. Slightly abusing terminology, we will use the term ``efficient'' to describe such a reduction in the white-box model as well.

\begin{definition}
We say that a black-box reduction from Problem A to Problem B is \emph{efficient} if:
\begin{itemize}
    \item[-] in the black-box model, it uses a constant number of queries (oracle calls) to the function (oracle) of Problem A, for each query (oracle call) to the function of Problem B;
    \item[-] in the white-box model, the condition above holds, and the reduction is also a polynomial-time algorithm.
\end{itemize}
\end{definition}

\noindent Concretely for our case, all of our reductions will be efficient black-box reductions, thus allowing us to obtain both \ppa-completeness results and query complexity bounds matching those of the problems that we reduce from/to. We remark that the reductions constructed for proving the \ppa-hardness of the problem in previous works (for a non-constant number of agents) \citep{FRG18-Consensus,FRG18-Necklace,FRHSZ2020consensus-easier} are not black-box reductions, and therefore have no implications on the query complexity of the problem.

\section{Black-Box Reductions to and from Consensus-Halving}\label{sec:black-box}

In this section we develop our main machinery for proving both $\ppa$-completeness results and query complexity upper and lower bounds for $\e$-\ch. 
We summarise our general approach for obtaining positive and negative results below.\medskip

\noindent For our \emph{impossibility results} (i.e., computational hardness results in the white-box model and query complexity lower bounds in the black-box model), we will construct an efficient black-box reduction from $nD$-\bu to $\e$-\ch\ with $n$ agents (\cref{prop:BU2CH}). This reduction will preserve the optional properties of \cref{def:BU}, meaning that if the instance of $nD$-\bu is normalised (respectively monotone), the valuation functions of the corresponding instance of $\e$-\ch\ will be normalised (respectively monotone) as well. This will allow us in subsequent sections to reduce the problem of proving impossibility results for $\e$-\ch\ to proving impossibility results for the versions of $nD$-\bu with those properties. We will obtain these latter results via reductions from $nD$-\tucker, which for $n \geq 2$ is known to be \ppa-hard (\cref{lem:tuckerppa}) and admit exponential query complexity lower bounds (\cref{lem:tucker-query}). \vskip 3pt

\noindent For our \emph{positive results} (i.e., membership in \ppa in the white-box model and query complexity upper bounds in the black-box model), we will construct an efficient black-box reduction from $\e$-\ch\ to $nD$-\tucker (\cref{prop:CH2TU}). We remark here that a similar reduction already exists in the related literature \citep{filos2020topological}, but only applied to the case of additive valuation functions. The extension to the case of general valuations follows along the same lines, and we provide it here for completeness. We also note that some of our positive results, namely the results for one general agent and two monotone agents, will not be obtained via reductions, but rather directly via the design of polynomial-time algorithms in the white-box model or algorithms of polynomial query complexity in the black-box model. \medskip

\noindent Related to the discussion above, we have the following two propositions. Their proofs are presented in the sections below.

\begin{proposition}\label{prop:BU2CH}
There is an efficient black-box reduction from (normalised, monotone) $nD$-\bu\ to (normalised, monotone) $\e$-\ch. 
\end{proposition}

\begin{proposition}\label{prop:CH2TU}
There is an efficient black-box reduction from $\e$-\ch\ to $nD$-\tucker.
\end{proposition}

\subsection{Proof of Proposition \ref{prop:BU2CH}}
\noindent\textbf{Description of the reduction.}
Let $n \geq 1$ be a fixed integer. Let $\varepsilon > 0$ and let $F: B^{n+1} \to B^n$ be a Lipschitz-continuous function with Lipschitz parameter $L$. We now construct valuation functions $v_1, \dots, v_n$ for a Consensus-Halving instance.

Let $R_1, R_2, \dots, R_{n+1}$ denote the partition of interval $[0,1]$ into $n+1$ subintervals of equal length, i.e., $R_j = [\frac{j-1}{n+1}, \frac{j}{n+1}]$ for $j \in [n+1]$. For any $A \in \Lambda([0,1])$, we define $x(A) \in B^{n+1} = [-1,1]^{n+1}$ by
$$[x(A)]_j = 2 (n+1) \cdot \lambda(A \cap R_j) - 1$$
for all $j \in [n+1]$. Recall that $\lambda$ denotes the Lebesgue measure on the interval $[0,1]$. Note that since $\lambda(A \cap R_j) \in [0,\frac{1}{n+1}]$, we indeed have $[x(A)]_j \in [-1,1]$; see \cref{fig:regions} for a visualisation.

For $i \in [n]$, the valuation function $v_i$ of the $i$th agent is defined as
$$v_i(A) = \frac{F_i(x(A)) + 1}{2}$$
for any $A \in \Lambda([0,1])$, where $F_i: B^{n+1} \to [-1,1]$ is the $i$th output of $F$. Note that $v_i(A) \in [0,1]$, since $F_i(x(A)) \in [-1,1]$.\medskip

\noindent\textbf{Lipschitz-continuity.}
For any $A,B \in \Lambda([0,1])$ it holds that
\begin{align*}
|v_i(A) - v_i(B)|  & = \frac{1}{2} \left|F_i(x(A)) - F_i(x(B))\right| \leq \frac{1}{2} \|F(x(A)) - F(x(B))\|_\infty \\
& \leq \frac{L}{2} \|x(A) - x(B)\|_\infty  \leq (n+1) \cdot L \cdot \max_{j \in [n+1]} |\lambda(A \cap R_j) - \lambda(B \cap R_j)| \\
& \leq (n+1)  \cdot L \cdot \max_{j \in [n+1]} \lambda((A \triangle B) \cap R_j) \leq (n+1)  \cdot L \cdot \lambda(A \triangle B).
\end{align*}
Thus, $v_i$ is Lipschitz-continuous with Lipschitz parameter $(n+1)  \cdot L$.\medskip

\noindent{\bf Correctness.}
Now consider an $\varepsilon/2$-Consensus-Halving of $v_1, \dots, v_n$. Namely, let $\mathcal{I}^+, \mathcal{I}^-$ be a partition of $[0,1]$ using at most $n$ cuts, such that $|v_i(\mathcal{I}^+) - v_i(\mathcal{I}^-)| \leq \varepsilon/2$ for all $i \in [n]$. Since $\mathcal{I}^+$ is obtained by using at most $n$ cuts, it follows that there exists $\ell \in [n+1]$ such that $R_\ell$ does not contain a cut. As a result, $\mathcal{I}^+ \cap R_\ell$ is either empty or equal to $R_\ell$. This implies that $\lambda(\mathcal{I}^+ \cap R_\ell) \in \{0, \frac{1}{n+1}\}$ and thus $[x(\mathcal{I}^+)]_\ell \in \{\pm 1\}$, i.e., $x(\mathcal{I}^+) \in \partial (B^{n+1})$. Furthermore, for any $j \in [n+1]$, we have
\begin{equation*}
\begin{split}
[x(\mathcal{I}^+)]_j & = 2 (n+1) \cdot \lambda(\mathcal{I}^+ \cap R_j) - 1 = 2 (n+1) \cdot \left(\frac{1}{n+1} - \lambda(\mathcal{I}^- \cap R_j)\right) - 1\\
&= -2 (n+1) \cdot \lambda(\mathcal{I}^- \cap R_j) + 1
= -[x(\mathcal{I}^-)]_j.
\end{split}
\end{equation*}
Letting $y = x(\mathcal{I}^+) \in \partial (B^{n+1})$, we have that for any $i \in [n]$
$$|F_i(y) - F_i(-y)| = |F_i(x(\mathcal{I}^+)) - F_i(x(\mathcal{I}^-))| = 2 |v_i(\mathcal{I}^+) - v_i(\mathcal{I}^-)| \leq \varepsilon.$$
Thus, $y$ is a solution to the original $nD$-\bu instance.\medskip

\noindent\textbf{White-box model.} This reduction yields a polynomial-time many-one reduction from $nD$-\bu to $\e$-\ch\ with $n$ agents. Thus, if we show that $nD$-\bu is PPA-hard for some $n$, then we immediately obtain that $\e$-\ch\ with $n$ agents is also PPA-hard.\medskip

\noindent\textbf{Black-box model.} It is easy to see that this is a black-box reduction. It can be formulated as follows: given access to an oracle for an instance of $nD$-\bu with parameters $(\varepsilon, L)$ we can simulate an oracle for an instance of $\e$-\ch\ (with $n$ agents) with parameters $(\varepsilon/2, (n+1) L)$ such that any solution of the latter yields a solution to the former. Furthermore, in order to answer a query to some $v_i$, we only need to perform a single query to $F$. Thus, we obtain the following query lower bound: solving an instance of $\e$-\ch\ (with $n$ agents) with parameters $(\varepsilon, L)$ requires at least as many queries as solving an instance of $nD$-\bu with parameters $(\varepsilon',L') = (2\varepsilon,\frac{L}{n+1})$. This means that if $nD$-\bu has a query lower bound of $\Omega((\frac{L'}{\varepsilon'})^{n-1})$ for some $n$, then $\e$-\ch\ (with $n$ agents) has a query lower bound of $\Omega((\frac{L}{2 \varepsilon (n+1)})^{n-1}) = \Omega((\frac{L}{\varepsilon})^{n-1})$, since $n$ is constant.\medskip

\begin{figure}
  \begin{center}
      \includegraphics[width=\textwidth]{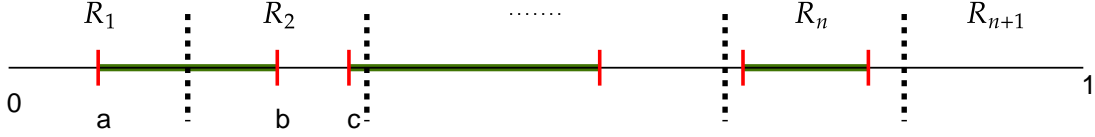}
  \end{center}
    \caption{The partition of $[0,1]$ into $n+1$ subintervals of equal length and a set $A$, coloured by the green region, as it is defined by the red cuts. The first three cuts on the left are located at positions $a \leq b \leq c$, where $a \in R_1$ and $b,c \in R_2$. Here, since $\lambda(A \cap R_1) = 1/(n+1)-a$, we would obtain $[x(A)]_1 = 2(n+1)(1/(n+1)-a) -1 = 1 - 2(n+1)a$. Similarly, $\lambda(A \cap R_2) = 1/(n+1) - (c-b)$, and thus $[x(A)]_2 = 1 - 2(n+1)(c-b)$.}
    \label{fig:regions}
\end{figure}

\noindent\textbf{Additional properties of the reduction.} Some properties of the Borsuk-Ulam function $F$ carry over to the valuation functions $v_1, \dots, v_n$. In particular, the following properties are of interest to us:
\begin{itemize}
    \item[-] \textbf{If $F$ is monotone, then $v_1, \dots, v_n$ are monotone.} Indeed, consider $A, B \in \Lambda([0,1])$ with $A \subseteq B$. Then, it holds that $\lambda(A \cap R_j) \leq \lambda(B \cap R_j)$ for all $j \in [n+1]$, and as a result $x(A) \leq x(B)$ (coordinate-wise). By monotonicity of $F$, it follows that $F_i(x(A)) \leq F_i(x(B))$, and thus $v_i(A) \leq v_i(B)$ for all $i \in [n]$.
    \item[-] \textbf{If $F$ is normalised, then $v_1, \dots, v_n$ are normalised.} As noted earlier, we already have that $v_i(A) \in [0,1]$ for all $A \in \Lambda([0,1])$. Thus, it remains to prove that $v_i(\emptyset) = 0$ and $v_i([0,1]) = 1$. It is easy to see that $x([0,1]) = (1, 1, \dots, 1)$ and thus $F(x([0,1])) = (1, 1, \dots, 1)$ since $F$ is normalised, which yields $v_i([0,1]) = 1$. On the other hand, we have $x(\emptyset) = (-1,-1,\dots,-1)$ and thus $F(x(\emptyset)) = -F(-x(\emptyset)) = - F(1,1,\dots,1) = (-1, -1, \dots, -1)$, which yields $v_i(\emptyset) = 0$. Here we also used the fact $F$ is an odd function, since it is normalised. In fact, since $F$ is odd, we also obtain that $v_i(A) + v_i(A^c) = 1$ for all $A \in \Lambda([0,1])$, where $A^c = [0,1] \setminus A$ denotes the complement of $A$. This can be shown by noting that $x(A^c) = -x(A)$ (by using the same argument as for $\mathcal{I}^+$ and $\mathcal{I}^-$ above) and then using the fact that $F(x(A^c)) = -F(x(A))$. 
\end{itemize}
This means that if we are able to show (white- or black-box) hardness of $nD$-\bu where $F$ has additional properties, then the hardness will also hold for $\e$-\ch\ with $n$ agents that have the corresponding properties.

Furthermore, note that if $F$ is a normalised $nD$-\bu function, then an $\varepsilon$-approximate Consensus-Halving for $v_1, \dots, v_n$ (i.e., $|v_i(\mathcal{I}^+) - v_i(\mathcal{I}^-)| \leq \varepsilon$), yields an $\varepsilon$-approximate solution to $F$, in the sense that $\|F(x(\mathcal{I}^+))\|_\infty \leq \varepsilon$. This is due to the fact that, by definition, $F$ is an odd function, if it is normalised.

\subsection{Proof of Proposition \ref{prop:CH2TU}}\label{sec:CHtoTU}

The reduction presented in this section is based on a proof of existence for a generalization of Consensus-Halving with general valuations given in \citep{filos2020topological}. This existence proof was also used in \citep{filos2020topological} to provide a reduction for additive valuations. It can easily be extended to work for general valuations as well. We include the full reduction here for completeness.\medskip

\noindent\textbf{Description of the reduction.}
Consider an instance of $\varepsilon$-\ch\ with $n$ agents with parameters $\varepsilon$, $L$. Let $v_1, \dots, v_n$ denote the valuations of the agents. We consider the domain $K_m^n$, where $K_m = \{-1,-(m-1)/m,\dots, -1/m,0,1/m,2/m,\dots, (m-1)/m,1\}$, for $m=\lceil 2nL/\varepsilon \rceil$. A point in $K_m^n$ corresponds to a way to partition the interval $[0,1]$ into two sets $\mathcal{I}^+, \mathcal{I}^-$ using at most $n$ cuts. A very similar encoding was also used by \citet{Meunier2014simplotopal} for the Necklace Splitting problem. A point $x \in K_m^n$ corresponds to the partition $\mathcal{I}^+(x), \mathcal{I}^-(x)$ obtained as follows.
\begin{enumerate}
    \item Provisionally put the label \lplus on the whole interval $[0,1]$
    \item For $\ell=1,2,\dots,n$:
    \begin{itemize}
        \item[-] if $x_\ell > 0$, then put label \lplus on the interval $[0,x_\ell]$;
        \item[-] if $x_\ell < 0$, then put label \lminus on the interval $[0,-x_\ell]$.
    \end{itemize}
\end{enumerate}
Note that subsequent assignments of a label to an interval, ``overwrite'' previous assignments. One way of thinking about it, is that we are applying a coat of paint on the interval $[0,1]$. Initially the whole interval is painted with colour \lplus, and as the procedure is executed, various subintervals will be painted over with colour \lminus or \lplus. It is easy to check that the final partition into $\mathcal{I}^+(x), \mathcal{I}^-(x)$ that is obtained, uses at most $n$ cuts. Furthermore, for any $x \in \partial K_m^n$, the partition $\mathcal{I}^+(-x), \mathcal{I}^-(-x)$ obtained from $-x$ corresponds to the partition $\mathcal{I}^+(x), \mathcal{I}^-(x)$ with labels \lplus and \lminus switched. In other words, $\mathcal{I}^+(-x) = \mathcal{I}^-(x)$ and $\mathcal{I}^-(-x) = \mathcal{I}^+(x)$. For a more formal definition of this encoding, see \citep{filos2020topological}.

We define a labelling $\ell: K_m^n \to \{\pm 1, \pm 2, \dots, \pm n\}$ as follows. For any $x \in K_m^n$:
\begin{enumerate}
    \item Let $i \in [n]$ be the agent that sees the largest difference between $v_i(\mathcal{I}^+(x))$ and $v_i(\mathcal{I}^-(x))$, i.e., $i = \arg\max_{i \in [n]} |v_i(\mathcal{I}^+(x)) - v_i(\mathcal{I}^-(x))|$, where we break ties by picking the smallest such $i$.
    \item Pick a sign $s \in \{+,-\}$ as follows. If $v_i(\mathcal{I}^+(x)) > v_i(\mathcal{I}^-(x))$, then let $s=+$. If $v_i(\mathcal{I}^+(x)) < v_i(\mathcal{I}^-(x))$, then let $s=-$. If $v_i(\mathcal{I}^+(x)) = v_i(\mathcal{I}^-(x))$, then pick $s$ such that $\mathcal{I}^s$ contains the left end of the interval $[0,1]$.
    \item Set $\ell(x) = +i$ if $s=+$, and $\ell(x) = -i$ otherwise.
\end{enumerate}

With this definition, it is easy to check that $\ell(-x) = -\ell(x)$ for all $x \in \partial(K_m^n)$. By re-interpreting $K_m^n$ as a grid $[N]^n$ with $N = 2m+1$, we thus obtain an instance $\widehat{\ell}: [N]^n \to \{\pm 1, \pm 2, \dots, \pm n\}$ of $nD$-\tucker. In particular, note that $\widehat{\ell}$ is antipodally anti-symmetric on the boundary, as required.\medskip

\noindent\textbf{Correctness.}
Any solution to the $nD$-\tucker instance $\widehat{\ell}$ yields $x,y \in K_m^n$ with $\|x-y\|_\infty \leq 1/m$ and $\ell(x) = -\ell(y)$. Without loss of generality, assume that $\ell(x) = +i$ for some $i \in [n]$. Since $\|x-y\|_\infty \leq 1/m$, we obtain that
$$\lambda(\mathcal{I}^+(x) \triangle~ \mathcal{I}^+(y)) \leq \sum_{j=1}^n |x_j-y_j| = \|x-y\|_1 \leq n \|x-y\|_\infty \leq n/m$$
and the same bound also holds for $\lambda(\mathcal{I}^-(x) \triangle~ \mathcal{I}^-(y))$. Since $v_i$ is Lipschitz-continuous with parameter $L$, it follows that
$$|v_i(\mathcal{I}^+(x)) - v_i(\mathcal{I}^+(y))| \leq L \cdot \lambda(\mathcal{I}^+(x) \triangle~ \mathcal{I}^+(y)) \leq nL/m$$
and similarly for $|v_i(\mathcal{I}^-(x)) - v_i(\mathcal{I}^-(y))|$.

Since $\ell(x) = +i$, it follows that $v_i(\mathcal{I}^+(x)) \geq v_i(\mathcal{I}^-(x))$. For the sake of contradiction, let us assume that $v_i(\mathcal{I}^+(x)) > v_i(\mathcal{I}^-(x)) + \varepsilon$. Then, it follows that
$$v_i(\mathcal{I}^+(y)) - v_i(\mathcal{I}^-(y)) \geq v_i(\mathcal{I}^+(x)) - v_i(\mathcal{I}^-(x)) - 2nL/m > \varepsilon - 2nL/m \geq 0$$
since $m \geq 2nL/\varepsilon$. But this contradicts the fact that $\ell(y) = -i$. Thus, it must hold that $|v_i(\mathcal{I}^+(x)) - v_i(\mathcal{I}^-(x))| \leq \varepsilon$. Since $\ell(x)=+i$, it follows that for all $j \in [n]$
$$|v_j(\mathcal{I}^+(x)) - v_j(\mathcal{I}^-(x))| \leq |v_i(\mathcal{I}^+(x)) - v_i(\mathcal{I}^-(x))| \leq \varepsilon.$$
This means that $\mathcal{I}^+(x),\mathcal{I}^-(x)$ yields a solution to the original \e-\ch\ instance.

Note that the reduction uses $N=2m+1 \leq 4nL/\varepsilon +3$ for the $nD$-\tucker instance. Furthermore, any query to the labelling function $\widehat{\ell}$ can be answered by performing $2n$ queries to the valuation functions $v_1, \dots, v_n$.


\section{General Valuations}\label{sec:general}

We are now ready to prove our main results for the $\e$-\ch\ problem, starting from the case of general valuations. First, for a single agent with a general valuation function, a simple binary search procedure is sufficient to solve $\e$-\ch\ with a polynomial number of queries and in polynomial time, therefore obtaining an efficient algorithm both in the white-box and in the black-box model. We have the following theorem.

\begin{theorem}\label{thm:CH-general-1agent}
For one agent with a general valuation function (or multiple agents with identical general valuations), $\e$-\ch\ is solvable in polynomial time and has query complexity $\Theta\left(\log \frac{L}{\e}\right)$.
\end{theorem}

\begin{proof}
We will prove the theorem for the case of $n=1$, as a solution to \e-\ch\ for this case is also straightforwardly a solution to the problem with multiple agents with identical valuations. 
Our algorithm essentially simulates binary search.
We say that the {\em label of a cut} $x \in [0,1]$ is \lplus, if $v([0,x]) > v([x,1]) + \e$. Respectively, the label of cut $x$ is \lminus, if $v([x,1]) > v([0,x]) + \e$. In any other case, the label of the cut is 0; if a cut has label 0, then it is a solution to \e-\ch. Observe that in order for an interval $[a,b] \subseteq [0,1]$ to contain a solution, it suffices that the label of $a$ be \lminus and the label of $b$ be \lplus (or vice-versa); then there is definitely a point $x \in [a,b]$ where the label is 0 (by continuity of $v$).

Now let $a=0$ and $b=1$. If $a$ or $b$ has label 0, then we have immediately found a solution. Otherwise, note that if $a$ has label \lminus, then $b$ must have label \lplus, and vice-versa. For convenience, in what follows, we assume that $a$ has label \lminus and $b$ has label \lplus. Our algorithm proceeds as follows in every iteration. Given an interval $[a,b]$ with label \lminus for $a$ and label \lplus for $b$, it computes the label of $\frac{a+b}{2}$. This can be done via two \eval queries. Then, if the label of $\frac{a+b}{2}$ is \lplus, it sets $b=\frac{a+b}{2}$; if the label is \lminus, it sets $a=\frac{a+b}{2}$; and if the label is 0 it outputs this cut.

We claim that the algorithm will always find a cut with label 0 after at most $\log \frac{L}{\e}$ iterations. For the sake of contradiction, assume that there is no such cut after $\log \frac{L}{\e}$ iterations. Observe that the length of $[a,b]$ in this case will be $\frac{\e}{L}$. In addition, we know the labels of $a$ and $b$. Cut $a$ has label \lminus, thus $v([a,1]) > v([0,a])+\e$, and cut $b$ has label \lplus, i.e., $v([0,b]) > v([b,1]) + \e$. Since $|b-a| \leq \frac{\e}{L}$ and $v$ is $L$-Lipschitz-continuous, it follows that
$$|v([a,1])-v([b,1])| \leq L \cdot \lambda([a,b)) \leq L \cdot \frac{\e}{L} = \e$$
and similarly $|v([0,a])-v([0,b])| \leq \e$. Putting everything together, we obtain that
$$v([0,b]) \leq v([0,a]) + \e < v([a,1]) \leq v([b,1]) + \e$$
which contradicts the assumption that cut $b$ has label \lplus.

Since a polynomial-time algorithm which queries the polynomial-time algorithm of the input $O\left(\log \frac{L}{\e}\right)$ times is a polynomial-time algorithm, we immediately obtain the polynomial-time upper bound for the white-box model.

For the black-box model, the algorithm immediately gives us the upper bound, whereas the lower bound follows from our general reduction from $nD$-\bu (\cref{prop:BU2CH}), and the query lower bounds for the latter problem obtained through \cref{lem:TUtoBU-general} below. In more detail, $1D$-\bu (and thus $\e$-\ch\ with a single agent) inherits its query complexity lower bounds from $1D$-\tucker, which can be easily seen to require at least $\Omega(\log N)$ queries in the worst-case. The latter bound naturally translates to a $\Omega(\log (L/\e))$ bound for $\e$-\ch. We also remark that the upper bound holds for any version of the problem with general valuations, even in the weak black-box model, whereas the lower bound holds even for normalised general valuations and for the standard black-box model.
\end{proof}

We now move to our results for two or more agents with general valuations. Here we obtain a $\ppa$-completeness result for $\e$-\ch, as well as exponential bounds on the query complexity of the problem. Our results demonstrate that for general valuations, even in the case of two agents, the problem is intractable in both the black-box and the white-box model. The main technical result of the section is the following pivotal lemma, proved at the end of this section.

\begin{lemma}\label{lem:TUtoBU-general}
For any constant $n \geq 1$, $nD$-\tucker reduces to normalised $nD$-\bu, via an efficient black-box reduction.
\end{lemma}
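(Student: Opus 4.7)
The plan is to reduce $nD$-\tucker\ to normalised $nD$-\bu\ by interpolating a given labelling $\ell: [N]^n \to \{\pm 1,\ldots,\pm n\}$ into a Lipschitz-continuous odd function $F: B^{n+1} \to B^n$ whose approximate zeros on $\partial B^{n+1}$ encode Tucker solutions. The construction proceeds in three stages: interpolate $\ell$ over the cube $[-1,1]^n$, lift the interpolant to an odd function on $B^{n+1}$ using the extra coordinate, and patch the normalisation condition at the two antipodal corners.

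First I would embed $[N]^n$ into $[-1,1]^n$ via the affine map $p \mapsto \tilde p := (2p - N - 1)/(N-1)$, so that the grid antipode $\bar p$ maps to $-\tilde p$. Choosing an antipodally-symmetric simplicial refinement of the image grid, I define $f: [-1,1]^n \to B^n$ by setting $f(\tilde p) = \phi(\ell(p))$ at each grid vertex, where $\phi(\pm i) := \pm e_i$, and extending linearly on each simplex. Then $f$ is Lipschitz with parameter $O(N)$ and, by antipodal antisymmetry of $\ell$ on $\partial[N]^n$, odd on $\partial[-1,1]^n$. Writing $B^{n+1} = [-1,1]^n \times [-1,1]$, I next lift $f$ via
\[
F_0(y,t) \;=\; \tfrac{1+t}{2}\, f(y) \;+\; \tfrac{1-t}{2}\, \bigl(-f(-y)\bigr),
\]
which is immediately odd in $(y,t)$, keeps Lipschitz parameter $O(N)$, equals $\pm f(\pm y)$ on the two $t$-extreme faces, and equals $f(y)$ on the remaining side faces (using oddness of $f$ on $\partial[-1,1]^n$). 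To enforce $F(1,\ldots,1)=(1,\ldots,1)$, I apply a bump-function correction in antipodal neighbourhoods of $(\pm 1,\ldots,\pm 1)$, convex-combining $F_0$ with the constant $\pm(1,\ldots,1)$ via a Lipschitz bump that is one at the corner and zero outside a small ball; taking the bump sufficiently localised keeps $\|F\|_\infty$ bounded below by a positive constant on the patched region, so no spurious approximate zeros arise.

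For correctness, suppose $x \in \partial B^{n+1}$ lies outside the normalisation patch and satisfies $\|F(x)\|_\infty \leq \e$ for a constant $\e < 1/(2n)$. Then $F(x) = \pm f(\pm y)$ for some $y \in [-1,1]^n$; expanding $f(y) = \sum_{j=0}^n \lambda_j\, \phi(\ell_j)$ as the barycentric combination on the enclosing simplex, each $\phi(\ell_j)$ lies in $\{\pm e_1,\ldots,\pm e_n\}$. If no two vertices of the simplex carry complementary labels $\{+i,-i\}$, then signs agree within each coordinate direction, and a pigeonhole over the at most $n$ absolute values that appear forces $\|f(y)\|_\infty \geq 1/n$, contradicting $\e < 1/(2n)$. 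Hence two grid vertices of the enclosing simplex lie at $\ell_\infty$-distance at most one and carry opposite Tucker labels, yielding the required \tucker\ solution in constant time.

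The reduction is efficient and black-box: each query to $F$ requires locating the enclosing simplex and querying $\ell$ at its $n+1$ vertices, a constant number of calls for fixed $n$, and in the white-box model compiles into a polynomial-size arithmetic circuit for $F$ from a Boolean circuit for $\ell$. Because the Lipschitz parameter scales as $L = O(N)$ while the required accuracy $\e$ is a constant depending only on $n$, we have $N = \Theta(L/\e)$, so the \ppa-hardness of \tucker\ and the $\Omega(N^{n-1})$ lower bound of \cref{lem:tucker-query} transfer respectively into \ppa-hardness and an $\Omega((L/\e)^{n-1})$ query lower bound for normalised \bu. I expect the main technical obstacle to be the surgical normalisation step — enforcing $F(1,\ldots,1)=(1,\ldots,1)$ while simultaneously respecting oddness, Lipschitz-continuity, and the absence of new approximate zeros near the corners — though the antipodally-paired bump-function modification described above should handle it cleanly.
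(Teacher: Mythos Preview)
Your proposal is correct and follows the same three-stage architecture as the paper: interpolate the Tucker labelling over an $n$-cube, lift to $B^{n+1}$ via the convex-combination formula $F_0(y,t) = \tfrac{1+t}{2}f(y) + \tfrac{1-t}{2}(-f(-y))$, and then repair the normalisation condition $F(1,\ldots,1)=(1,\ldots,1)$. The difference lies in that last step. Rather than patching with a bump function at the $(n+1)$-dimensional level, the paper embeds the Tucker grid into the \emph{inner} cube $[-1/2,1/2]^n$ and extends the interpolant outward to $[-1,1]^n$ by an explicit piecewise-affine map so that the corner already maps to the all-ones vector (after relabelling so that the corner grid point carries label $+1$); only then does it apply the lift to $B^{n+1}$. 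Your bump-function approach also works, provided the bump radius is taken $O(1/N)$ so that $F_0$ stays within $O(1)$ of a fixed unit vector throughout the patch (keeping $\|F\|_\infty$ bounded below) while the Lipschitz constant remains $O(N)$. One further difference worth noting: the paper scales the label vectors by $\delta\cdot n$ with $\delta=\min\{2\varepsilon,1\}$ and then truncates to $[-\delta,\delta]^n$, which makes the reduction valid for \emph{any} target $\varepsilon \in (0,1)$ rather than only $\varepsilon < 1/(2n)$; this extra flexibility is exactly what yields part~(a) of \cref{thm:general-BU-hard}, though it is not needed for the lemma as stated.
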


Now we state our main theorem about the computational/query complexity of the $\e$-\ch\ problem, as well as a corresponding theorem for $nD$-\bu. The proofs follow from \cref{lem:tuckerppa,lem:tucker-query} characterising the complexity of $nD$-\tucker and the following chain of reductions (where ``$\leq$'' denotes an efficient black-box reduction from the problem on the left-hand side to the problem on the right-hand side). 

\begin{mdframed}[backgroundcolor=white!90!gray,
      leftmargin=\dimexpr\leftmargin-20pt\relax,
      innerleftmargin=4pt,
      innertopmargin=0pt,
      skipabove=5pt,skipbelow=5pt]
\begin{align*}
\begin{small}
nD\text{-}\tucker  \underset{\cshref{lem:TUtoBU-general}}{\leq} nD\text{-}\bu \underset{\cshref{prop:BU2CH}}{\leq} \e\text{-}\ch \underset{\cshref{prop:CH2TU}}{\leq} nD\text{-}\tucker
\end{small}
\end{align*}
\end{mdframed}\vskip 5pt

\noindent The specific parameters that appear in the bounds below follow from the proof of \cref{lem:TUtoBU-general}.

\begin{theorem}\label{thm:CH-general-2agents}
Let $n \geq 2$ be any constant. Then,
\begin{itemize}
\item[-] $\e$-\ch\ with $n$ normalised general agents is \ppa-complete. This remains the case, even if (a) we fix $\varepsilon \in (0,1)$, or (b) we fix $L \geq 3(n+1)$;\smallskip
\item[-] there exists a constant $c > 0$ such that for any $\varepsilon \in (0,1)$ and any $L \geq 3(n+1)$ with $L/\varepsilon \geq c$, the query complexity of $\e$-\ch\ with $n$ normalised general agents is $\Theta ((L/\varepsilon)^{n-1})$.
\end{itemize}
\end{theorem}

\smallskip

\begin{theorem}\label{thm:BU-2agents}
Let $n \geq 2$ be any constant. Then,
\begin{itemize}
\item[-] normalised $nD$-\bu is \ppa-complete. This remains the case, even if (a) we fix $\varepsilon \in (0,1)$, or (b) we fix $L \geq 3$; \smallskip
\item[-] there exists a constant $c > 0$ such that for any $\varepsilon \in (0,1)$ and any $L \geq 3$ with $L/\varepsilon \geq c$, the query complexity of normalised $nD$-\bu is $\Theta ((L/\varepsilon)^{n-1})$.
\end{itemize}
\end{theorem}

\noindent In both cases, the lower bounds hold even for the normalised versions of the problems, while the upper bounds hold even for the more general, non-normalised, versions.

\subsection{Reducing \texorpdfstring{$\boldsymbol{nD}$}{nD}-Tucker to normalised \texorpdfstring{$\boldsymbol{nD}$}{nD}-Borsuk Ulam (Proof of \cref{lem:TUtoBU-general})}\label{sec:TUtoBU-general}

Let $n \geq 1$ be any constant. Consider an instance $\ell: [N]^n \to \{\pm 1, \pm 2, \dots, \pm n\}$ of $nD$-\tucker. Let $\varepsilon \in (0,1)$. We will construct a normalised $nD$-\bu function $F: B^{n+1} \rightarrow B^n$ that is Lipschitz-continuous with Lipschitz parameter $L=\max\{3,4n^2(N-1)\varepsilon+1\}$ and such that any $x \in \partial(B^{n+1})$ with $\|F(x)\|_\infty \leq \varepsilon$ yields a solution to the $nD$-\tucker instance.

Let $\delta = \min\{2\varepsilon,1\}$. Note that $\delta \in (0,1]$ and $\varepsilon < \delta < 2\varepsilon$. Without loss of generality, we can assume that for $p=(N,N,\dots,N)$ it holds $\ell(p) = +1$. Indeed, it is easy to see that we can rename the labels to achieve this, without introducing any new solutions.

The remainder of the proof will proceed in three steps: In Step 1, we will interpolate the $nD$-\tucker instance, to obtain a continuous function on $[-1/2,1/2]^n$, in Step 2, we extend this function to the whole domain $[-1,1]^n$, and in Step 3 we further extend to $[-1,1]^{n+1}$, to obtain an instance of the normalised $nD$-\bu problem.\medskip

\noindent\textbf{Step 1: Interpolating the $\boldsymbol{nD}$-\tucker instance.}
The first step is to embed the $nD$-\tucker grid $[N]^n$ in $[-1/2,1/2]^n$, define the value of the function at every grid point according to the labelling function $\ell$ and then interpolate to obtain a continuous function $f: [-1/2,1/2]^n \to [-\delta \cdot n,\delta \cdot n]$.

We embed the grid $[N]^n$ in $[-1/2,1/2]^n$ in a straightforward way, namely $p \in [N]^n$ corresponds to $\widehat{p} \in [-1/2,1/2]^n$ such that $\widehat{p}_j = -1/2 + (p_j-1)/(N-1)$ for all $j \in [n]$. Note that antipodal grid points exactly correspond to antipodal points in $[-1/2,1/2]^n$. In other words, $p$ and $q$ are antipodal on the grid, if and only if $\widehat{p} = - \widehat{q}$.

Next we define the value of the function $f: [-1/2,1/2]^n \to [-\delta \cdot n,\delta \cdot n]$ at the embedded grid points as follows
$$f(\widehat{p}) = \delta \cdot n \cdot e_{\ell(p)}$$
for all $p \in [N]^n$. For $i \in [n]$, $e_{+i}$ denotes the $i$th unit vector in $\mathbb{R}^n$, and $e_{-i} := - e_{+i}$. We then use Kuhn's triangulation on the embedded grid to interpolate between these values and obtain a function $f: [-1/2,1/2]^n \to [-\delta \cdot n,\delta \cdot n]$ (see \cref{sec:kuhn} for more details). We obtain:
\begin{itemize}
    \item $f$ is antipodally anti-symmetric on the boundary of $[-1/2,1/2]^n$, i.e., $f(-x) = -f(x)$ for all $x \in \partial([-1/2,1/2]^n)$.
    \item $f$ is Lipschitz-continuous with Lipschitz parameter $2n^2(N-1)\delta$, since the grid size is $1/(N-1)$ and $\|f(\widehat{p})\|_\infty \leq \delta \cdot n$ for all $p \in [N]^n$.
    \item Any $x \in [-1/2,1/2]^n$ such that $\|f(x)\|_\infty \leq \varepsilon$ must lie in a Kuhn simplex that contains two grid points $p,q \in [N]^n$ such that $\ell(p) = - \ell(q)$, i.e., a solution to the $nD$-\tucker instance. Indeed, let $p^0,p^1, \dots, p^n \in [N]^n$ be the grid points of the Kuhn simplex containing $x$. If $\{\ell(p^0),\ell(p^1),\dots,\ell(p^n)\}$ does not contain two opposite labels, then all the points in $V=\{f(\widehat{p^0}),f(\widehat{p^1}),\dots,f(\widehat{p^n})\}$ lie in the same orthant of $\mathbb{R}^n$. Since $\|f(\widehat{p^i})\|_\infty = \delta \cdot n$ for all $i$, it follows that any convex combination $v$ of vectors in $V$ must be such that $\|v\|_1 \geq \delta \cdot n$, and thus $\|v\|_\infty \geq \delta$. As a result, if $\|f(x)\|_\infty \leq \varepsilon < \delta$, then $\{\ell(p^0),\ell(p^1),\dots,\ell(p^n)\}$ must contain two opposite labels.
    \item $f(1/2,1/2,\dots,1/2) = \delta \cdot n \cdot e_{+1} = (\delta \cdot n,0,0,\dots,0)$.
\end{itemize}

Now, we define $g: [-1/2,1/2]^n \to [-\delta,\delta]^n$ to be the truncation of $f$ to $[-\delta,\delta]^n$, namely
$$g_i(x) = \min\{\delta, \max \{-\delta, f_i(x)\}\}.$$
It is not hard to see that $g$ is also antipodally anti-symmetric, Lipschitz-continuous with Lipschitz parameter $2n^2(N-1)\delta$ and $g(1/2,1/2,\dots,1/2) = \delta \cdot e_{+1}$. Furthermore, if $x \in [-1/2,1/2]^n$ is such that $\|g(x)\|_\infty \leq \varepsilon$, then, since $\varepsilon < \delta$, $\|f(x)\|_\infty \leq \varepsilon$, and thus $x$ again yields a solution to the $nD$-\tucker instance.\medskip

\noindent\textbf{Step 2: Extending to $\boldsymbol{[-1,1]^n}$.}
The goal of the next step is to define a function $h: [-1,1]^n \to [-1,1]^n$ that extends $g$ and ensures that $h(1,1,\dots,1)=(1,1,\dots,1)$, while maintaining its other properties. For $x \in [-1,1]^n$ we let $T(x) \in [-1/2,1/2]^n$ denote its truncation to $[-1/2,1/2]^n$, i.e., $[T(x)]_i = \min\{1/2, \max \{-1/2, x_i\}\}$ for all $i \in [n]$. The function $h: [-1,1]^n \to [-1,1]^n$ is defined as
\begin{equation*}
h(x) = \left\{\begin{tabular}{ll}
    $(2\min_j x_j - 1) \cdot \mathbb{1} + (2 - 2\min_j x_j) \cdot \delta \cdot e_{+1}$ & if $x_i \geq 1/2$ for all $i$ \\
    $(2\min_j (-x_j) - 1) \cdot (-\mathbb{1}) + (2 - 2\min_j (-x_j)) \cdot \delta \cdot e_{-1}$ & if $x_i \leq -1/2$ for all $i$ \\
    $g(T(x))$ & otherwise
\end{tabular}\right.
\end{equation*}
where $\mathbb{1} \in \mathbb{R}^n$ denotes the all-ones vector, i.e., $\mathbb{1}=(1,1,\dots,1)$. Clearly, it holds that $h(1,1,\dots,1) = \mathbb{1}$. It is also easy to see that $h(-x) = -h(x)$ for all $x \in \partial ([-1,1]^n)$, in particular because $T(-x)=-T(x)$. Furthermore, if $\|h(x)\|_\infty \leq \varepsilon$, then it must be that $\|g(T(x))\|_\infty \leq \varepsilon$, which yields a solution to the $nD$-\tucker instance. Indeed, if $x_i \geq 1/2$ for all $i$, then $$h_1(x) = (2\min_j x_j - 1) \cdot 1 + (2 - 2\min_j x_j) \cdot \delta \geq \delta > \varepsilon$$ so $\|h(x)\|_\infty > \varepsilon$. By the same argument, if $x_i < -1/2$ for all $i$, then we also have $\|h(x)\|_\infty > \varepsilon$.

Since $g(1/2,1/2,\dots,1/2) = \delta \cdot e_{+1}$ and $g(-1/2,-1/2,\dots,-1/2) = \delta \cdot e_{-1}$, it is easy to see that $h$ is continuous. Furthermore, since for any $x,y \in [-1,1]^n$ it holds that $\|T(x)-T(y)\|_\infty \leq \|x-y\|_\infty$, it is easy to see that $h$ is $2n^2(N-1)\delta$-Lipschitz-continuous outside of $\{x \in [-1,1]^n | x_i \geq 1/2 \text{ for all } i \in [n]\} \cup \{x \in [-1,1]^n | x_i \leq -1/2 \text{ for all } i \in [n]\}$. For any $y,z \in \{x \in [-1,1]^n | x_i \geq 1/2 \text{ for all } i \in [n]\}$, it holds that $|h_i(y)-h_i(z)| = 2|\min_j y_j - \min_j z_j| \leq 2\|y-z\|_\infty$ for $i > 1$, and $|h_1(y)-h_1(z)| = 2 (1-\delta) |\min_j y_j - \min_j z_j| \leq 2\|y-z\|_\infty$. Thus, $h$ is $2$-Lipschitz-continuous on $\{x \in [-1,1]^n | x_i \geq 1/2 \text{ for all } i \in [n]\}$ and, by the same argument, also on $\{x \in [-1,1]^n | x_i \leq -1/2 \text{ for all } i \in [n]\}$.

As a result, $h$ is Lipschitz-continuous on $[-1,1]^n$ with Lipschitz parameter $\max\{2,2n^2(N-1)\delta\}$. Indeed, consider any $x,y \in [-1,1]^n$. If $x_i \geq 1/2$ and $y_i \leq -1/2$ for all $i$, then $\|x-y\|_\infty \geq 1$, and thus $\|h(x) - h(y)\|_\infty \leq 2 \leq 2 \|x-y\|_\infty$. By symmetry, the only remaining case that we need to check is when $x_i \geq 1/2$ for all $i$, and $y$ is such that there exists $i$ with $y_i < 1/2$ and there exists $i$ with $y_i > -1/2$. In that case, we consider the segment $[x,y]$ from $x$ to $y$, and let $z \in [x,y]$ be the point that is the furthest away from $x$ but such that $z_i \geq 1/2$ for all $i$. Note that there must exist $i$ such that $z_i=1/2$. This means $h(z)=g(T(z))$ and thus $\|h(z)-h(y)\|_\infty \leq 2n^2(N-1)\delta \|z-y\|_\infty$. On the other hand, we have $\|h(x)-h(z)\|_\infty \leq 2\|x-z\|_\infty$. Putting these two expressions together, we obtain that $\|h(x)-h(y)\|_\infty \leq \max\{2,2n^2(N-1)\delta\} (\|x-z\|_\infty + \|z-y\|_\infty) = \max\{2,2n^2(N-1)\delta\} \|x-y\|_\infty$. Here we used the fact that $z \in [x,y]$, which means that there exists $t \in [0,1]$ such that $z = x+t(y-x)$ and thus
$$\|x-z\|_\infty + \|z-y\|_\infty = t \|x-y\|_\infty + (1-t) \|x-y\|_\infty = \|x-y\|_\infty.$$\medskip

\noindent\textbf{Step 3: Extending to $\boldsymbol{[-1,1]^{n+1}}$.}
The final step is to define a normalised $nD$-\bu function $F: [-1,1]^{n+1} \to [-1,1]^n$ such that any $x \in \partial([-1,1]^{n+1})$ with $\|F(x)\|_\infty \leq \varepsilon$ yields a solution to the $nD$-\tucker instance. For $x \in [-1,1]^{n+1}$ we write $x=(x',x_{n+1})$, where $x' \in [-1,1]^n$. We define
$$F(x) = F(x',x_{n+1}) = \frac{1+x_{n+1}}{2} h(x') + \frac{1-x_{n+1}}{2} (-h(-x')).$$
Since $h(x'),-h(-x') \in [-1,1]$ and $F(x)$ is a convex combination of these two, it follows that $F(x) \in [-1,1]^n$. Furthermore, we have $F(1,1,\dots,1) = h(1,1,\dots,1) = \mathbb{1}$. $F$ is an odd function, since
$$F(-x) = F(-x',-x_{n+1}) = \frac{1-x_{n+1}}{2} h(-x') + \frac{1+x_{n+1}}{2} (-h(x')) = -F(x',x_{n+1}) = -F(x).$$
Consider any $x = (x',x_{n+1}) \in \partial([-1,1]^{n+1})$ with $\|F(x)\|_\infty \leq \varepsilon$. Since $F$ is an odd function, we can assume that $x_{n+1} \geq 0$ (otherwise just use $-x$ instead of $x$). If $x_{n+1}=1$, then $F(x',x_{n+1}) = h(x')$, and thus $\|h(x')\|_\infty \leq \varepsilon$, which yields a solution to the $nD$-\tucker instance. If $x_{n+1} \in [0,1)$, then $x' \in \partial([-1,1]^n)$ and thus $h(x') = -h(-x')$. This implies that $F(x',x_{n+1}) = h(x')$ in this case too.

Finally, let us determine the Lipschitz parameter of $F$. Let $x,y \in [-1,1]^{n+1}$. We have
\begin{equation*}
\begin{split}
\|F(x',x_{n+1}) - F(y',x_{n+1})\|_\infty &\leq \frac{1+x_{n+1}}{2} \|h(x')-h(y')\|_\infty + \frac{1-x_{n+1}}{2} \|h(-x') - h(-y')\|_\infty\\
&\leq \max\{2,2n^2(N-1)\delta\} \|x'-y'\|_\infty
\end{split}
\end{equation*}
and also
$$\|F(y',x_{n+1}) - F(y',y_{n+1})\|_\infty \leq \frac{|x_{n+1}-y_{n+1}|}{2} (\|h(y')\|_\infty + \|h(-y')\|_\infty) \leq |x_{n+1}-y_{n+1}|.$$
Putting these two expressions together, it follows that
$$\|F(x) - F(y)\|_\infty \leq \max\{3,2n^2(N-1)\delta+1\} \|x-y\|_\infty.$$
Note that $\max\{3,2n^2(N-1)\delta+1\} \leq \max\{3,4n^2(N-1)\varepsilon+1\}$.\medskip

In the black-box model, in order to answer one query to $F$, we have to answer two queries to $h$, i.e., two queries to $g$. In order to answer a query to $g$, we have to answer one query to $f$, i.e., $n+1$ queries to the labelling function $\ell$ (in order to interpolate). Thus, one query to $F$ requires $2(n+1)$ queries to $\ell$. Since $n$ is a constant, the query lower bounds from $nD$-\tucker carry over to normalised $nD$-\bu.

In the white-box model, the reduction actually gives us a way to construct an arithmetic circuit that computes $F$, if we are given a Boolean circuit that computes $\ell$. Indeed, using standard techniques \citep{chen2009settling,Daskalakis2009}, the execution of the Boolean circuit on some input can be simulated by the arithmetic circuit. Furthermore, the input bits for the Boolean circuit can be obtained by using the $<$ gate. All the other operations that we used to construct $F$ can be computed by the arithmetic gates $+,-,\times, \max, \min, <$ and rational constants. Thus, we obtain a polynomial-time many-one reduction from $nD$-\tucker to normalised $nD$-\bu for all $n \geq 1$.


\section{Monotone Valuations}\label{sec:monotone}

In this section, we present our results for agents with monotone valuations. In contrast to the results of \cref{sec:general}, here we prove that for two agents with monotone valuations, the problem is solvable in polynomial time and with a polynomial number of queries, and in fact this result holds even if only one of the two agent has a monotone valuation and the other has a general valuation. For three or more agents however, the problem becomes \ppa-complete once again, and we obtain a corresponding exponential lower bound on its query complexity.

\subsection{An efficient algorithm for two monotone agents}

We start with our efficient algorithm for the case of two agents, which is a polynomial-time algorithm in the white-box model, as well as an algorithm of polynomial query complexity in the black-box model; see Algorithm~\ref{alg:two-agents-monotone}. The algorithm is based on a \emph{nested binary search} procedure. At the higher level, we are performing a binary search on the position of the left cut of a solution. At the lower level, for any fixed position for the left cut, we perform another binary search in order to find a right cut such that the pair of cuts forms a solution for the first agent; as we have already seen this can be efficiently done if the agent has monotone valuation. Intuitively, we are moving on the ``indifference curve'' of the valuation function of the agent with the monotone valuation (see the red zig-zag line in \cref{fig:monotone}) until we reach a solution. We decide how to move on this curve by checking the preferences of the second agent. \medskip

\noindent Before we proceed, we draw an interesting connection with \emph{Austin's moving knife procedure} \citep{austin1982sharing}, an Exact-Division procedure for two agents with general valuations. The procedure is based on two moving knifes which one of the two agents simultaneously and continuously slides across the cake, maintaining that the corresponding cut positions ensure that she is satisfied with the partition. At some point during this process, the other agent becomes satisfied, which is guaranteed by the intermediate value theorem. Our algorithm can be interpreted as a discrete time implementation of this idea and quite interestingly, it results in a polynomial-time algorithm when one of the two agents has a monotone valuation, whereas it is computationally hard when both agents have general valuations, as shown in \cref{sec:general}. On a more fundamental level, this demonstrates the intricacies of transforming moving-knife fair division protocols into discrete algorithms.\medskip

\noindent The main theorem of this section is the following.

\begin{theorem}\label{thm:CH-monotone-2agent}
For two agents with monotone valuation functions, $\e$-\ch\ is solvable in polynomial time and has query complexity $O\left(\log^2 \frac{L}{\e}\right)$, even in the weak black-box model. This result holds even if one of the two agents has a general valuation.
\end{theorem}

\begin{figure}
  \begin{center}
      \includegraphics[width=\linewidth]{monotone.pdf}
  \end{center}
    \caption{Visualisation of Algorithm~\ref{alg:two-agents-monotone}. (a) depicts our initial assumptions. The red line shows where Agent 1 is indifferent. The blue signs on $(0,p^{\star})$ and $(p^{\star}, 1)$ show the (weak) preferences of Agent 2 under these pairs of cuts. (b) shows a possible position for the cuts $x_r - x_\ell \leq \frac{\e}{8L}$. The arrows show how the difference between the values of the positive piece and the negative piece change between the four possible combinations of pairs of cuts. (c) depicts the actual cuts on the cake: the green parts have label \lplus and the yellow parts have label \lminus.}
    \label{fig:monotone}
\end{figure}

Before we proceed with the description of our algorithm and its analysis, let us begin with some conventions that will make the presentation easier. Since we have to make two cuts, we denote \lcut the position of the leftmost cut and \rcut the position of the rightmost cut. So, $0 \leq \lcut \leq \rcut \leq 1$. In addition, the labels of the corresponding intervals are as follows: intervals $[0,\lcut]$ and $[\rcut, 1]$ have label \lplus, forming the positive piece, and interval $[\lcut, \rcut]$ has label \lminus, forming the negative piece. Given a pair of cuts $(\lcut, \rcut)$, we say that agent $i$:
\begin{itemize}
    \item weakly prefers the positive piece, if $v_i([0,\lcut] \cup [\rcut, 1]) \geq v_i([\lcut,\rcut]) - \e/2$;  
    \item weakly prefers the negative piece, if $v_i([\lcut,\rcut]) \geq v_i([0,\lcut] \cup [\rcut, 1]) - \e/2$; 
    \item is indifferent if $|v_i([\lcut,\rcut]) - v_i([0,\lcut] \cup [\rcut, 1])| \leq \e/2$.
\end{itemize}
In addition, let $p^{\star} \in [0,1]$ be such that $|v_1([0,p^{\star}]) - v_1([p^{\star},1])| \leq \e/2$. Note that we can efficiently compute $p^{\star}$ using \cref{thm:CH-general-1agent}.
The final assumption we need to make is regarding the preferences of Agent 2 for the two special pairs of $(0,p^{\star})$ and $(p^{\star}, 1)$.
Observe that both pairs of cuts create the same pieces over $[0,1]$ and only change the labels of the pieces. 
Hence, if Agent 2 weakly prefers the positive piece under the pair of cuts $(0,p^{\star})$, he {\em has to} weakly prefer the negative piece under the pair of cuts $(p^{\star},1)$. In the description of the algorithm we will assume that this is indeed the case, i.e., he weakly prefers the positive piece under $(0,p^{\star})$ and the negative piece under $(p^{\star},1)$. (The other case can be handled analogously.)
Using the above notation and assumptions we can now state Algorithm~\ref{alg:two-agents-monotone}.

\begin{algorithm}[h]
\SetAlgoLined
\caption{\e-\ch\ for two agents with monotone valuations}\label{alg:two-agents-monotone}
Set $x_\ell \leftarrow 0$ and $x_r \leftarrow p^{\star}$ \\
Set $y_\ell \leftarrow p^{\star}$ and $y_r \leftarrow 1$ \\
\While{$x_r-x_\ell > \frac{\e}{8L}$}{
Find $y \in [y_\ell,y_r]$ such that Agent 1 is indifferent under the pair of cuts $(\frac{x_\ell+x_r}{2},y)$ \label{step-w1}\\
\uIf{Agent 2 weakly prefers the positive piece under $(\frac{x_\ell+x_r}{2},y)$}{
Set $x_\ell \leftarrow \frac{x_\ell+x_r}{2}$ and $y_\ell \leftarrow y$
}
\ElseIf{Agent 2 weakly prefers the negative piece under $(\frac{x_\ell+x_r}{2},y)$}{
Set $x_r \leftarrow \frac{x_\ell+x_r}{2}$ and $y_r \leftarrow y$
}
}
\While{$y_r-y_\ell > \frac{\e}{8L}$}{
Find $x \in [x_\ell,x_r]$ such that Agent 1 is indifferent under the pair of cuts $(x,\frac{y_\ell+y_r}{2})$ \label{step-w2}\\
\uIf{Agent 2 weakly prefers the positive piece under $(x,\frac{y_\ell+y_r}{2})$}{
Set $y_\ell \leftarrow \frac{y_\ell+y_r}{2}$ and $x_\ell \leftarrow x$
}
\ElseIf{Agent 2 weakly prefers the negative piece under $(x,\frac{y_\ell+y_r}{2})$}{
Set $y_r \leftarrow \frac{y_\ell+y_r}{2}$ and $x_r \leftarrow x$
}
}
Output $(x_\ell,y_\ell)$
\end{algorithm}

\begin{proof}[Proof of \cref{thm:CH-monotone-2agent}]
To prove the correctness of Algorithm~\ref{alg:two-agents-monotone} we will prove that the following invariants are maintained through all iterations of the algorithm.
\begin{enumerate}
    \item Agent 1 is indifferent under the pair of cuts $(x_\ell,y_\ell)$, and also under the pair of cuts $(x_r,y_r)$.
    \item Agent 2 weakly prefers the positive piece under the pair of cuts $(x_\ell,y_\ell)$, and weakly prefers the negative piece under the pair of cuts $(x_r,y_r)$.
\end{enumerate}

Assuming that Invariants 1 and 2 hold, it follows that the algorithm outputs a correct solution. Indeed, Agent 1 is indifferent under the pair of cuts $(x_\ell,y_\ell)$ by Invariant $1$. By Invariant 2, Agent 2 weakly prefers the positive piece under the pair of cuts $(x_\ell,y_\ell)$, i.e., $v_2([0,x_\ell] \cup [y_\ell, 1]) \geq v_2([x_\ell,y_\ell]) - \e/2 \geq v_2([x_\ell,y_\ell]) - \e$. Thus, it suffices to show that $v_2([x_\ell,y_\ell]) \geq v_2([0,x_\ell] \cup [y_\ell, 1]) - \e$. By Invariant 2, Agent 2 weakly prefers the negative piece under the pair of cuts $(x_r,y_r)$, i.e., $v_2([0,x_r] \cup [y_r, 1]) \geq v_2([x_r,y_r]) - \e/2$. Since $|x_r-x_\ell| \leq \e/8L$ and $|y_r-y_\ell| \leq \e/8L$, by the $L$-Lipschitz continuity of $v_2$ it follows that $|v_2([0,x_r] \cup [y_r, 1]) - v_2([0,x_\ell] \cup [y_\ell, 1])| \leq \e/4$ and $|v_2([x_r,y_r]) - v_2([x_\ell,y_\ell])| \leq \e/4$. As a result, $v_2([0,x_\ell] \cup [y_\ell, 1]) \geq v_2([x_\ell,y_\ell]) - \e$.

Next, we prove that Invariants 1 and 2 hold. First of all, note that they hold at the start of the algorithm by the choice of $p^{\star}$ and from the fact that Agent 2 weakly prefers the positive piece under $(0,p^{\star})$ and the negative piece under $(p^{\star},1)$. Both invariants are then automatically maintained by construction of the algorithm. We just have to argue that Steps~\ref{step-w1} and~\ref{step-w2} are possible, i.e., that such $y$ (respectively $x$) exists. Consider Step~\ref{step-w1} first. We apply the intermediate value theorem as follows: for the pair of cuts $(\frac{x_\ell+x_r}{2},y)$, Agent 1 weakly prefers the positive piece when $y = y_\ell$, and weakly prefers the negative piece when $y = y_r$. Thus, by continuity of the valuation, there exists $y \in [y_\ell,y_r]$ such that Agent 1 is indifferent between the two pieces. For the first point, note that Agent 1 is indifferent for the pair of cuts $(x_\ell,y_\ell)$ (by Invariant 1 in the previous iteration), and thus weakly prefers the positive piece for the pair of cuts $(\frac{x_\ell+x_r}{2},y_\ell)$ by monotonicity, since the positive piece has increased. For the second point, note that Agent 1 is indifferent for the pair of cuts $(x_r,y_r)$ (again by Invariant 1 in the previous iteration), and thus weakly prefers the negative piece for the pair of cuts $(\frac{x_\ell+x_r}{2},y_r)$ by monotonicity, since the negative piece has increased. The same argument also applies to Step~\ref{step-w2}. Finally, observe that we have not made use of the monotonicity of Agent 2's valuation anywhere in our proof. Thus, the algorithm also works if Agent 2 has a general valuation.

In order to bound the running time of Algorithm~\ref{alg:two-agents-monotone}, we need to bound the running time of: the number of iterations the algorithm needs  in each while loop, and the time we need to perform Step~\ref{step-w1} and Step~\ref{step-w2}. 
Firstly, observe that Steps~\ref{step-w1} and~\ref{step-w2} can be tackled using the algorithm from \cref{thm:CH-general-1agent}. This is because we can view the problem as a special case where we have one agent and we need to place a single cut in a specific subinterval where we know that a solution exists. Thus, each of these steps requires $O(\log \frac{L}{\e})$ time. In addition, observe in every while loop we essentially perform a binary search. Thus, after $O(\log\frac{L}{\e})$ iterations we get that $x_r-x_\ell \leq \frac{\e}{8L}$, since in every iteration the distance between $x_r$ and $x_\ell$ decreases by a factor of 2. Similarly, after $O(\log\frac{L}{\e})$ iterations we get that $|y_r-y_\ell| \leq \frac{\e}{8L}$ as well. Hence, every while loop requires $O(\log^2\frac{L}{\e})$ time. So, Algorithm~\ref{alg:two-agents-monotone} terminates in $O(\log^2\frac{L}{\e})$ time. Finally, observe that our algorithm just requires the preferences of Agent 2 for specific pairs of cuts which can be done via two evaluations of $v_2$.

Next we argue that we can implement Algorithm~\ref{alg:two-agents-monotone} using $O(\log^2\frac{L}{\e})$ queries in the black-box model. Observe that Steps~\ref{step-w1} and~\ref{step-w2} can be simulated with $O(\log \frac{L}{\e})$ queries each, as we have already explained in \cref{thm:CH-general-1agent}. In addition, observe that every time we ask Agent 2 for his preferences we only need two evaluation queries.  Since we need $O(\log \frac{L}{\e})$ iterations in every while loop, Algorithm~\ref{alg:two-agents-monotone} needs $O(\log^2\frac{L}{\e})$ queries in total. 
\end{proof}

\subsection{Results for three or more monotone agents}

\noindent We now move on to the case of three or more monotone agents, for which we manage to show that the problem becomes computationally hard and has exponential query complexity. Our results thus show a clear dichotomy on the complexity of $\e$-\ch\ with monotone agents, between the case of two agents and the case of three or more agents.

Again we employ our general approach, but this time we need to prove computational and query-complexity hardness of the \emph{monotone} $nD$-\bu problem; the corresponding impossibility results for $\e$-\ch\ with agents with monotone valuations then follow from our property-preserving reduction (\cref{prop:BU2CH}). To this end, we in fact construct an efficient black-box reduction from $(n-1)D$-\tucker to monotone $nD$-\bu, i.e., we reduce from the corresponding version of $nD$-\tucker of one lower dimension. In order to achieve this, we once again interpolate the $(n-1)D$-\tucker instance to obtain a continuous function, but, this time, we embed it in a very specific lower dimensional subset of the $nD$-\bu domain. We then show that the function can be extended to a monotone function on the whole domain.\\

\noindent The ``drop in dimension'' which is featured in our reduction has the following effects:
\begin{itemize}
    \item[-] Since $1D$-\tucker is solvable in polynomial-time, we can only obtain the \ppa-hardness of monotone $nD$-\bu for $n \geq 3$, and therefore the \ppa-hardness of $\e$-\ch\ for three or more monotone agents.
    \item[-] The query complexity lower bounds that we ``inherit'' from $(n-1)D$-\tucker do not \emph{exactly} match our upper bounds, obtained via the reduction from $\e$-\ch\ to $nD$-\tucker (\cref{prop:CH2TU}).
\end{itemize}

\noindent The main technical contribution of this section is the following lemma, proved at the end of this section.

\begin{lemma}\label{lem:TUtoBU-monotone}
For any constant $n \geq 2$, $(n-1)D$-\tucker reduces to normalised monotone $nD$-\bu in polynomial time, via an efficient black-box reduction.
\end{lemma}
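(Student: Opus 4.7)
The plan is to mirror the three-step structure of Section~\ref{sec:TUtoBU-general} (interpolate, embed, extend) with two modifications forced by monotonicity. Crucially, the reduction starts from $(n-1)D$-\tucker rather than $nD$-\tucker, giving us \emph{two} spare input coordinates in $B^{n+1}$ beyond the $(n-1)$ needed for the Tucker grid; these are what enable us to accommodate coordinate-wise monotonicity and odd symmetry simultaneously.

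I would begin by Kuhn-interpolating the labelling $\ell: [N]^{n-1} \to \{\pm 1, \ldots, \pm(n-1)\}$, exactly as in Section~\ref{sec:TUtoBU-general}, and truncating to obtain a continuous piecewise-linear function $g: [-1/2, 1/2]^{n-1} \to [-\delta, \delta]^{n-1}$ that is antipodally anti-symmetric on its boundary, Lipschitz with parameter $O(n^2 N \delta)$, and whose $\e$-approximate zeros lie in Kuhn simplices containing two oppositely-labelled grid points (hence yield Tucker solutions), provided $\delta$ is slightly larger than $\e$. I would then symmetrise $g \leftarrow (g(\cdot) - g(-\cdot))/2$ so that $g$ is globally odd (at the cost of an $O(1)$-factor loss in the approximation parameter that can be absorbed by re-scaling).

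The embedding and extension steps are combined as follows. Writing a generic input to $F$ as $(y, s, t) \in B^{n-1} \times [-1,1] \times [-1,1]$, I would define $F_i(y, s, t) = \operatorname{clip}_{[-1,1]}\bigl(g_i(\operatorname{clip}_{[-1/2,1/2]}(y)) + M \cdot \phi(y, s, t)\bigr)$ for $i \in [n-1]$ and $F_n(y, s, t) = \operatorname{clip}_{[-1,1]}(M \cdot \phi(y, s, t))$, where $\phi$ is the linear form $\phi(y, s, t) = \sum_{j=1}^{n-1} y_j + (s+t)/2$ and $M$ is a polynomially large constant chosen to exceed the Lipschitz parameter of $g$. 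Intuitively, $\phi$ vanishes on an $n$-dimensional ``Tucker slice'' $S = \{\phi = 0\}$ of $B^{n+1}$ that is antipodally invariant and lies in the ``incomparable'' region of the cube (where a monotone odd function is free to take small values), and the large coefficient $M$ ensures that outside $S$ the output is dominated by the monotone envelope. By construction, $F$ is coordinate-wise monotone (the $M\phi$ term dominates the gradient of $g$ in every direction, and clipping preserves monotonicity), globally odd (each of $g$, $\phi$, and the clipping are odd), Lipschitz with parameter $O(nM) = \operatorname{poly}(n, N, 1/\e)$, and normalised ($F(1, \ldots, 1) = (1, \ldots, 1)$ once $M$ is large enough for the clip to saturate at the corner).

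The main obstacle --- and the heart of the proof --- is verifying that $\e$-approximate zeros of $F$ on $\partial B^{n+1}$ yield Tucker solutions. The idea is that $\|F(x)\|_\infty \leq \e < 1$ forces no coordinate to be saturated, so the pre-clip expressions are themselves small: $|F_n(x)| \leq \e$ pins $|\phi(x)| \leq \e/M$ (i.e.\ $x$ lies close to the slice $S$), and subtracting $F_n$ from each $F_i$ with $i < n$ isolates $g_i$, forcing $\|g(\operatorname{clip}(y))\|_\infty \leq 2\e$ and hence producing a Tucker solution after the constant-factor rescaling of $\e$. The delicate part is ensuring this argument survives on the boundary: because $x \in \partial B^{n+1}$ may have its ``saturated'' coordinate be $s$ or $t$ rather than a Tucker coordinate, one must check that clipping $y$ to $[-1/2,1/2]^{n-1}$ does not create spurious Tucker-free approximate zeros, which is where the symmetrisation of $g$ combined with the antipodal anti-symmetry of $\ell$ on $\partial [N]^{n-1}$ needs to be deployed carefully. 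Once this is done, efficiency follows immediately: in the black-box model each query to $F$ costs $O(n) = O(1)$ queries to $\ell$, and in the white-box model the arithmetic circuit for $F$ is built from the Boolean circuit for $\ell$ via the gate-simulation techniques of Section~\ref{sec:TUtoBU-general}.
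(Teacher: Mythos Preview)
Your overall architecture is right, but the symmetrisation step $g \leftarrow (g(\cdot)-g(-\cdot))/2$ is a genuine error, not merely an $O(1)$ loss in the approximation parameter. The property you need---``$\|g(z)\|_\infty$ small $\Rightarrow$ Tucker solution near $z$''---holds for the Kuhn interpolant because a small convex combination of the label vectors $\delta n\,e_{\ell(p^i)}$ forces two opposite labels among the $p^i$. After symmetrising, however, $\tilde g(z)=0$ merely says $f(z)=f(-z)$, which can happen with no complementary labels anywhere: take any interior simplex whose vertices are all labelled $+1$ and whose antipodal simplex is \emph{also} all labelled $+1$ (nothing forbids this in the interior of $[N]^{n-1}$); then $f(z)=f(-z)=\delta n\,e_1$ and $\tilde g(z)=0$ exactly, yet there is no Tucker solution in sight. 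This is precisely the ``delicate part'' you flag---when the saturated coordinate of $x\in\partial B^{n+1}$ is $s$ or $t$, $\operatorname{clip}(y)$ can land at such an interior $z$---and the antipodal anti-symmetry of $\ell$ cannot rescue it, since that holds only on the boundary of the grid.

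The paper avoids symmetrisation entirely. It first oddifies $g$ via the convex-combination trick of Section~\ref{sec:TUtoBU-general}, spending one extra dimension to get an odd $G:[-1,1]^n\to[-\delta,\delta]^{n-1}$ whose approximate zeros \emph{on $\partial[-1,1]^n$} recover Tucker solutions. It then embeds $G$ (after a truncation-and-rescale $\widehat G$) on the hyperplane $\mathcal D=\{x\in[-1,1]^{n+1}:\sum_i x_i=0\}$, where $x\le y$ forces $x=y$, and extends monotonically off $\mathcal D$ by
\[
F'(x)=\Bigl(1-\tfrac{|S(x)|}{n+1}\Bigr)\,H(\Pi(x))\;+\;C\cdot\tfrac{S(x)}{n+1}\,\mathbb{1}_{n-1},\qquad F_n(x)=C_n\cdot\tfrac{S(x)}{n+1},
\]
with $S(x)=\sum_i x_i$ and $\Pi$ the orthogonal projection onto $\mathcal D$. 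This is close in spirit to your $M\phi$ envelope, but the correctness argument is different and is where the work lies: one shows that $x\in\partial B^{n+1}$ with $\|F(x)\|_\infty\le\varepsilon$ forces $|S(x)|$ small (via $F_n$), hence $\Pi(x)$ is close to $x$ and must have some coordinate of magnitude at least $1/(n+1)$; the prior truncation-and-rescale step was engineered exactly so that this pushes the corresponding argument of $G$ onto $\partial[-1,1]^n$, where the oddified $G$ is known to be faithful. The chain ``boundary of $B^{n+1}$ $\Rightarrow$ large coordinate of $\Pi(x)$ $\Rightarrow$ boundary of the $G$-domain'' is the missing ingredient that your clip-based embedding does not supply.
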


\noindent Similarly to \cref{sec:general}, we then obtain the following two theorems. 

\begin{theorem}\label{thm:CH-monotone-3agents}
Let $n \geq 3$ be any constant. Then,
\begin{itemize}
\item[-] $\e$-\ch\ with $n$ monotone agents is \ppa-complete. This remains the case, even if (a) we fix $\varepsilon \in (0,1)$, or (b) we fix $L \geq 3(n+1)$;\smallskip
\item[-] for any constant $t \in (0,1)$, there exists a constant $c > 0$ such that for any $\varepsilon \in (0,t)$ and any $L \geq 3(n+1)$ with $L/\varepsilon \geq c$, the query complexity of $\e$-\ch\ with $n$ monotone agents is between $\Omega ((L/\varepsilon)^{n-2})$ and $O((L/\varepsilon)^{n-1})$.
\end{itemize}
\end{theorem}

\begin{theorem}\label{thm:BU-monotone}
Let $n \geq 3$ be any constant. Then,
\begin{itemize}
\item[-] monotone $nD$-\bu is \ppa-complete. This remains the case, even if (a) we fix $\varepsilon \in (0,1)$, or (b) we fix $L \geq 3$; \smallskip
\item[-] for any constant $t \in (0,1)$, there exists a constant $c > 0$ such that for any $\varepsilon \in (0,t)$ and any $L \geq 3$ with $L/\varepsilon \geq c$, the query complexity of monotone $nD$-\bu is between $\Omega ((L/\varepsilon)^{n-2})$ and $O ((L/\varepsilon)^{n-1})$.
\end{itemize}
\end{theorem}

The proofs of the theorems follow from \cref{lem:tuckerppa,lem:tucker-query} and the following chain of reductions, which now crucially involve the monotone version of $nD$-\bu\ and $\e$-\ch.

\vskip 5pt
\begin{mdframed}[backgroundcolor=white!90!gray,
      leftmargin=\dimexpr\leftmargin-20pt\relax,
      innerleftmargin=4pt,
      innertopmargin=0pt,
      skipabove=5pt,skipbelow=5pt]
\begin{align*}
\begin{small}
(n-1)D\text{-}\tucker \underset{\strut\mathclap{\cshref{lem:TUtoBU-monotone}}}{\leq}  \text{\emph{monotone} }nD\text{-}\bu \underset{\strut\mathclap{\cshref{prop:BU2CH}}}{\leq} \text{\emph{monotone} }\e\text{-}\ch \underset{\strut\mathclap{\cshref{prop:CH2TU}}}{\leq} nD\text{-}\tucker
\end{small}
\end{align*}
\end{mdframed}\vskip 5pt

\noindent The specific parameters that appear in the bounds follow from the proof of \cref{lem:TUtoBU-monotone}. The lower bounds again hold even for the normalised versions of the problems. There is a small gap between our lower and upper bounds; we conjecture that it should be possible to prove upper bounds that match the lower bounds of \cref{thm:CH-monotone-3agents}, at least up to logarithmic factors, but we leave this for future work.

\subsection{Reducing \texorpdfstring{$\boldsymbol{(n-1)D}$}{(n-1)D}-\tucker to monotone \texorpdfstring{$\boldsymbol{nD}$}{nD}-\bu (Proof of \cref{lem:TUtoBU-monotone})} \label{sec:TUtoBU-monotone}

Let $n \geq 2$ be any fixed constant. Consider an instance $\ell: [N]^{n-1} \to \{\pm 1, \pm 2, \dots, \pm (n-1)\}$ of $(n-1)D$-\tucker. Let $\varepsilon \in (0,1)$. We will construct a monotone normalised $nD$-\bu function $F: [-1,1]^{n+1} \rightarrow [-1,1]^n$ that is Lipschitz-continuous with Lipschitz parameter $L=(4(n+1)^4N\varepsilon + 2)/(\min\{2,1/\varepsilon\}-1)$ and such that any $x \in \partial(B^{n+1})$ with $\|F(x)\|_\infty \leq \varepsilon$ yields a solution to the $(n-1)D$-\tucker instance.\medskip

\noindent\textbf{Step 1: From $\boldsymbol{(n-1)D}$-\tucker to non-normalised $\boldsymbol{(n-1)D}$-\bu.}
Let $\delta = \min \{2\varepsilon,1\}$. Note that $\delta \in (0,1]$ and $\varepsilon < \delta < 2\varepsilon$. The first step of the proof is very similar to the proof of \cref{lem:TUtoBU-general}. Namely, using Step 1 of the proof of \cref{lem:TUtoBU-general}, we construct $g: [-1/2,1/2]^{n-1} \to [-\delta,\delta]^{n-1}$ such that $g$ is antipodally anti-symmetric and $4(n-1)^2(N-1)\varepsilon$-Lipschitz-continuous. Furthermore, any $x \in [-1/2,1/2]^{n-1}$ with $\|g(x)\|_\infty < \delta$ yields a solution to the original $(n-1)D$-\tucker instance. Then, we define $h: [-1,1]^{n-1} \to [-\delta,\delta]^{n-1}$ by $h(x)=g(x/2)$. $h$ has the same properties as $g$, except that it is $2(n-1)^2(N-1)\varepsilon$-Lipschitz-continuous. Note that here the construction of $h$ differs from Step 2 of the proof of \cref{lem:TUtoBU-general}, since we do not want $h$ to be normalised.

Next, we extend $h$ to a (non-normalised) $(n-1)D$-\bu function $G: [-1,1]^n \to [-\delta,\delta]^{n-1}$ using the same construction as in Step 3 of the proof of \cref{lem:TUtoBU-general}. By the same arguments, it holds that $G$ is an odd function and it is Lipschitz-continuous with parameter $2(n-1)^2(N-1)\varepsilon + \delta \leq 2n^2N\varepsilon$. Furthermore, any $x \in \partial([-1,1]^n)$ with $\|G(x)\|_\infty < \delta$ yields a solution to the original $(n-1)D$-\tucker instance.

\bigskip

\noindent On a high-level, the rest of the reduction, which is the most interesting part, works by embedding $G$ in an $n$-dimensional subspace of $\mathbb{R}^{n+1}$ and then carefully extending it to all of $[-1,1]^{n+1}$ in a monotonic way. This $n$-dimensional subspace is
$$\mathcal{D} := \left\{x \in \mathbb{R}^{n+1} \left| \, \sum_{i=1}^{n+1} x_i = 0\right\}\right. .$$
It has the nice property that for any $x,y \in \mathcal{D}$, if $x \leq y$, then $x=y$.\medskip

\noindent\textbf{Step 2: Embedding into a function $\boldsymbol{\mathcal{D} \to [-\delta,\delta]^{n-1}}$.}
We begin by defining a slightly modified version of $G$. The function $\widehat{G}: \mathbb{R}^n \to [-\delta,\delta]^{n-1}$ is defined as follows
\begin{equation*}
\widehat{G}(x) = G\left((n+1) \cdot T_{\frac{1}{n+1}}(x)\right)
\end{equation*}
where $T_{r} : [-1,1]^n \to [-r,r]^n$ denotes truncation to $[-r,r]$ in every coordinate. It is easy to see that $\widehat{G}$ remains an odd function and that it is Lipschitz-continuous with parameter $(n+1) \cdot 2n^2N\varepsilon \leq 2(n+1)^3N\varepsilon$. Furthermore, it holds that any $x \in \mathbb{R}^n \setminus (-\frac{1}{n+1},\frac{1}{n+1})^n$ with $\|\widehat{G}(x)\|_\infty < \delta$ yields a solution to the $(n-1)D$-\bu instance $G$.

Next, we embed $\widehat{G}$ into $\mathcal{D}$. Let $H: \mathcal{D} \to [-\delta,\delta]^{n-1}$ be defined by $H(x) = H(x',x_{n+1}) = \widehat{G}(x')$. Note that $H$ remains an odd function and is also $2(n+1)^3N\varepsilon$-Lipschitz-continuous. Any $x \in \mathcal{D}$ with $\|H(x)\|_\infty < \delta$ and such that there exists $i \in [n]$ with $|x_i| \geq 1/(n+1)$, yields a solution to $\widehat{G}$ and thus to the original $(n-1)D$-\tucker instance.\medskip

\noindent\textbf{Step 3: Extending to a function $\boldsymbol{[-1,1]^{n+1} \to \mathbb{R}^{n-1}}$.}
In the next step, we extend $H$ to a function $F': [-1,1]^{n+1} \to \mathbb{R}^{n-1}$. For $x \in [-1,1]^{n+1}$, we let $S(x) = \sum_{i=1}^{n+1} x_i \in [-(n+1),n+1]$ and $\Pi(x) = x - \langle x, \mathbb{1}_{n+1} \rangle \cdot \mathbb{1}_{n+1} / (n+1) = x - S(x) \cdot \mathbb{1}_{n+1} /(n+1) \in \mathcal{D}$ denote the orthogonal projection onto $\mathcal{D}$. Here $\langle \cdot, \cdot \rangle$ denotes the scalar product in $\mathbb{R}^{n+1}$, and $\mathbb{1}_{n+1} \in \mathbb{R}^{n+1}$ is the all-ones vector. $F'$ is defined as follows:
\begin{equation*}
F'(x) = \left(1 - \frac{|S(x)|}{n+1}\right) \cdot H(\Pi(x)) + C \cdot \frac{S(x)}{n+1} \cdot \mathbb{1}_{n-1}
\end{equation*}
where $C = 1 + 2(n+1)^4N\varepsilon$. It is easy to check that $F'$ is an odd function by using the fact that $S(-x)=-S(x)$, $\Pi(-x)=-\Pi(x)$ and $H(-x) = -H(x)$.

It is easy to see that $F'$ is continuous. Let us determine an upper bound on its Lipschitz parameter. For any $x,y \in [-1,1]^{n+1}$ we have
\begin{equation*}
\begin{split}
\|F'(x) - F'(y)\|_\infty &\leq \frac{\big||S(x)|-|S(y)|\big|}{n+1} \|H(\Pi(x))\|_\infty + \big(1-|S(y)|/(n+1)\big) \|H(\Pi(x))-H(\Pi(y))\|_\infty\\
&\qquad+ C |S(x)-S(y)|/(n+1)\\
&\leq \|x-y\|_\infty + 2(n+1)^3N\varepsilon \sqrt{n+1} \|x-y\|_\infty + C \|x-y\|_\infty\\
&\leq (2(n+1)^4N\varepsilon+C+1) \|x-y\|_\infty
\end{split}
\end{equation*}
where we used $|S(x)-S(y)| \leq (n+1) \|x-y\|_\infty$ and $\|\Pi(x)-\Pi(y)\|_\infty \leq \|x-y\|_2 \leq \sqrt{n+1} \|x-y\|_\infty$. Thus, $F'$ is Lipschitz-continuous with Lipschitz parameter $2(n+1)^4N\varepsilon+C+1 = 4(n+1)^4N\varepsilon + 2$.

Let us now show that $F'$ is monotone. For this, it is enough to show that $F'(x) \leq F'(y)$ for all $x,y \in [-1,1]^{n+1}$ with $x \leq y$ and $S(x) \geq 0$, $S(y) \geq 0$. Indeed, since $F'$ is odd, this implies that the statement also holds if $S(x) \leq 0$ and $S(y) \leq 0$. Finally, if $S(x) \leq 0$ and $S(y) \geq 0$, then there exists $z$ with $x \leq z \leq y$ and $S(z)=0$, which implies that $F'(x) \leq F'(z) \leq F'(y)$.

Let $x \in [-1,1]^{n+1}$ be such that $S(x) \geq 0$. For any $j \in [n+1]$, $i \in [n-1]$ and $t \geq 0$ with $x_j+t \leq 1$, it holds that
\begin{equation*}
\begin{split}
&\qquad F_i'(x+t \cdot e_j) - F_i'(x)\\
&= -\frac{t}{n+1} H_i(\Pi(x+t\cdot e_j)) + \big(1-S(x)/(n+1)\big) \big(H_i(\Pi(x+t\cdot e_j)) - H_i(\Pi(x))\big) + \frac{tC}{n+1}\\
&\geq -\frac{t}{n+1} - 2(n+1)^3N\varepsilon \|\Pi(x+t\cdot e_j) - \Pi(x)\|_\infty + \frac{tC}{n+1}\\
&\geq (C - 1 - 2(n+1)^4N\varepsilon) \cdot \frac{t}{n+1} \geq 0
\end{split}
\end{equation*}
since $C = 1 + 2(n+1)^4N\varepsilon$.
Here we used the fact that $\|\Pi(x+t\cdot e_j) - \Pi(x)\|_\infty \leq \|\Pi(x+t\cdot e_j) - \Pi(x)\|_2 \leq \|x+t\cdot e_j - x\|_2 = t$.
We obtain that $F'$ is monotone, since for any $x,y$ with $x \leq y$ and $S(x) \geq 0$, $S(y) \geq 0$, we can decompose $y = (\dots((x + (y_1-x_1) \cdot e_1) + (y_2-x_2) \cdot e_2)\dots) + (y_{n+1}-x_{n+1}) \cdot e_{n+1}$.\medskip

\noindent\textbf{Step 4: Extending to a function $\boldsymbol{[-1,1]^{n+1} \to [-1,1]^n}$.}
Define $F'': [-1,1]^{n+1} \to \mathbb{R}$ by $F''(x) = C_n \cdot S(x)/(n+1)$, where $C_n := (4(n+1)^4N\varepsilon + 2)/(\min\{2,1/\varepsilon\}-1)$. Clearly, $F''$ is an odd function, monotone and $C_n$-Lipschitz-continuous. Finally, we define $F: [-1,1]^{n+1} \to [-1,1]^n$ by $F(x) = T_1((F'(x),F''(x)))$. It is easy to check that $F$ remains monotone and odd, because $T_1$, $F'$ and $F''$ are monotone and odd. Furthermore, since $C \geq 1$ and $C_n \geq 1$, we have that $F(1,1,\dots,1) = (1,1,\dots,1)$. Thus, $F$ is a monotone normalised $nD$-\bu function with Lipschitz parameter $\max\{4(n+1)^4N\varepsilon + 2,C_n\} = (4(n+1)^4N\varepsilon + 2)/(\min\{2,1/\varepsilon\}-1)$.

Consider any $x \in \partial([-1,1]^{n+1})$ with $\|F(x)\|_\infty \leq \varepsilon$. Since $|F''(x)| \leq \varepsilon$, it follows that $|S(x)| \leq (n+1) \varepsilon /C_n$. Assume that there exists $i \in [n-1]$ such that $|H_i(\Pi(x))| \geq \delta$. Then, we have
$$|F_i(x)| \geq (1-\varepsilon/C_n) \delta - C \varepsilon / C_n \geq \delta - (C+1) \varepsilon/C_n > \delta - (\min\{2,1/\varepsilon\}-1) \varepsilon \geq \delta -(\delta-\varepsilon) \geq \varepsilon$$
where we used $C_n > (C+1)/(\min\{2,1/\varepsilon\}-1)$. But this would mean that $\|F(x)\|_\infty > \varepsilon$, a contradiction. Thus, it must be that $\|H(\Pi(x))\|_\infty < \delta$.

In order to show that $\Pi(x)$ is a solution to $H$, it remains to prove that there exists $i \in [n]$ such that $|[\Pi(x)]_i| \geq 1/(n+1)$. Since $x \in \partial([-1,1]^{n+1})$, there exists $j \in [n+1]$ such that $|x_j|=1$. As a result, $|[\Pi(x)]_j| = |x_j - S(x)/(n+1)| \geq 1 - \varepsilon/C_n$. If $j < n+1$, then let $i:=j$. Otherwise, if $j=n+1$, then, since $S(\Pi(x)) = 0$, there necessarily exists $i \in [n]$ such that $|[\Pi(x)]_i| \geq 1/n - \varepsilon/(n C_n)$. In both cases we have found $i \in [n]$ such that $|[\Pi(x)]_i| \geq 1/n - \varepsilon/(n C_n)$. Since $C_n \geq (n+1) \varepsilon$, it follows that $|[\Pi(x)]_i| \geq 1/(n+1)$.

Thus, from any $x \in \partial([-1,1]^{n+1})$ with $\|F(x)\|_\infty \leq \varepsilon$, we can obtain a solution to the original $(n-1)D$-\tucker instance. Note that the reduction is polynomial-time and we have only used operations allowed by the gates of the arithmetic circuit. In particular, we have only used division by constants, which can be performed by multiplying by the inverse of that constant.

The reduction is black-box and any query to $F$ can be answered by at most $2n$ queries to the labelling function $\ell$ of the original $(n-1)D$-\tucker instance. Note that this is a constant, since $n$ is constant.

\section{Relations to the Robertson-Webb Query Model}\label{sec:RW}

\noindent The black-box query model that we used in the previous sections is the standard model used in the related literature of query complexity, where the nature of the input functions depends on the specific problems at hand. For example, for $nD$-\bu the function $F$ inputs points on the domain and returns other points, whereas in $nD$-\tucker the function $\ell$ inputs points and outputs their labels.

At the same time, in the literature of the cake-cutting problem, the predominant query model is in fact a more expressive query model, known as the \emph{Robertson-Webb model (RW)} \citep{robertson1998cake,woeginger2007complexity}. The RW model has been defined only for the case of \emph{additive valuations}, and consists of the following two types of queries:
\begin{itemize}
    \item[-] \eval\ queries, where the agent is given an interval $[a,b]$ and she returns her value for that interval, and
    \item[-] \cut queries, where the agent is given a point $x \in [0,1]$ and a real number $\alpha$, and they designate the smallest interval $[x,y]$, for which their value is exactly $\alpha$.
\end{itemize}
In fact, in the literature of \emph{envy-free} cake-cutting, the query complexity in the RW model has been one of the most important open problems \citep{brams1996fair,Procaccia16-survey}, with breakthrough results coming from the literature of computer science fairly recently \citep{aziz2016discrete,aziz2016discreteb}. Since $\e$-\ch\ and $\e$-\emph{fair} cake-cutting \citep{branzei2017query} are conceptually closely related, it would make sense to consider the query complexity of the former problem in the RW model as well.\footnote{The $\e$-\ch\ halving problem has in fact recently been studied under this query model as well, but in a somewhat different direction, and for agents with additive valuations \citep{AlonG21-efficient}. Note that the authors do not refer to their query model as the RW model, but the queries that they use are essentially RW queries.} 

A potential hurdle in this investigation is that the RW model has not been defined for valuation functions beyond the additive case.  
To this end, we propose the following generalisation of the RW model that we call \emph{Generalised Robertson-Webb model (GRW))}, which is appropriate for monotone valuation functions that are not necessarily additive. Intuitively, in the GRW model the agent essentially is given \emph{sets of intervals} $A$ rather than single intervals, and the queries are defined accordingly (see also \cref{fig:cut-eval}).

\begin{mdframed}[backgroundcolor=white!90!gray,
      leftmargin=\dimexpr\leftmargin-20pt\relax,
      innerleftmargin=4pt,
      innertopmargin=0pt,
      skipabove=5pt,skipbelow=5pt]
\begin{definition}[\emph{Generalised Robertson-Webb (GRW) Query Model}]
In the GRW query model, there are two types of queries:
\begin{itemize}
    \item[-] \eval\ queries, where agent $i$ is given any Lebesgue-measurable subset $A$ of $[0,1]$ and she returns her value $v_i(A)$ for that set, and
    \item[-] \cut\ queries, where agent $i$ is given two disjoint Lebesgue-measurable subsets $A_1$ and $A_2$ of $[0,1]$, an interval $I=[a,b]$, disjoint from $A_1$ and $A_2$, and a real number $\gamma \geq 0$, and she designates some $x \in I$ such that $\frac{v_i(A_1 \cup [a,x])}{v_i(A_2 \cup ([x,b])} = \gamma$, if such a point exists. 
\end{itemize}
\end{definition}
\end{mdframed}\vskip 5pt

\begin{figure}
  \begin{center}
      \includegraphics[width=\textwidth]{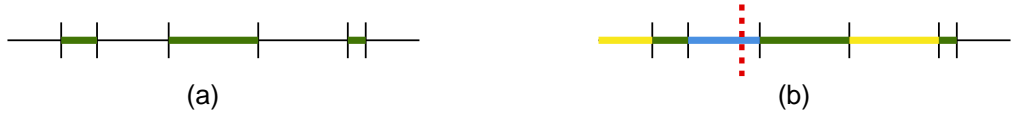}
  \end{center}
    \caption{Visualisation of \cut and \eval queries.  (a) The input $A$ to an \eval query is denoted by the green intervals. (b) The inputs $A_1$ and $A_2$ to the \cut query are denoted by the green and yellow intervals respectively, and the interval $I=[a,b]$ is denoted in blue. The agent places a cut (if possible) at a position $x \in I$ such that her value for $A_1 \cup [a,x]$ and her value for $A_2 \cup [x,b]$ are in a specified proportion.} 
    \label{fig:cut-eval}
\end{figure}

\noindent Let us discuss why this model is the most appropriate generalisation of the RW model. First, the definition of \eval\ queries is in fact the natural extension, as the agent needs to specify her value for sets of intervals; note that in the additive case, it suffices to elicit an agent's value for only single intervals, as her value for unions of intervals is then simply the sum of the elicited values. This is not the case in general for monotone valuations, and therefore we need a more expressive \eval\ query. We also remark that the \eval\ query is exactly the same as a query in the black-box model, as defined in \cref{sec:preliminaries}, and therefore the GRW model is stronger than the black-box query model. \citet{branzei2017query} in fact studied the restriction of the RW model (for the cake-cutting problem and for additive valuations), for which only \eval\ queries are allowed, and they coined this the \emph{$RW^{-}$ query model}. To put our results into context, we offer the following definition of the \emph{$GRW^-$ query model}, which is, as discussed, equivalent to the black-box query model of \cref{sec:preliminaries}.
By the discussion above, all of our query complexity bounds in \cref{sec:general} and \cref{sec:monotone} apply verbatim to the GRW$^-$ query model.
\begin{definition}[\emph{Generalised Robertson-Webb$^-$ (GRW$^-$) query model}]
In the GRW$^-$ query model, only \eval\ queries are allowed; there agent $i$ is given a Lebesgue-measurable subset $A$ of $[0,1]$ and she returns her value $v_i(A)$ for that set.
\end{definition}

While the extension of \eval\ queries from the RW model to the GRW model is relatively straightforward, the generalisation of \cut\ queries is somewhat more intricate. Upon closer inspection of a \cut query in the (standard) RW model for additive valuations, it is clear that one can equivalently define this query as
\begin{quote}
\emph{Given an interval $I=[a,b]$ and a real number $\gamma$, place a cut at $x \in [a,b]$ such that $\frac{v_i([a,x])}{v_i([x,b])} = \gamma$.}
\end{quote}
This is because one can easily find the value of the agent for $[a,b]$ with one \eval query, and then for any value of $\alpha$ used in the standard definition of a cut query, there is an appropriate value of $\gamma$ in the modified definition above, which will result in exactly the same position $x$ of the cut and vice-versa.

The simplicity of the \cut queries in the RW model is enabled by the fact that for additive valuations, the value of any agent $i$ for an interval $I$ does not depend on how the remainder of the interval $[0,1]$ has been cut. This is no longer the case for monotone valuations, as now the agent needs to specify a different value for sets of intervals. We believe that our definition of the \cut query in the GRW model is the appropriate generalisation, which captures the essence of the original cut queries in RW, but also allows for enough expressiveness to make this type of query useful for monotone valuations beyond the additive case. 

Finally, we remark that for general valuations (beyond monotone), any sensible definition of cut queries seems to be too strong, in the sense that it conveys unrealistically too much information (in contrast to the RW and GRW models, where the \cut queries are intuitively ``shortcuts'' for binary search). For example, assume that the agent is asked to place a cut at some point $x$ in an interval $[a,b]$, for which (a) if the cut is placed at $a$ the agent ``sees'' an excess of \lplus and (b) if the cut is placed at $b$, the agent still ``sees'' an excess of \lplus. By the boundary conditions of the interval, there is no guarantee that a cut that ``satisfies'' the agent exists within that interval, and we would need to exhaustively search through the whole interval to find such a cut position, if it exists, meaning that binary search does not help us here. On the other hand, a single \cut query would either find the position or return that there is no such $x$ within the interval. \\

\noindent We are now ready to state our results for the section, which we summarise in the following theorem. Qualitatively, we prove that $\e$-\ch\ with three normalised monotone agents still has an exponential query complexity in the GRW model (with logarithmic savings compared to the black-box model), whereas for two normalised monotone agents, the problem becomes ``easier'' by a logarithmic factor.

\begin{theorem}\label{thm:query-GRW}
In the Generalised Robertson-Webb model:
\begin{itemize}
    \item[-] $\e$-\ch\ with $n \geq 3$ normalised monotone agents requires $\Omega\left(\frac{(L/\e)^{n-2}}{\log(L/\e)}\right)$ queries.
    \item[-] $\e$-\ch\ with $n=2$ monotone agents can be solved with $O(\log(L/\e))$ queries.
\end{itemize}
\end{theorem}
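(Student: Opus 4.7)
The plan is to split the argument into the two cases of \cref{thm:query-GRW}. For the upper bound ($n = 2$), the idea is to replace each inner binary search inside Algorithm~\ref{alg:two-agents-monotone} with a single GRW cut query. For the lower bound ($n \geq 3$), the plan is to show that on the hard instances produced by the reduction chain of \cref{sec:TUtoBU-monotone} and \cref{sec:BUtoCH}, every exact cut query can be simulated by only $O(\log(L/\e))$ \eval\ queries; combined with the $\Omega((L/\e)^{n-2})$ bound in the eval-only (GRW$^-$) model from \cref{thm:CH-query-monotone}, this translates into an $\Omega((L/\e)^{n-2}/\log(L/\e))$ lower bound in the GRW model, since any $T$-query GRW algorithm then yields an $O(T\log(L/\e))$-query GRW$^-$ algorithm on those instances.

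For the $n=2$ upper bound, I would observe that Steps~\ref{alg:step-4} and~\ref{alg:step-y-update} of Algorithm~\ref{alg:two-agents-monotone}, which locate the cut making Agent~1 indifferent, are each precisely a GRW cut query with $\gamma = 1$. For example, Step~\ref{alg:step-4} corresponds to the query with $A_1 = \emptyset$, $A_2 = [0, \lcut]$, $I = [\lcut, 1]$, $\gamma = 1$, whose answer $y$ satisfies $v_1([\lcut, y]) = v_1([0, \lcut] \cup [y, 1])$; the invariants established in the proof of \cref{thm:2-agents-monotone-polytime} guarantee that such a $y$ always exists. Step~\ref{alg:step-y-update} is entirely symmetric. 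Since the outer loop still runs for $O(\log(L/\e))$ iterations, and each iteration now uses only one cut query plus $O(1)$ \eval\ queries to determine Agent~2's preference, the total query complexity drops from $O(\log^2(L/\e))$ down to $O(\log(L/\e))$.

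For the $n \geq 3$ lower bound, the hard $\e$-\ch\ instances we inherit are piecewise polynomial of low degree: in \cref{sec:TUtoBU-monotone} the Kuhn interpolation makes $F$ piecewise linear except for the bilinear gluing term $(1 - |S(x)|/(n+1)) \cdot H(\Pi(x))$, so $F$ is piecewise quadratic overall, and the composition with the affine measure-map $x(\cdot)$ in \cref{sec:BUtoCH} preserves this structure. Given a cut query $(A_1, A_2, I, \gamma)$, as $t$ ranges over $I$ the point $x(A_1 \cup [a, t])$ traces an axis-parallel segment in $[-1,1]^{n+1}$, and the breakpoints of the induced piecewise structure on $t$ are inherited from Kuhn's triangulation; crucially, the $O(L/\e)$ breakpoint positions depend only on the measures of $A_1, A_2$ intersected with the regions $R_k$ and can therefore be computed \emph{without} any queries. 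Monotonicity of the valuations implies that $v_i(A_1 \cup [a, t])$ is nondecreasing and $v_i(A_2 \cup [t, b])$ is nonincreasing in $t$, so the ratio is monotone in $t$. A standard binary search on the precomputed breakpoint list, using one \eval\ query per step, isolates in $O(\log(L/\e))$ queries a piece containing a valid answer; a constant number of additional \eval\ queries determines the coefficients of the (at most quadratic) rational equation on that piece, which is then solved in closed form to return an exact witness.

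The main obstacle I expect is handling the degenerate case where the monotone ratio is only \emph{weakly} monotone, so that several adjacent pieces simultaneously attain the value $\gamma$ (for example when the corresponding Kuhn simplex has all vertices sharing a label, making $H$ locally constant there). This does not break correctness, since the GRW cut query is defined to return \emph{some} point satisfying the ratio condition and any witness inside any such piece suffices, but the binary search must be phrased so that it is guaranteed to land on one of the valid pieces rather than get stuck between two equally-valued halves. A clean way around this is to slightly perturb the hard Borsuk--Ulam instance so that the ratio becomes strictly monotone across every breakpoint while preserving both the Lipschitz constant and the Tucker-based lower bound; verifying that such a perturbation can be made without destroying the correspondence with $(n-1)D$-\tucker\ solutions is the main technical step I would need to carry out carefully.
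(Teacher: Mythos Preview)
Your high-level strategy for both parts matches the paper's: for $n=2$, replace the inner binary searches of Algorithm~\ref{alg:two-agents-monotone} by single \cut\ queries with $\gamma=1$; for $n\geq 3$, show that on the hard family each \cut\ query can be simulated by $O(\log(L/\e))$ \eval\ queries, and combine with the $\Omega((L/\e)^{n-2})$ \eval-only bound. Where you diverge is in the \emph{implementation} of the simulation for $n\geq 3$. You propose to work directly with the instances produced in \cref{sec:TUtoBU-monotone} and exploit their piecewise-polynomial structure. The paper instead inserts an extra step: it takes the monotone normalised $nD$-\bu function $F$, samples it on a grid $K_m^{n+1}$ with $m=\lceil 2nL/\e\rceil$, and re-interpolates via Kuhn's triangulation. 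This yields a new function $\widehat{f}$ that is \emph{piecewise linear}, still monotone, odd, and normalised, and for which the query lower bound is preserved. On $\widehat{f}$, once binary search has localised $t$ to a single Kuhn simplex, $v_i(A_1\cup[a,t])$ and $v_i(A_2\cup[t,b])$ are \emph{affine} in $t$, so the final step is solving a linear (not rational-polynomial) equation. The re-discretisation trick buys simplicity and avoids having to track the exact algebraic complexity of the original construction.

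Your route is workable in principle but has two inaccuracies. First, the degree: the construction in \cref{sec:TUtoBU-monotone} is not piecewise quadratic along an axis-parallel line. The function $G$ built in Step~1 is already bilinear (it is $\tfrac{1+x_n}{2}h(x')-\tfrac{1-x_n}{2}h(-x')$ with $h$ piecewise linear), so $H\circ\Pi$ is piecewise quadratic in $x_j$; multiplying by the additional piecewise-linear factor $(1-|S(x)|/(n+1))$ in Step~3 makes $F'$ piecewise \emph{cubic} along each axis. This does not kill your argument (you just need one more interpolation point per piece), but the claim as stated is wrong, and the breakpoint bookkeeping is more involved than you suggest because of the nested truncations and the projection $\Pi$. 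Second, your worry about weak monotonicity and the proposed perturbation are unnecessary: since $\phi(t)=v_i(A_1\cup[a,t])/v_i(A_2\cup[t,b])$ is nondecreasing, the standard bisection (recurse left if $\phi(\text{mid})>\gamma$, right if $<\gamma$, stop if $=\gamma$) always makes progress and lands on a valid piece without any perturbation.
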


\begin{proof}
The upper bound for $n=2$ can be obtained relatively easily, by observing that in the proof of \cref{thm:CH-monotone-2agent}, Step~\ref{step-w1} and Step~\ref{step-w2} of Algorithm~\ref{alg:two-agents-monotone} were obtained via binary search, using $O(\log (L/\e))$ queries, which resulted in a query complexity of $O(\log^2 (L/\e))$, since these steps were executed $O(\log (L/\e))$ times. In the GRW model, we can simply replace each of those binary searches by a single \cut query (as these only apply to the monotone agents) and obtain a query complexity of $O(\log (L/\e))$. For example, Step~\ref{step-w1} can be simulated by a \cut query where $\gamma = 1$, $A_1= [\frac{x_\ell+x_r}{2},y_\ell]$, $A_2 = [0, \frac{x_\ell+x_r}{2}] \cup [y_r,1]$, and $I = [y_\ell,y_r]$. 

For the lower bound when $n \geq 3$, we will show how to construct an instance of $\e$-\ch\ with $n$ normalised monotone agents, such that the $\Omega((L/\e)^{n-2})$ lower bound for \eval\ queries still holds, but we can additionally answer any \cut\ query by performing at most $O(\log(L/\varepsilon))$ \eval\ queries. We will again use our reduction from normalised monotone $nD$-\bu to $\e$-\ch\ with $n$ normalised monotone agents (\cref{prop:BU2CH}).

Given a \cut\ query $(A_1,A_2,I=[a,b],\gamma)$, we define $\phi: [a,b] \to \mathbb{R}_{\geq 0}$ by $\phi(t) = \frac{v_i(A_1 \cup [a,t])}{v_i(A_2 \cup [t,b])}$. Note that we are looking for $t^* \in I$ such that $\phi(t^*)=\gamma$ and that $\phi$ can be evaluated by using two \eval\ queries. Furthermore, since $v_i$ is monotone, $\phi$ is non-decreasing. We begin by checking that $\frac{v_i(A_1)}{v_i(A_2 \cup [a,b])} \leq \gamma$ and $\frac{v_i(A_1 \cup [a,b])}{v_i(A_2)} \geq \gamma$. If one of these two conditions does not hold, then we can immediately answer that there is no $t \in I$ that satisfies the query. In what follows, we assume that these two conditions hold. In that case, we can query $\phi(t)$ for some $t \in I$ to determine whether the solution $t^*$ lies in $[a,t]$ or in $[t,b]$. We denote by $J \subseteq I$ the current interval for which we know that $t^* \in J$. At the beginning, we have $J:=I$. Using at most $\lceil \log_2(n+1) \rceil + 2$ queries to $\phi$ we can shrink $J$ such that $J \subseteq R_j$ for some $j \in [n+1]$. Recall that $[x(A)]_i = 2 (n+1) \cdot \lambda(A \cap R_i) - 1$ for any $i \in [n+1]$ and $A \in \Lambda([0,1])$. It follows that for any $t \in J$, $[x(A_1 \cup [a,t])]_i$ and $[x(A_2 \cup [t,b])]_i$ are fixed for all $i \in [n+1] \setminus \{j\}$. Furthermore, with an additional $\lceil \log_2 m \rceil + 2$ queries, we can ensure that for all $t \in J$, $[x(A_1 \cup [a,t])]_j \in [k/m,(k+1)/m]$ and $[x(A_2 \cup [t,b])]_j \in [\ell/m,(\ell+1)/m]$ for some $k,\ell \in \mathbb{Z}$. Next, with an additional $2 \lceil \log_2 (n+1) \rceil$ queries, we can shrink $J$, so that for all $t \in J$, $x(A_1 \cup [a,t])$ and $x(A_2 \cup [t,b])$ each lie in some fixed simplex of Kuhn's triangulation of the domain $K_m^{n+1}$ (defined below). In that case, by our construction below, it will hold that $v_i(x(A_1 \cup [a,t]))$ and $v_i(x(A_2 \cup [t,b]))$ can be expressed as an affine function of $t \in J$ and thus we can exactly determine the value of $t^*$. In order for this to hold, we will ensure that our normalised monotone $nD$-\bu function is piecewise linear. Furthermore, we will pick $m=\lceil 2nL/\varepsilon \rceil$, and thus we have used $2(3\lceil \log_2(n+1) \rceil + 2 + \lceil \log_2 (\lceil 2nL/\varepsilon \rceil) \rceil + 2)$ \eval\ queries to answer one \cut\ query. Note that this expression is $O(\log(L/\varepsilon))$, since $n$ is constant.

Consider a normalised monotone $nD$-\bu function $F: [-1,1]^{n+1} \to [-1,1]^n$ with Lipschitz parameter $L \geq 3$ and some $\varepsilon \in (0,1)$. We first discretize the domain to be $K_m^{n+1} := \{-1,-(m-1)/m, \dots, -1/m,0,1/m,2/m, \dots, (m-1)/m,1\}^{n+1}$ where $m=\lceil 2nL/\varepsilon \rceil$. We let $f: K_m^{n+1} \to [-1,1]^n$ be defined by $f(x) = F(x)$. Note that $f$ is antipodally anti-symmetric ($f(-x) = -f(x)$ for all $x \in K_m^{n+1}$), monotone ($f(x) \leq f(y)$ whenever $x \leq y$) and $f(1,1,\dots,1)=(1,1,\dots,1)$. Furthermore, any $x \in \partial(K_m^{n+1})$ with $\|f(x)\|_\infty \leq \varepsilon$ yields a solution to the original instance $F$. We extend $f$ back to a function $\widehat{f}: [-1,1]^{n+1} \to [-1,1]^n$ by using Kuhn's triangulation on the grid $K_m^{n+1}$ and interpolating (see \cref{sec:kuhn} for a description of the triangulation and interpolation). By the arguments presented in \cref{sec:kuhn}, it holds that $\widehat{f}$ is a continuous, monotone, odd function, and $\widehat{f}(1,1,\dots,1) = (1,1,\dots,1)$. Furthermore, if $x_i$ is fixed for all $i \in [n+1] \setminus \{j\}$ and $x_j$ is constrained such that $x$ lies in some fixed simplex $\sigma$ of Kuhn's triangulation, then $\widehat{f}(x)$ can be expressed as a linear affine function of $x_j$.

Let us now determine the Lipschitz parameter of $\widehat{f}$. Consider any simplex $\sigma = \{y^0$, $y^1$, $\dots$, $y^{n+1}\}$ of the Kuhn triangulation of $K_m^{n+1}$. Consider any $x \in [0,1]^{n+1}$ that lies in $\sigma$ and any $j \in [n+1]$ and $t \in [-1/m,1/m]$ such that $x+t\cdot e_j$ also lies in $\sigma$. Then, the interpolation (as defined in \cref{sec:kuhn}) yields
\begin{equation*}
\begin{split}
\|\widehat{f}(x) - \widehat{f}(x+t\cdot e_j)\|_\infty = \|t\cdot m \cdot f(y^j) - t \cdot m \cdot f(y^{j-1})\|_\infty &\leq |t| \cdot m \cdot \|F(y^j) - F(y^{j-1})\|_\infty\\
&\leq |t| \cdot m \cdot L \cdot \|y^j - y^{j-1}\|_\infty\\
&\leq L \cdot |t|.
\end{split}
\end{equation*}
It is easy to check that this implies that $\widehat{f}$ is $(n+1)L$-Lipschitz-continuous on the simplex $\sigma$. By a simple argument, it follows that $\widehat{f}$ is $(n+1)L$-Lipschitz-continuous on $[-1,1]^{n+1}$ (see e.g., the proof of \cref{lem:TUtoBU-general}).

Now consider any simplex $\sigma = \{y^0,y^1,\dots,y^{n+1}\}$ such that there exists $i \in [n+1] \cup \{0\}$ with $\|f(y^i)\|_\infty > \varepsilon$. Since $\widehat{f}(y^i) = f(y^i)$, and $\widehat{f}$ is $(n+1)L$-Lipschitz-continuous, it follows that for any $x \in [-1,1]^{n+1}$ that lies in $\sigma$ we have
$$\|\widehat{f}(x)\|_\infty \geq \|\widehat{f}(y^i)\|_\infty - nL \|x-y^0\|_\infty > \varepsilon - nL/m \geq \varepsilon/2$$
where we used $m \geq 2nL/\varepsilon$. Thus, for any $x \in \partial([-1,1]^{n+1})$ with $\|\widehat{f}(x)\|_\infty \leq \varepsilon/2$, it must hold that both $\|f(y^0)\|_\infty \leq \varepsilon$ and $\|f(y^{n+1})\|_\infty \leq \varepsilon$, where $\sigma = \{y^0,y^1,\dots,y^{n+1}\}$ is the Kuhn simplex containing $x$. However, since $x \in \partial([-1,1]^{n+1})$, it follows that $y^0$ or $y^{n+1}$ lies on the boundary of $K_m^{n+1}$. This means that we have obtained a solution to the original $nD$-\bu function $F$.

Since the parameters for $\widehat{f}$ are $L'=nL$ and $\varepsilon'=\varepsilon/2$, and $n$ is constant, the query lower bound for $F$ carries over to $\widehat{f}$.
\end{proof}

\section{Conclusion and Future Directions}

In this paper, we managed to completely settle the computational complexity of the $\e$-\ch\ problem for a constant number of agents with either general or monotone valuation functions. We also studied the query complexity of the problem and we provided exponential lower bounds corresponding to our hardness results, and polynomial upper bounds corresponding to our polynomial-time algorithms. We also defined an appropriate generalisation of the Robertson-Webb query model for monotone valuations and we showed that our bounds are qualitatively robust to the added expressiveness of this model. The main open problem associated with our work is the following.

\begin{quote}
What is the computational complexity and the query complexity of $\e$-\ch\ with a \emph{constant} number of agents and \emph{additive valuations}?
\end{quote}
Our approach in this paper prescribes a formula for answering this question: One can construct a black-box reduction to this version of $\e$-\ch\ from a computationally-hard problem like $nD$-\tucker, for which we also know query complexity lower bounds, and obtain answers to both questions at the same time. Alternatively, one might be able to construct polynomial-time algorithms for solving this problem; concretely, attempting to do that for three agents with additive valuations would be the natural starting step, as this is the first case for which the problem becomes computationally hard for agents with monotone valuations. It is unclear whether one should expect the problem to remain hard for additive valuations.

Another line of work would be to study the query complexity of the related fair cake-cutting problem using the GRW model that we propose. In fact, while the fundamental existence results for the problem (e.g., see \citep{su1999rental}) actually apply to quite general valuation functions, most of the work in computer science has restricted its attention to the case of additive valuations only, with a few notable exceptions \citep{caragiannis2011towards,deng2012algorithmic}.
We believe that our work can therefore spark some interest in the study of \emph{the query complexity of fair cake-cutting with valuations beyond the additive case}.

\subsubsection*{Acknowledgements}
Alexandros Hollender was supported by an EPSRC doctoral studentship (Reference 1892947).

\bibliographystyle{plainnat}
\bibliography{references-journal}

\begin{thebibliography}{67}
\providecommand{\natexlab}[1]{#1}
\providecommand{\url}[1]{\texttt{#1}}
\expandafter\ifx\csname urlstyle\endcsname\relax
  \providecommand{\doi}[1]{doi: #1}\else
  \providecommand{\doi}{doi: \begingroup \urlstyle{rm}\Url}\fi

\bibitem[Aisenberg et~al.(2020)Aisenberg, Bonet, and Buss]{ABB15-2DTucker}
James Aisenberg, Maria~Luisa Bonet, and Sam Buss.
\newblock 2-{D} {T}ucker is {PPA} complete.
\newblock \emph{Journal of Computer and System Sciences}, 108:\penalty0
  92--103, 2020.
\newblock \doi{10.1016/j.jcss.2019.09.002}.

\bibitem[Alijani et~al.(2017)Alijani, Farhadi, Ghodsi, Seddighin, and
  Tajik]{alijani2017envy}
Reza Alijani, Majid Farhadi, Mohammad Ghodsi, Masoud Seddighin, and Ahmad
  Tajik.
\newblock Envy-free mechanisms with minimum number of cuts.
\newblock In \emph{Proceedings of the 31st AAAI Conference on Artificial
  Intelligence (AAAI)}, pages 312--318, 2017.
\newblock \doi{10.1609/aaai.v31i1.10584}.

\bibitem[Alon(1987)]{Alon87-Necklace}
Noga Alon.
\newblock Splitting necklaces.
\newblock \emph{Advances in Mathematics}, 63\penalty0 (3):\penalty0 247--253,
  1987.
\newblock \doi{10.1016/0001-8708(87)90055-7}.

\bibitem[Alon and Graur(2021)]{AlonG21-efficient}
Noga Alon and Andrei Graur.
\newblock {Efficient Splitting of Necklaces}.
\newblock In \emph{Proceedings of the 48th International Colloquium on
  Automata, Languages, and Programming (ICALP)}, pages 14:1--14:17, 2021.
\newblock \doi{10.4230/LIPIcs.ICALP.2021.14}.

\bibitem[Alon and West(1986)]{Alon1986}
Noga Alon and Douglas~B. West.
\newblock {The Borsuk-Ulam Theorem and Bisection of Necklaces}.
\newblock \emph{Proceedings of the American Mathematical Society}, 98\penalty0
  (4):\penalty0 623--628, 1986.
\newblock \doi{10.2307/2045739}.

\bibitem[Amanatidis et~al.(2018)Amanatidis, Christodoulou, Fearnley, Markakis,
  Psomas, and Vakaliou]{amanatidis2018improved}
Georgios Amanatidis, George Christodoulou, John Fearnley, Evangelos Markakis,
  Christos-Alexandros Psomas, and Eftychia Vakaliou.
\newblock An improved envy-free cake cutting protocol for four agents.
\newblock In \emph{Proceedings of the 11th International Symposium on
  Algorithmic Game Theory (SAGT)}, pages 87--99, 2018.
\newblock \doi{10.1007/978-3-319-99660-8\_9}.

\bibitem[Austin(1982)]{austin1982sharing}
A.~K. Austin.
\newblock Sharing a cake.
\newblock \emph{The Mathematical Gazette}, 66\penalty0 (437):\penalty0
  212--215, 1982.
\newblock \doi{10.2307/3616548}.

\bibitem[Aziz and Mackenzie(2016{\natexlab{a}})]{aziz2016discrete}
Haris Aziz and Simon Mackenzie.
\newblock A discrete and bounded envy-free cake cutting protocol for any number
  of agents.
\newblock In \emph{Proceedings of the 57th Annual IEEE Symposium on Foundations
  of Computer Science (FOCS)}, pages 416--427, 2016{\natexlab{a}}.
\newblock \doi{10.1109/FOCS.2016.52}.

\bibitem[Aziz and Mackenzie(2016{\natexlab{b}})]{aziz2016discreteb}
Haris Aziz and Simon Mackenzie.
\newblock A discrete and bounded envy-free cake cutting protocol for four
  agents.
\newblock In \emph{Proceedings of the 48th Annual ACM Symposium on Theory of
  Computing (STOC)}, pages 454--464, 2016{\natexlab{b}}.
\newblock \doi{10.1145/2897518.2897522}.

\bibitem[Balkanski et~al.(2014)Balkanski, Br{\^a}nzei, Kurokawa, and
  Procaccia]{balkanski2014simultaneous}
Eric Balkanski, Simina Br{\^a}nzei, David Kurokawa, and Ariel~D. Procaccia.
\newblock Simultaneous cake cutting.
\newblock In \emph{Proceedings of the 28th AAAI Conference on Artificial
  Intelligence (AAAI)}, pages 566--572, 2014.
\newblock \doi{10.1609/aaai.v28i1.8802}.

\bibitem[Barman and Rathi(2020)]{BarmanR20cake}
Siddharth Barman and Nidhi Rathi.
\newblock Fair cake division under monotone likelihood ratios.
\newblock In \emph{Proceedings of the 21st ACM Conference on Economics and
  Computation (EC)}, pages 401--437, 2020.
\newblock \doi{10.1145/3391403.3399512}.

\bibitem[Batziou et~al.(2021)Batziou, Hansen, and
  H{\o}gh]{BatziouHH21-consensus-BBU}
Eleni Batziou, Kristoffer~Arnsfelt Hansen, and Kasper H{\o}gh.
\newblock Strong approximate {C}onsensus {H}alving and the {B}orsuk-{U}lam
  theorem.
\newblock In \emph{Proceedings of the 48th International Colloquium on
  Automata, Languages, and Programming (ICALP)}, pages 24:1--24:20, 2021.
\newblock \doi{10.4230/LIPIcs.ICALP.2021.24}.

\bibitem[Beame et~al.(1998)Beame, Cook, Edmonds, Impagliazzo, and
  Pitassi]{Beame1998}
Paul Beame, Stephen Cook, Jeff Edmonds, Russell Impagliazzo, and Toniann
  Pitassi.
\newblock {The Relative Complexity of NP Search Problems}.
\newblock \emph{Journal of Computer and System Sciences}, 57\penalty0
  (1):\penalty0 3--19, 1998.
\newblock \doi{10.1145/225058.225147}.

\bibitem[Bei and Suksompong(2021)]{bei2021dividing}
Xiaohui Bei and Warut Suksompong.
\newblock Dividing a graphical cake.
\newblock In \emph{Proceedings of the 35th AAAI Conference on Artificial
  Intelligence (AAAI)}, pages 5159--5166, 2021.
\newblock URL \url{https://ojs.aaai.org/index.php/AAAI/article/view/16652}.

\bibitem[Bei et~al.(2012)Bei, Chen, Hua, Tao, and Yang]{bei2012optimal}
Xiaohui Bei, Ning Chen, Xia Hua, Biaoshuai Tao, and Endong Yang.
\newblock Optimal proportional cake cutting with connected pieces.
\newblock In \emph{Proceedings of the 26th AAAI Conference on Artificial
  Intelligence (AAAI)}, pages 1263--1269, 2012.
\newblock \doi{10.1609/aaai.v26i1.8243}.

\bibitem[Bei et~al.(2017)Bei, Chen, Huzhang, Tao, and Wu]{bei2017cake}
Xiaohui Bei, Ning Chen, Guangda Huzhang, Biaoshuai Tao, and Jiajun Wu.
\newblock Cake cutting: Envy and truth.
\newblock In \emph{Proceedings of the 26th International Joint Conference on
  Artificial Intelligence (IJCAI)}, pages 3625--3631, 2017.
\newblock \doi{10.24963/ijcai.2017/507}.

\bibitem[Bei et~al.(2022)Bei, Igarashi, Lu, and Suksompong]{bei2021price}
Xiaohui Bei, Ayumi Igarashi, Xinhang Lu, and Warut Suksompong.
\newblock The price of connectivity in fair division.
\newblock \emph{SIAM Journal on Discrete Mathematics}, 36\penalty0
  (2):\penalty0 1156--1186, 2022.
\newblock \doi{10.1137/20M1388310}.

\bibitem[Borsuk(1933)]{Borsuk1933}
Karol Borsuk.
\newblock {Drei S{\"{a}}tze {\"{u}}ber die n-dimensionale euklidische
  Sph{\"{a}}re}.
\newblock \emph{Fundamenta Mathematicae}, 20:\penalty0 177--190, 1933.
\newblock \doi{10.4064/fm-20-1-177-190}.

\bibitem[Bouveret et~al.(2016)Bouveret, Chevaleyre, and
  Maudet]{BouveretCM16-indivisible}
Sylvain Bouveret, Yann Chevaleyre, and Nicolas Maudet.
\newblock Fair allocation of indivisible goods.
\newblock In \emph{Handbook of Computational Social Choice}, pages 284--310.
  Cambridge University Press, 2016.
\newblock \doi{10.1017/CBO9781107446984.013}.

\bibitem[Brams and Taylor(1996)]{brams1996fair}
Steven~J. Brams and Alan~D. Taylor.
\newblock \emph{Fair Division: From cake-cutting to dispute resolution}.
\newblock Cambridge University Press, 1996.
\newblock \doi{10.1017/CBO9780511598975}.

\bibitem[Br{\^a}nzei and Nisan(2017)]{branzei2017query}
Simina Br{\^a}nzei and Noam Nisan.
\newblock The query complexity of cake cutting.
\newblock \emph{arXiv preprint}, 2017.
\newblock URL \url{https://arxiv.org/abs/arXiv:1705.02946}.

\bibitem[Br{\^a}nzei and Nisan(2019)]{branzei2019communication}
Simina Br{\^a}nzei and Noam Nisan.
\newblock Communication complexity of cake cutting.
\newblock In \emph{Proceedings of the 20th ACM Conference on Economics and
  Computation (EC)}, pages 525--525, 2019.
\newblock \doi{10.1145/3328526.3329644}.

\bibitem[Br{\^a}nzei et~al.(2016)Br{\^a}nzei, Caragiannis, Kurokawa, and
  Procaccia]{branzei2016algorithmic}
Simina Br{\^a}nzei, Ioannis Caragiannis, David Kurokawa, and Ariel~D.
  Procaccia.
\newblock An algorithmic framework for strategic fair division.
\newblock In \emph{Proceedings of the 30th AAAI Conference on Artificial
  Intelligence (AAAI)}, pages 411--417, 2016.
\newblock URL
  \url{https://www.aaai.org/ocs/index.php/AAAI/AAAI16/paper/view/12294}.

\bibitem[Caragiannis et~al.(2011)Caragiannis, Lai, and
  Procaccia]{caragiannis2011towards}
Ioannis Caragiannis, John~K. Lai, and Ariel~D. Procaccia.
\newblock Towards more expressive cake cutting.
\newblock In \emph{Proceedings of the 22nd International Joint Conference on
  Artificial Intelligence (IJCAI)}, pages 127--132, 2011.
\newblock \doi{10.5591/978-1-57735-516-8/IJCAI11-033}.

\bibitem[Chen and Deng(2008)]{chen2008matching}
Xi~Chen and Xiaotie Deng.
\newblock Matching algorithmic bounds for finding a {B}rouwer fixed point.
\newblock \emph{Journal of the ACM}, 55\penalty0 (3):\penalty0 13:1--13:26,
  2008.
\newblock \doi{10.1145/1379759.1379761}.

\bibitem[Chen et~al.(2009)Chen, Deng, and Teng]{chen2009settling}
Xi~Chen, Xiaotie Deng, and Shang-Hua Teng.
\newblock Settling the complexity of computing two-player {N}ash equilibria.
\newblock \emph{Journal of the ACM}, 56\penalty0 (3):\penalty0 14:1--14:57,
  2009.
\newblock \doi{10.1145/1516512.1516516}.

\bibitem[Cohler et~al.(2011)Cohler, Lai, Parkes, and
  Procaccia]{cohler2011optimal}
Yuga~J. Cohler, John~K. Lai, David~C. Parkes, and Ariel~D. Procaccia.
\newblock Optimal envy-free cake cutting.
\newblock In \emph{Proceedings of the 25th AAAI Conference on Artificial
  Intelligence (AAAI)}, pages 626--631, 2011.
\newblock URL
  \url{https://www.aaai.org/ocs/index.php/AAAI/AAAI11/paper/viewPaper/3638}.

\bibitem[Daskalakis and Papadimitriou(2011)]{DaskalakisP11-CLS}
Constantinos Daskalakis and Christos Papadimitriou.
\newblock Continuous local search.
\newblock In \emph{Proceedings of the 22nd Annual ACM-SIAM Symposium on
  Discrete Algorithms (SODA)}, pages 790--804. SIAM, 2011.
\newblock \doi{10.1137/1.9781611973082.62}.

\bibitem[Daskalakis et~al.(2009)Daskalakis, Goldberg, and
  Papadimitriou]{Daskalakis2009}
Constantinos Daskalakis, Paul~W. Goldberg, and Christos~H. Papadimitriou.
\newblock The complexity of computing a {N}ash equilibrium.
\newblock \emph{SIAM Journal on Computing}, 39\penalty0 (1):\penalty0 195--259,
  2009.
\newblock \doi{10.1137/070699652}.

\bibitem[Deligkas et~al.(2021)Deligkas, Fearnley, Melissourgos, and
  Spirakis]{deligkas2019computing}
Argyrios Deligkas, John Fearnley, Themistoklis Melissourgos, and Paul~G.
  Spirakis.
\newblock Computing exact solutions of consensus halving and the
  {B}orsuk-{U}lam theorem.
\newblock \emph{Journal of Computer and System Sciences}, 117:\penalty0 75--98,
  2021.
\newblock \doi{10.1016/j.jcss.2020.10.006}.

\bibitem[Deligkas et~al.(2022{\natexlab{a}})Deligkas, Fearnley, Hollender, and
  Melissourgos]{DeligkasFHM2022-CH-constant-eps}
Argyrios Deligkas, John Fearnley, Alexandros Hollender, and Themistoklis
  Melissourgos.
\newblock Constant inapproximability for {PPA}.
\newblock In \emph{Proceedings of the 54th ACM Symposium on Theory of Computing
  (STOC)}, pages 1010--1023, 2022{\natexlab{a}}.
\newblock \doi{10.1145/3519935.3520079}.

\bibitem[Deligkas et~al.(2022{\natexlab{b}})Deligkas, Fearnley, and
  Melissourgos]{DFM-pizza}
Argyrios Deligkas, John Fearnley, and Themistoklis Melissourgos.
\newblock Pizza sharing is {PPA}-hard.
\newblock In \emph{Proceedings of the 36th AAAI Conference on Artificial
  Intelligence (AAAI)}, pages 4957--4965, 2022{\natexlab{b}}.
\newblock \doi{10.1609/aaai.v36i5.20426}.

\bibitem[Deng et~al.(2011)Deng, Qi, Saberi, and Zhang]{deng2011discrete}
Xiaotie Deng, Qi~Qi, Amin Saberi, and Jie Zhang.
\newblock Discrete fixed points: Models, complexities, and applications.
\newblock \emph{Mathematics of Operations Research}, 36\penalty0 (4):\penalty0
  636--652, 2011.
\newblock \doi{10.1287/moor.1110.0511}.

\bibitem[Deng et~al.(2012)Deng, Qi, and Saberi]{deng2012algorithmic}
Xiaotie Deng, Qi~Qi, and Amin Saberi.
\newblock Algorithmic solutions for envy-free cake cutting.
\newblock \emph{Operations Research}, 60\penalty0 (6):\penalty0 1461--1476,
  2012.
\newblock \doi{10.1287/opre.1120.1116}.

\bibitem[Elkind et~al.(2021)Elkind, Segal-Halevi, and
  Suksompong]{elkind2021mind}
Edith Elkind, Erel Segal-Halevi, and Warut Suksompong.
\newblock Mind the gap: Cake cutting with separation.
\newblock In \emph{Proceedings of the 35th AAAI Conference on Artificial
  Intelligence (AAAI)}, pages 5330--5338, 2021.
\newblock URL \url{https://ojs.aaai.org/index.php/AAAI/article/view/16672}.

\bibitem[Etessami and Yannakakis(2010)]{EY10-Nash-FIXP}
Kousha Etessami and Mihalis Yannakakis.
\newblock On the complexity of {N}ash equilibria and other fixed points.
\newblock \emph{SIAM Journal on Computing}, 39\penalty0 (6):\penalty0
  2531--2597, 2010.
\newblock \doi{10.1137/080720826}.

\bibitem[Fearnley et~al.(2021)Fearnley, Goldberg, Hollender, and
  Savani]{FearnleyGHS-gradient}
John Fearnley, Paul~W. Goldberg, Alexandros Hollender, and Rahul Savani.
\newblock The complexity of gradient descent: {CLS} = {PPAD} $\cap$ {PLS}.
\newblock In \emph{Proceedings of the 53rd ACM Symposium on Theory of Computing
  (STOC)}, pages 46--59, 2021.
\newblock \doi{10.1145/3406325.3451052}.

\bibitem[Filos-Ratsikas and Goldberg(2018)]{FRG18-Consensus}
Aris Filos-Ratsikas and Paul~W. Goldberg.
\newblock {Consensus Halving is {PPA}-complete}.
\newblock In \emph{Proceedings of the 50th Annual ACM Symposium on Theory of
  Computing (STOC)}, pages 51--64, 2018.
\newblock \doi{10.1145/3188745.3188880}.

\bibitem[Filos-Ratsikas and Goldberg(2019)]{FRG18-Necklace}
Aris Filos-Ratsikas and Paul~W. Goldberg.
\newblock {The Complexity of Splitting Necklaces and Bisecting Ham Sandwiches}.
\newblock In \emph{Proceedings of the 51st Annual ACM Symposium on Theory of
  Computing (STOC)}, pages 638--649, 2019.
\newblock \doi{10.1145/3313276.3316334}.

\bibitem[Filos-Ratsikas et~al.(2018)Filos-Ratsikas, Frederiksen, Goldberg, and
  Zhang]{filos2018hardness}
Aris Filos-Ratsikas, S{\o}ren Kristoffer~Still Frederiksen, Paul~W. Goldberg,
  and Jie Zhang.
\newblock {Hardness Results for Consensus-Halving}.
\newblock In \emph{Proceedings of the 43rd International Symposium on
  Mathematical Foundations of Computer Science (MFCS)}, pages 24:1--24:16,
  2018.
\newblock \doi{10.4230/LIPIcs.MFCS.2018.24}.

\bibitem[Filos{-}Ratsikas et~al.(2020)Filos{-}Ratsikas, Hollender, Sotiraki,
  and Zampetakis]{FRHSZ2020consensus-easier}
Aris Filos{-}Ratsikas, Alexandros Hollender, Katerina Sotiraki, and Manolis
  Zampetakis.
\newblock {Consensus-Halving: Does it Ever Get Easier?}
\newblock In \emph{Proceedings of the 21st ACM Conference on Economics and
  Computation (EC)}, pages 381--399, 2020.
\newblock \doi{10.1145/3391403.3399527}.

\bibitem[Filos-Ratsikas et~al.(2021)Filos-Ratsikas, Hollender, Sotiraki, and
  Zampetakis]{filos2020topological}
Aris Filos-Ratsikas, Alexandros Hollender, Katerina Sotiraki, and Manolis
  Zampetakis.
\newblock A topological characterization of modulo-p arguments and implications
  for necklace splitting.
\newblock In \emph{Proceedings of the 32nd Annual ACM-SIAM Symposium on
  Discrete Algorithms (SODA)}, pages 2615--2634, 2021.
\newblock \doi{10.1137/1.9781611976465.155}.

\bibitem[Gamow and Stern(1958)]{gamow1958puzzle}
George Gamow and Marvin Stern.
\newblock \emph{Puzzle-math}.
\newblock Viking, 1958.

\bibitem[Goldberg and West(1985)]{Goldberg1985}
Charles~H. Goldberg and Douglas~B. West.
\newblock {Bisection of Circle Colorings}.
\newblock \emph{SIAM Journal on Algebraic Discrete Methods}, 6\penalty0
  (1):\penalty0 93--106, 1985.
\newblock \doi{10.1137/0606010}.

\bibitem[Goldberg et~al.(2020)Goldberg, Hollender, and
  Suksompong]{goldberg2020contiguous}
Paul Goldberg, Alexandros Hollender, and Warut Suksompong.
\newblock Contiguous cake cutting: Hardness results and approximation
  algorithms.
\newblock \emph{Journal of Artificial Intelligence Research}, 69:\penalty0
  109--141, 2020.
\newblock \doi{10.1613/jair.1.12222}.

\bibitem[Goldberg et~al.(2022)Goldberg, Hollender, Igarashi, Manurangsi, and
  Suksompong]{GoldbergHIMS20-consensus-items}
Paul~W. Goldberg, Alexandros Hollender, Ayumi Igarashi, Pasin Manurangsi, and
  Warut Suksompong.
\newblock Consensus halving for sets of items.
\newblock \emph{Mathematics of Operations Research}, 2022.
\newblock \doi{10.1287/moor.2021.1249}.

\bibitem[Grigni(2001)]{Grigni2001}
Michelangelo Grigni.
\newblock {A Sperner lemma complete for PPA}.
\newblock \emph{Information Processing Letters}, 77\penalty0 (5--6):\penalty0
  255--259, 2001.
\newblock \doi{10.1016/S0020-0190(00)00152-6}.

\bibitem[Hobby and Rice(1965)]{hobby1965moment}
Charles~R. Hobby and John~R. Rice.
\newblock {A moment problem in L1 approximation}.
\newblock \emph{Proceedings of the American Mathematical Society}, 16\penalty0
  (4):\penalty0 665--670, 1965.
\newblock \doi{10.2307/2033900}.

\bibitem[Hosseini et~al.(2020)Hosseini, Igarashi, and Searns]{hosseini2020fair}
Hadi Hosseini, Ayumi Igarashi, and Andrew Searns.
\newblock Fair division of time: Multi-layered cake cutting.
\newblock In \emph{Proceedings of the 29th International Joint Conference on
  Artificial Intelligence (IJCAI)}, pages 182--188, 2020.
\newblock \doi{10.24963/ijcai.2020/26}.

\bibitem[Karasev et~al.(2016)Karasev, Rold{\'a}n-Pensado, and
  Sober{\'o}n]{karasev2016measure}
Roman~N. Karasev, Edgardo Rold{\'a}n-Pensado, and Pablo Sober{\'o}n.
\newblock Measure partitions using hyperplanes with fixed directions.
\newblock \emph{Israel Journal of Mathematics}, 212\penalty0 (2):\penalty0
  705--728, 2016.
\newblock \doi{10.1007/s11856-016-1303-z}.

\bibitem[Komargodski et~al.(2019)Komargodski, Naor, and
  Yogev]{komargodski2019white}
Ilan Komargodski, Moni Naor, and Eylon Yogev.
\newblock White-box vs. black-box complexity of search problems: Ramsey and
  graph property testing.
\newblock \emph{Journal of the ACM}, 66\penalty0 (5):\penalty0 1--28, 2019.
\newblock \doi{10.1109/FOCS.2017.63}.

\bibitem[Kurokawa et~al.(2013)Kurokawa, Lai, and Procaccia]{kurokawa2013cut}
David Kurokawa, John Lai, and Ariel Procaccia.
\newblock How to cut a cake before the party ends.
\newblock In \emph{Proceedings of the 27th AAAI Conference on Artificial
  Intelligence (AAAI)}, pages 555--561, 2013.
\newblock \doi{10.1609/aaai.v27i1.8629}.

\bibitem[Matou\v{s}ek(2008)]{Mat03BorsukUlam}
Ji\v{r}{\'i} Matou\v{s}ek.
\newblock \emph{{Using the {B}orsuk-{U}lam theorem: lectures on topological
  methods in combinatorics and geometry}}.
\newblock Springer Science \& Business Media, 2008.
\newblock \doi{10.1007/978-3-540-76649-0}.

\bibitem[Megiddo and Papadimitriou(1991)]{Megiddo1991}
Nimrod Megiddo and Christos~H. Papadimitriou.
\newblock {On total functions, existence theorems and computational
  complexity}.
\newblock \emph{Theoretical Computer Science}, 81\penalty0 (2):\penalty0
  317--324, 1991.
\newblock \doi{10.1016/0304-3975(91)90200-L}.

\bibitem[Meunier(2014)]{Meunier2014simplotopal}
Fr{\'e}d{\'e}ric Meunier.
\newblock Simplotopal maps and necklace splitting.
\newblock \emph{Discrete Mathematics}, 323\penalty0 (28):\penalty0 14--26,
  2014.
\newblock \doi{10.1016/j.disc.2014.01.008}.

\bibitem[Neyman(1946)]{neyman1946theoreme}
Jerzy Neyman.
\newblock Un th{\'e}or{\`e}me d'existence.
\newblock \emph{CR Acad. Sci. Paris}, 222:\penalty0 843--845, 1946.

\bibitem[Papadimitriou(1994)]{Papadimitriou94-TFNP-subclasses}
Christos~H. Papadimitriou.
\newblock On the complexity of the parity argument and other inefficient proofs
  of existence.
\newblock \emph{Journal of Computer and System Sciences}, 48\penalty0
  (3):\penalty0 498--532, 1994.
\newblock \doi{10.1016/S0022-0000(05)80063-7}.

\bibitem[Procaccia(2013)]{procaccia2013cake}
Ariel~D. Procaccia.
\newblock Cake cutting: Not just child's play.
\newblock \emph{Communications of the ACM}, 56\penalty0 (7):\penalty0 78--87,
  2013.
\newblock \doi{10.1145/2483852.2483870}.

\bibitem[Procaccia(2016)]{Procaccia16-survey}
Ariel~D. Procaccia.
\newblock Cake cutting algorithms.
\newblock In \emph{Handbook of Computational Social Choice}, pages 311--330.
  Cambridge University Press, 2016.
\newblock \doi{10.1017/CBO9781107446984.014}.

\bibitem[Robertson and Webb(1995)]{robertson1995approximating}
Jack~M. Robertson and William~A. Webb.
\newblock Approximating fair division with a limited number of cuts.
\newblock \emph{Journal of Combinatorial Theory, Series A}, 72\penalty0
  (2):\penalty0 340--344, 1995.
\newblock \doi{10.1016/0097-3165(95)90073-X}.

\bibitem[Robertson and Webb(1998)]{robertson1998cake}
Jack~M. Robertson and William~A. Webb.
\newblock \emph{Cake-cutting algorithms: Be fair if you can}.
\newblock CRC Press, 1998.
\newblock ISBN 9781568810768.

\bibitem[Segal-Halevi(2022)]{segal2018redividing}
Erel Segal-Halevi.
\newblock Redividing the cake.
\newblock \emph{Autonomous Agents and Multi-Agent Systems}, 36\penalty0 (14),
  2022.
\newblock \doi{10.1007/s10458-022-09545-x}.

\bibitem[Simmons and Su(2003)]{SS03-Consensus}
Forest~W. Simmons and Francis~E. Su.
\newblock Consensus-halving via theorems of {B}orsuk-{U}lam and {T}ucker.
\newblock \emph{Mathematical social sciences}, 45\penalty0 (1):\penalty0
  15--25, 2003.
\newblock \doi{10.1016/S0165-4896(02)00087-2}.

\bibitem[Steinhaus(1949)]{steinhaus1949division}
Hugo Steinhaus.
\newblock Sur la division pragmatique.
\newblock \emph{Econometrica}, 17:\penalty0 315--319, 1949.
\newblock \doi{10.2307/1907319}.

\bibitem[Su(1999)]{su1999rental}
Francis~E. Su.
\newblock Rental harmony: Sperner's lemma in fair division.
\newblock \emph{The American Mathematical Monthly}, 106\penalty0 (10):\penalty0
  930--942, 1999.
\newblock \doi{10.1080/00029890.1999.12005142}.

\bibitem[Tucker(1945)]{tucker1945some}
Albert~W. Tucker.
\newblock Some topological properties of disk and sphere.
\newblock In \emph{Proceedings of the First Canadian Math. Congress, Montreal},
  pages 286--309. University of Toronto Press, 1945.

\bibitem[Woeginger and Sgall(2007)]{woeginger2007complexity}
Gerhard~J. Woeginger and Ji{\v{r}}{\'\i} Sgall.
\newblock On the complexity of cake cutting.
\newblock \emph{Discrete Optimization}, 4\penalty0 (2):\penalty0 213--220,
  2007.
\newblock \doi{10.1016/j.disopt.2006.07.003}.

\end{thebibliography}

\clearpage

\appendix

\section{\texorpdfstring{$\boldsymbol{\e}$}{\e}-\ch\ for piecewise constant valuations} \label{app:piecewiseconstant}

As we mentioned in the introduction, the valuation functions studied in previous work \citep{FRG18-Consensus,FRG18-Necklace} are piecewise constant valuations (i.e., step functions) which are given explicitly as part of the input, with each agent specifying the end-points and the height of each step. For a constant number of agents $n$, this case is solvable in polynomial-time, as illustrated in \cref{fig:taxonomy}.

\begin{lemma}\label{lem:piecewiseconstant}
$\e$-\ch\ with a constant number $n$ of agents and piecewise constant valuations (given explicitly as part of the input) is solvable in polynomial time.
\end{lemma}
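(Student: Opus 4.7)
The plan is to reduce to polynomially many linear programs. First, collect the breakpoints of all $n$ piece-wise constant valuations into a sorted sequence $0 = b_0 < b_1 < \cdots < b_M = 1$, where $M$ is bounded by the total input size. Let $d_{i,j} \geq 0$ denote the (rational) constant density of agent $i$ on $(b_{j-1}, b_j)$, so that for any $a \leq c$ both lying in $[b_{j-1}, b_j]$ we have $v_i([a,c]) = d_{i,j}(c - a)$, an affine function of the endpoints; values over larger ranges are then sums of such affine pieces.

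Next, enumerate the combinatorial type of a candidate solution. A type consists of (i) an ordered placement $1 \leq j_1 \leq j_2 \leq \cdots \leq j_n \leq M$ specifying that the $k$th cut $c_k$ lies in $[b_{j_k-1}, b_{j_k}]$, and (ii) a labelling $\lambda \in \{+,-\}^{n+1}$ of the resulting $n+1$ contiguous pieces. Because $n$ is constant, the number of placements is at most $\binom{M+n-1}{n} = O(M^n)$ and the number of labellings is $2^{n+1} = O(1)$; so the total number of types is polynomial in the input size.

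For each type, set up a linear program in the variables $c_1, \ldots, c_n$ with the constraints $b_{j_k-1} \leq c_k \leq b_{j_k}$ and $c_k \leq c_{k+1}$. Once the type is fixed, each of $v_i(\calI^+)$ and $v_i(\calI^-)$ is a sum of contributions from whole breakpoint intervals (constants) and from partial intervals of the form $[b_{j_k-1}, c_k]$ or $[c_k, b_{j_k}]$ (affine in the $c_k$'s); consequently $v_i(\calI^+) - v_i(\calI^-)$ is an affine function $\Phi_i(c_1,\ldots,c_n)$, and we add the constraints $-\e \leq \Phi_i(c_1,\ldots,c_n) \leq \e$ for each $i \in [n]$. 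Solve this LP in polynomial time; if it is feasible, output the resulting cuts and labels.

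Correctness requires that at least one type yields a feasible LP. This follows from the existence of an exact Consensus-Halving with at most $n$ cuts (the Hobby--Rice theorem, used in the unnumbered existence results cited in the introduction): any exact solution has some combinatorial type, and for that type the LP is feasible (even with $\e = 0$). Since we enumerate polynomially many types and each LP is solvable in polynomial time, the overall algorithm runs in polynomial time. The only minor subtlety is handling coincident cuts or cuts landing exactly on breakpoints, which is resolved by taking closed intervals $[b_{j_k-1}, b_{j_k}]$ and allowing equalities $c_k = c_{k+1}$; I do not expect any genuine obstacle beyond this bookkeeping.
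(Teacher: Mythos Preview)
Your proposal is correct and follows essentially the same approach as the paper: collect all breakpoints, enumerate the polynomially many combinatorial types of cut placements (polynomial because $n$ is constant), and solve a linear program for each. The only cosmetic differences are that the paper first argues one may assume at most one cut per region and then enumerates over $k\le n$ cuts with alternating labels, whereas you allow coincident cuts and enumerate all $2^{n+1}$ labellings; and the paper phrases the LP as minimising the discrepancy $z$ (finding an exact solution), while you check $\e$-feasibility directly.
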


\begin{proof}
The proof follows closely that of Lemma 15 in \citep{FRHSZ2020consensus-easier}; here we provide the main proof idea, and we refer the reader to that paper for a fully detailed proof.

First, we partition the interval $[0,1]$ into \emph{regions}, via a set of points $t_1, \ldots, t_m$ such that the density of the valuation function of each agent is constant within each region. Let $T=[t_j,t_{j+1}]$ denote an arbitrary region, and let $v_i(T)$ denote the constant value of the density of agent $i$ in region $T$ (i.e., the height of the step). Since the valuations are provided explicitly (or equivalently, are polynomial in the other input parameters), this process will result in a polynomial number of such regions. 

We will be interested in the regions that will contain cuts, and then we will find the appropriate positions of those cuts using linear programming. First, it is not hard to see that a solution in which some region contains more than $1$ cut can be transformed into a solution in which every region contains at most $1$ cut, by appropriately ``merging'' and ``shifting'' sets of cuts within the region. Therefore, we can assume that each region either contains a cut or it doesn't. For any $k=1,\ldots,n$, we can consider all the possible ways of distributing the $k$ cuts to the regions, such that no region receives more than one cut; since $n$ is a constant, this can be done in polynomial time. 

Let $x_T$ be the position of the cut in region $T=[t_i,t_{i+1}]$. This means that for agent $i$, the ``left sub-region'' $[t_i, x_T]$ receives one label (say \lplus), whereas the ``right sub-region'' $[x_T, t_{i+1}]$ receives the other label (say \lminus), resulting in values $v_i(T)\cdot[t_i, x_T]$ and $v_i(T)\cdot[x_T, t_{i+1}]$ for the agent respectively. If there is not cut in region $T$, then the whole interval receives the same label. Now, we can consider the set of cuts at positions $x_T$ in each of the regions that contain cuts, and the corresponding partitioning of $[0,1]$ into intervals, labelled \lplus and \lminus; without loss of generality we can assume that this labelling is alternating. From there, we can devise a linear program for each value of $k$, where we aim to minimise $z$, subject to $|v_i(\calI^{+}) - v_i(\calI^{-})| \leq z$ (which can be transformed into a set of linear constraints), plus additional linear constraints for the positions of the cuts. Since a solution always exists, and since we consider all values of $k \in [1,n]$, one of the linear programs will terminate with $z=0$, and therefore we can find an \emph{exact} solution to the problem. 
\end{proof}

\section{Formal definition of the input model}\label{app:preliminaries}

In \cref{sec:preliminaries}, we mentioned that in the white-box model, the agents' valuations $v_i$ are accessed via \emph{polynomial-time algorithms}, which are given explicitly as part of the input. We provide the precise definition of the input model in this section. Formally, we assume that valuations are computed by \emph{Turing machines}. The input to the Turing machine is a list of intervals, and the Turing machine outputs the value of the union of these intervals.

\begin{mdframed}[backgroundcolor=white!90!gray,
      leftmargin=\dimexpr\leftmargin-20pt\relax,
      innerleftmargin=4pt,
      innertopmargin=0pt,
      skipabove=5pt,skipbelow=5pt]
 \begin{definition}[\textup{$\e$-\ch} \textit{(white-box model), formal definition}]
 For any constant $n \geq 1$, and polynomial $p$, the problem \emph{$\e$-\ch\ with $n$ agents} is defined as follows:
 \begin{itemize}
     \item \textbf{Input:} $\e >0$, the Lipschitz parameter $L$, Turing machines $v_1, \dots, v_n$
     \item \textbf{Output:} Any of the following:
     \begin{itemize}
         \item A partition of $[0,1]$ into two sets of intervals $\calI^{+}$ and $\calI^{-}$ such that for each agent $i$, it holds that $|v_i(\calI^{+}) - v_i(\calI^{-})| \leq \e$, using at most $n$ cuts.
         \item A \emph{violation} of $L$-Lipschitz-continuity for one of the valuations.
         \item An input $X$ on which one of the Turing machines does not terminate in at most $p(|X|)$ steps.
         \item (Optional, for monotone valuations only). A \emph{violation} of monotonicity for one of the valuations.
     \end{itemize}
 \end{itemize}
 \end{definition}
 \end{mdframed}\vskip 5pt

Note that for all solution types except the first one, a solution can be a union of an arbitrary number of intervals, i.e., not necessarily obtained using at most $n$ cuts. The violation solutions are only there to ensure that the problem is contained in TFNP. They are irrelevant for our hardness results, which also hold for the promise version of the problem where we are promised that the input does not violate any of the conditions.

\section{The Kuhn Triangulation}\label{sec:kuhn}

Let $D_m := \{0,1/m,2/m, \dots, m/m\}$. Kuhn's triangulation is a standard way to triangulate a grid $D_m^n$. 
Every $x \in (D_m \setminus \{1\})^n$ is the base of the cube containing all vertices $\{y : y_i \in \{x_i,x_i + 1/m\}\}$. Every such cube is subdivided into $n!$ $n$-dimensional simplices as follows: for every permutation $\pi$ of $[n]$, $\sigma = \{y^0,y^1, \dots, y^n\}$ is a simplex, where $y^0 = x$ and $y^i = y^{i-1} + \frac{1}{m}e_{\pi(i)}$ for all $i \in [n]$ (where $e_i$ is the $i$th unit vector).

It is easy to see that Kuhn's triangulation has the following properties:
\begin{itemize}
    \item For any simplex $\sigma = \{z^1, \dots, z^k\}$ it holds that $\|z^i-z^j\|_\infty \leq 1/m$ for all $i,j$, and there exists a permutation $\pi$ of $[k]$ such that $z^{\pi(1)} \leq \dots \leq z^{\pi(k)}$ (component-wise).
    \item Given a point $x \in [0,1]^n$, we can efficiently determine the simplex that contains it as follows. First find the base $y$ of a cube of $D_m^n$ that contains $x$. Next, find a permutation $\pi$ such that $x_{\pi(1)}-y_{\pi(1)} \geq \dots \geq x_{\pi(n)}-y_{\pi(n)}$. Then, it follows that $(y,\pi)$ is the simplex containing $x$.
    \item The triangulation is antipodally anti-symmetric: if $\sigma = \{y^0,y^1, \dots, y^n\}$ is a Kuhn simplex of $D_m^n$, then $\overline{\sigma} = \{\overline{y^0}, \dots, \overline{y^n}\}$ is also a simplex, where $\overline{y^i}_j = 1 - y^i_j$ for all $i,j$.
\end{itemize}
Using Kuhn's triangulation, a function $f: D_m^n \to [-M,M]$ can be extended to a Lipschitz-continuous function $\widehat{f}: [0,1]^n \to [-M,M]$. $\widehat{f}$ is constructed in each Kuhn simplex $\sigma = \{y^0,y^1, \dots, y^n\}$ by interpolating over the values $\{f(y^0),f(y^1), \dots, f(y^n)\}$. In more detail, for any $x \in [0,1]^n$ that lies in simplex $\sigma = \{y^0,y^1, \dots, y^n\}$, we let $z := m \cdot (x-y^0)$. Let $\pi$ denote the permutation used to obtain $\sigma$. Note that since $x$ lies in $\sigma$, it must be that $z_{\pi(1)} \geq z_{\pi(2)} \geq \dots \geq z_{\pi(n)}$. Then, it holds that $x = \sum_{i=0}^n \alpha_i y^i$, where $\alpha_0 = 1 - z_{\pi(1)}$, $\alpha_n = z_{\pi(n)}$, and $\alpha_i = z_{\pi(i)} - z_{\pi(i+1)}$ for $i \in [n-1]$. We define
$$\widehat{f}(x) := \sum_{i=0}^n \alpha_i f(y^i).$$

It is easy to check that the function $\widehat{f}: [0,1]^n \to [-M,M]$ thus obtained is continuous. Indeed, if $x$ lies on a common face of two Kuhn simplices, then the value $\widehat{f}(x)$ obtained by interpolating in either simplex is the same. It can be shown that $\widehat{f}$ is Lipschitz-continuous with Lipschitz parameter $2M \cdot n \cdot m$ with respect to the $\ell_\infty$-norm. Furthermore, if $f$ is antipodally anti-symmetric, i.e., $f(\overline{x}) = - f(x)$ for all $x \in D_m^n$, then so is $\widehat{f}$, i.e., $\widehat{f}(\overline{x}) = - \widehat{f}(x)$ for all $x \in [0,1]^n$.

Finally, if $f$ is monotone, i.e., $f(x) \leq f(y)$ for all $x,y \in D_m^n$ with $x \leq y$, then so is $\widehat{f}$, i.e., $\widehat{f}(x) \leq \widehat{f}(y)$ for all $x,y \in [0,1]^n$ with $x \leq y$. Consider any point $x \in [0,1]^n$ that lies in some simplex $\sigma = \{y^0,y^1, \dots, y^n\}$. Then for any $j \in [n]$ and $t \geq 0$ such that $x+t\cdot e_j$ lies in the simplex $\sigma$, we have
$$\widehat{f}(x+t\cdot e_j) - \widehat{f}(x) = tm f(y^j) - tm f(y^{j-1}) \geq 0$$
since $y^j \geq y^{j-1}$ and $f$ is monotone. Using this it is easy to show that $\widehat{f}$ is monotone within any simplex $\sigma$, since for any $x \leq y$ in $\sigma$ we can construct a path that goes from $x$ to $y$ (and lies in $\sigma$) that only uses the positive cardinal directions. Since monotonicity holds for any segment of the path, it also holds for $x$ and $y$. Finally, for any $x \leq y$ that lie in different simplices, we can just use the straight path that goes from $x$ to $y$, and the fact that $\widehat{f}$ is monotone in each simplex that we traverse.

\end{document}